\newcommand{\gf}{{\mathbb{F}}}
\newtheorem{theorem}{Theorem}
\newtheorem{lemma}{Lemma}
\newtheorem{proposition}{Proposition}
\newtheorem{corollary}{Corollary}
\newtheorem{definition}{Definition}
\newtheorem{example}{Example}
\newtheorem{remark}{Remark}
\begin{document}
 	\title{A note on the differential spectrum of the Ness-Helleseth function
	}
	\author{Ketong Ren\IEEEauthorrefmark{1}, Maosheng Xiong\IEEEauthorrefmark{2} and Haode Yan \IEEEauthorrefmark{1}\\
	\IEEEauthorblockA{\IEEEauthorrefmark{1}School of Mathematics, Southwest Jiaotong University, Chengdu, China.\\ E-mail: \href{mailto: rkt@my.swjtu.edu.cn}{rkt}@my.swjtu.edu.cn, \href{mailto: hdyan@swjtu.edu.cn}{hdyan}@swjtu.edu.cn}\\
	\IEEEauthorblockA{\IEEEauthorrefmark{2}Department of Mathematics, The Hong Kong University of Science and Technology, Hong Kong, China.\\ E-mail: mamsxiong@ust.hk}\\
	\emph{Corresponding author: Haode Yan}\\
	}
	\maketitle
	\begin{abstract}
		Let $n\geqslant3$ be an odd integer and $u$ an element in the finite field $\gf_{3^n}$. The Ness-Helleseth function is the binomial $f_u(x)=ux^{d_1}+x^{d_2}$ over $\gf_{3^n}$, where $d_1=\frac{3^n-1}{2}-1$ and $d_2=3^n-2$. In 2007, Ness and Helleseth showed that $f_u$ is an APN function when $\chi(u+1)=\chi(u-1)=\chi(u)$, is differentially $3$-uniform when $\chi(u+1)=\chi(u-1)\neq\chi(u)$, and has differential uniformity at most 4 if $ \chi(u+1)\neq\chi(u-1)$ and $u\notin\gf_3$. Here $\chi(\cdot)$ denotes the quadratic character on $\gf_{3^n}$. Recently, Xia et al. determined the differential uniformity of $f_u$ for all $u$ and computed the differential spectrum of $f_u$ for $u$ satisfying $\chi(u+1)=\chi(u-1)$ or $u\in\gf_3$. The remaining problem is the differential spectrum of $f_u$ with $\chi(u+1)\neq\chi(u-1)$ and $u\notin\gf_3$. In this paper, we fill in the gap. By studying differential equations arising from the Ness-Helleseth function $f_u$ more carefully, we express the differential spectrum of $f_u$ for such $u$ in terms of two quadratic character sums. This complements the previous work of Xia et al. 
	\end{abstract}
	
	{\bf Keywords:} cryptographic function; differential uniformity; differential spectrum; character sum
	
	{\bf Mathematics Subject Classification:} 11T06, 94A60.
	
\section{Introduction}
Substitution boxes (S-boxes for short) are crucial in symmetric block ciphers. Cryptographic functions used in S-boxes can be considered as functions defined over finite fields. Let $\gf_{q}$ be the finite field with $q$ elements, where $q$ is a prime power (i.e. $q=p^n$ and $n$ is a positive integer). We denote by $\gf_{q}^*:=\gf_q \setminus \{0\}$ the multiplicative cyclic subgroup of $\gf_{q}$. Any function $F: \gf_{q} \rightarrow \gf_{q}$ can be uniquely represented as a univariate polynomial of degree less than $q$. For a cryptographic function $F$, the main tools to study $F$ regarding the differential attack \cite{BS91} are the difference distribution table (DDT for short) and the differential uniformity introduced by Nyberg \cite{N94} in 1994. The DDT entry at point $(a,b)$ for any $a,b\in \gf_q$, denoted by $\delta_F(a,b)$, is defined as
\[\delta_F(a,b)=\left|\{x\in\gf_q|\mathbb{D}_aF(x)=b\}\right|,\]
where $\mathbb{D}_a F(x)=F(x+a)-F(x)$ is the \textit{derivative function} of $F$ at the element $a$. The differential uniformity of $F$, denoted by $\Delta_F$, is defined as
\[\Delta_F=\mathrm{max}\left\{\delta_F(a,b)|a\in\gf_q^*, b\in\gf_q\right\}.\]
Generally speaking, the smaller the value of $\Delta_F$, the stronger the resistance of $F$ used in S-boxes against the differential attack. A cryptographic function $F$ is called differentially $k$-uniform if $\Delta_F=k$. Particularly when $\Delta_F=1$, $F$ is called a planar function \cite{DO68} or a perfect nonlinear (abbreviated as PN) function \cite{N91}. When $\Delta_F=2$, $F$ is called an almost perfect nonlinear (abbreviated as APN) function \cite{N94}, which is of the lowest possible differential uniformity over $\gf_{2^n}$ as in such finite fields, no PN functions exist. It has been of great research interest to find new functions with low differential uniformity. Readers may refer to \cite{BCCe20}, \cite{DMMe03}, \cite{HRS99}, \cite{HS97}, \cite{ORG}, \cite{PTW16}, \cite{XCX16}, \cite{ZHS14}, \cite{ZHSS14}, \cite{ZW11} and references therein for some of the new development.

To investigate further differential properties of nonlinear functions, the concept of differential spectrum was devised as a refinement of differential uniformity \cite{BCC10}. 
\begin{definition}\label{def:ds}
Let $F$ be a function from $\gf_{p^n}$ to $\gf_{p^n}$ with differential uniformity $k$, and
$$ \omega_i = |\{(a,b) \in \gf_{p^n}^* \times \gf_{p^n} | \delta_F(a,b) = i\}|, 0 \leqslant i \leqslant k.$$
The differential spectrum of $F$ is defined as the ordered sequence
 $$\mathbb{S} =\left[\omega_0,\omega_1,...,\omega_{k}\right].$$
\end{definition}
According to the Definition \ref{def:ds}, we have the following two identities
\begin{equation}
\sum\limits_{i=0}^{k}\omega_i=(p^n-1)p^n,\sum\limits_{i=0}^{k}i\omega_i=(p^n-1)p^n. \label{eqt:omega}
\end{equation}
The differential spectrum of a cryptographic function, compared with the differential uniformity, provides much more detailed information. In particular, the value distribution of the DDT is given directly by the differential spectrum. Differential spectrum has many applications such as in sequences \cite{BCC05}, \cite{DHKM01}, coding theory \cite{CCZ98}, \cite{CP19}, combinatorial design \cite{TDX20} etc. However, to determine the differential spectrum of a cryptographic function is usually a difficult problem. There are two variables $a$ and $b$ to consider in each $\omega_i$. When $F$ is a power function, i.e., $F(x)=x^d$ for some positive integer $d$, since $\delta_F(a,b)=\delta_F(1,\frac{b}{a^d})$, the problem of the value distribution of $\{\delta_F(a,b)|b\in\gf_q\}$ is the same as that of  $\{\delta_F(1,b)|b\in\gf_q\}$, so in this case two variables $a$ and $b$ degenerate into one variable $b$ and the problem becomes much easier. Power functions with known differential spectra are summarized in Table \ref{tab:DS}.

\begin{table}[!t]
	\footnotesize
	\centering
	\caption{Power Functions over $\gf_{p^n}$ with Known Differential Spectra}
	\begin{tabular}{|c|c|c|c|c|}
		\hline
		$p$ & $d$ & Condition & $\Delta_F$ & Ref \\
		\hline
		
		$2$ & $2^t+1$ & gcd$(t,n)=s$ & $2^s$ & \cite{BCC10} \\
		\hline

		$2$ & $2^{2t}-2^t+1$ & gcd$(t,n)=s,\frac{n}{s}\,odd$ & $2^s$ & \cite{BCC10} \\
		\hline

		$2$ & $2^n-2$ & $n\geqslant2$ & $2\,or\,4$ & \cite{BCC10} \\
		\hline

		$2$ & $2^{2k}+2^k+1$ & $n=4k$ & $4$ & \cite{BCC10},\cite{XY17} \\
		\hline

		$2$ & $2^t-1$ & $t=3,n-2$ & $6$ or $8$ & \cite{BCC11} \\
		\hline

		$2$ & $2^t-1$ & $t=\frac{n-1}{2}$, $\frac{n+3}{2}$, $n$ odd & $6$ or $8$ & \cite{BP14} \\
		\hline

		$2$ & $2^m+2^{(m+1)/2}+1$ & $n=2m$, $m\geqslant5$ odd & $8$ & \cite{XYY18} \\
		\hline

		$2$ & $2^{m+1}+3$ & $n=2m$, $m\geqslant5$ odd & $8$ & \cite{XYY18} \\
		\hline

		$2$ & $2^{3k}+2^{2k}+2^k-1$ & $n=4k$ & $2^{2k}$ & \cite{TuLe23} \\
		\hline
		
		$2$ & $\frac{2^m-1}{2^k+1}+1$ & $n=2m$, gcd$(k,m)=1$ & $2^m$ & \cite{XML23} \\
		\hline

		$3$ & $2\cdot3^{(n-1)/2}+1$ & $n$ odd & $4$ & \cite{DHKM01} \\
		\hline

		$3$ & $\frac{3^n-1}{2}+2$ & $n$ odd & $4$ & \cite{JLLQ22} \\
		\hline

		$5$ & $\frac{5^n-3}{2}$ & any $n$ & $4$ or $5$ & \cite{YL21} \\
		\hline

		$5$ & $\frac{5^n+3}{2}$ & any $n$ & $3$ & \cite{PLZ23} \\
		\hline

		$p$ odd & $p^{2k}-p^k+1$ & gcd$(n,k)=e$, $\frac{n}{e} odd$ & $p^e+1$ & \cite{YZWe19}, \cite{LRF21} \\
		\hline

		$p$ odd & $\frac{p^k+1}{2}$ & gcd$(n,k)=e$ & $\frac{p^e-1}{2}$ or $p^e+1$ & \cite{CHNe13} \\
		\hline

		$p$ odd & $\frac{p^n+1}{p^m+1}+\frac{p^n-1}{2}$ & $p\equiv3\,(mod\;4)$, $m|n$, $n$ odd & $\frac{p^m+1}{2}$ & \cite{CHNe13} \\
		\hline

		$p$ odd & $p^n-3$ & any $n$ & $\leqslant5$ & \cite{XZLH18}, \cite{YXe22} \\
		\hline

		$p$ odd & $p^m+2$ & $n=2m$ & $2$ or $4$ & \cite{HRS99}, \cite{MXLH22} \\
		\hline

		$p$ odd & $2p^{\frac{n}{2}}-1$ & $n$ even & $p^{\frac{n}{2}}$  & \cite{YL22} \\
		\hline

		$p$ odd & $\frac{p^n-3}{2}$ & $p^n\equiv3\,(mod\;4)$, $p^n\geqslant7$ and $p^n\neq27$ & $2$ or $3$ & \cite{YMT24} \\
		\hline

		$p$ odd & $\frac{p^n+3}{2}$ & $p\geqslant5$, $p^n\equiv1\,(mod\;4)$ & $3$ & \cite{JLLQ21} \\
		\hline

		$p$ odd & $\frac{p^n+3}{2}$ & $p^n=11$ or $p^n\equiv3\,(mod\;4)$, $p\neq3$, $p^n\neq11$ & $2$ or $4$ & \cite{YMT23} \\
		\hline

		$p$ odd & $\frac{p^n+1}{4}$ & $p\neq3$, $p^n>7$, $p^n\equiv7\,(mod\;8)$& $2$ & \cite{TY23}, \cite{HRS99} \\
		\hline

		$p$ odd & $\frac{3p^n-1}{4}$ & $p\neq3$, $p^n>7$, $p^n\equiv3\,(mod\;8)$& $2$ & \cite{TY23}, \cite{HRS99} \\
		\hline

		$p$ odd & $\frac{p^n+1}{4}$, $\frac{3p^n-1}{4}$ & $p=3$ or $p>3$, $p^n\equiv3\,(mod\;4)$& $4$ & \cite{BXe23} \\
		\hline
		
		any $p$ & $k(p^m-1)$ & $n=2m$, gcd$(k,p^m+1)=1$ & $p^m-2$ & \cite{HLXe23} \\
		\hline
	\end{tabular}
	\label{tab:DS}
\end{table}

For a polynomial that is not a power function, the investigation of its differential spectrum is much more difficult. There are very few such functions whose differential spectra were known \cite{LgJ24}, \cite{PdSW17}. The main focus of this paper is the Ness-Helleseth function. Let $n$ be a positive odd integer, $d_1=\frac{3^n-1}{2}-1$, $d_2=3^n-2$ and $u\in\gf_{3^n}$. The \textit{Ness-Helleseth function}, denoted as $f_u(x)$, is a binomial over $\gf_{3^n}$ defined as
\begin{equation}
	f_u(x)=ux^{d_1}+x^{d_2}. \label{NHf}
\end{equation}
To describe the differential properties of the Ness-Helleseth function $f_u(x)$ which obviously depend on $u$, we define certain sets of $u$ as in \cite{FurXia} 
$$
	\left\{
            \begin{array}{lcl}
			\mathcal{U}_0=\left\{u\in\gf_{3^n} | \chi(u+1)\neq\chi(u-1)\right\}, \\
            \mathcal{U}_1=\left\{u\in\gf_{3^n} | \chi(u+1)=\chi(u-1)\right\}, \\
			\mathcal{U}_{10}=\left\{u\in\gf_{3^n} | \chi(u+1)=\chi(u-1)\neq\chi(u)\right\}, \\
			\mathcal{U}_{11}=\left\{u\in\gf_{3^n} | \chi(u+1)=\chi(u-1)=\chi(u)\right\}.
            \end{array}
	\right.
$$
Here $\chi$ denotes the quadratic character on $\gf_{3^n}^*$. It is easy to see that $\mathcal{U}_0\cap\mathcal{U}_1=\emptyset$, $\mathcal{U}_{10}\cap\mathcal{U}_{11}=\emptyset$ and $\mathcal{U}_{10}\cup\mathcal{U}_{11}=\mathcal{U}_1$.

In 2007, Ness and Helleseth showed that (see \cite{ORG})
\begin{itemize}
\item[1).] $f_u$ is an APN function when $u\in\mathcal{U}_{11}$; 

\item[2).] $f_u$ is differentially $3$-uniform when $u\in \mathcal{U}_{10}$;  

\item[3).] $f_u$ has differential uniformity at most 4 if $u\in\mathcal{U}_0\setminus \gf_3$. 
\end{itemize}
Moreover, Ness and Helleseth observed by numerical computation that in 1), the constraint imposed on $u$, namely $u\in \mathcal{U}_{11}$, appears to be necessary for $f_u$ to be an APN function. 

In a recent paper \cite{FurXia}, Xia et al. conducted a further investigation into the differential properties of the Ness-Helleseth function $f_u$. They determined the differential uniformity of $f_u$ for all $u \in \gf_{3^n}$ (see \cite[Theorem 4]{FurXia}), hence confirming, in particular, that $f_u$ is indeed APN if and only if $u\in\mathcal{U}_{11}$. Moreover, for the cases of 1) and 2), they also computed the differential spectrum of $f_u$ explicitly in terms of a quadratic character sum $T(u)$ (see \cite[Propositions 5 and 6]{FurXia}). However, for $u\in\mathcal{U}_0\setminus \gf_3$, while it was shown that $f_u$ has differential uniformity 4, the differential spectrum of $f_u$ remains open. The purpose of this paper is to fill in this gap, that is, in this paper, we will compute the differential spectrum of $f_u$ explicitly for any $u\in\mathcal{U}_0\setminus \gf_3$ and similar to \cite[Propositions 5 and 6]{FurXia}, the result will be expressed in terms of quadratic character sums depending on $u$. 

Let us make a comparison of the methods used in this paper and in \cite{FurXia}. We first remark that for $u\in\mathcal{U}_0\setminus \gf_3$, to determine the differential uniformity of $f_u$ is already a quite difficult problem, as was shown in \cite{FurXia}, the final result involved 32 different quadratic character sums, about one-half of which can not be evaluated easily (see \cite[Table V]{FurXia}). Instead, the authors applied Weil's bound on many of these character sums over finite fields to conclude that the differential uniformity of $f_u$ is 4. While our paper is based on \cite{FurXia} and can be considered as a refinement, since we are dealing with the differential spectrum which is a much more difficult problem, it is conceivable that the techniques involved in this paper would be even more complicated. This is indeed the case, as will be seen in the proofs later on. In particular, we have found many relations among these 32 character sums, some of which are quite technical and surprising, that help up in computing the differential spectrum.


This paper is organized as follows. Section \ref{sec:CS} presents certain quadratic character sums that are essential for the computation of the differential spectrum. In Section \ref{sec:DSpre}, the necessary and sufficient conditions of the differential equation to have $i~ (i=0,1,2,3,4)$ solutions are given. The differential spectrum of $f_u$ is investigated in Section \ref{sec:DS}. Section \ref{sec:con} concludes this paper.

\section{On quadratic character sums}\label{sec:CS}
In this section, we will introduce some results on the quadratic character sum over finite fields. Let $\chi(\cdot)$ be the quadratic character of $\gf_{p^n}$ ($p$ is an odd prime), which is defined as
$$\chi(x)=
\left\{
\begin{array}{cl}
	1, & \mbox{if } x \mbox{ is a square in } \gf_{p^n}^*,\\
	-1,& \mbox{if } x \mbox{ is a nonsquare in }\gf_{p^n}^*, \\
	0, & \mbox{if } x=0.
\end{array}
\right.
$$
Let $\gf_{p^n}[x]$ be the polynomial ring over $\gf_{p^n}$. We consider the character sum of the form
\begin{eqnarray} \label{2:x1} \sum_{a\in\gf_{p^n}}\chi(f(a))\end{eqnarray}
with $f\in\gf_{p^n}[x]$. The case of $\deg(f) =1$ is trivial, and for $\deg(f) =2$, the following explicit formula was established in \cite{FF}.
\begin{lemma}\cite[Theorem 5.48]{FF}
	Let $f(x)=a_2x^2+a_1x+a_0\in\gf_q[x]$ with $q$ odd and $a_2\neq 0$. Put $d=a_1^2-4a_0a_2$ and let $\chi(\cdot)$ be the quadratic character of $\gf_q$. Then
	$$\sum\limits_{a\in\gf_q}\chi(f(a))=
	\left\{
	\begin{array}{lcl}
		-\chi(a_2), & if\;d\neq 0,\\
		(q-1)\chi(a_2), & if\;d=0. \\
	\end{array}
	\right.
	$$
\end{lemma}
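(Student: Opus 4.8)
The plan is to reduce the general quadratic polynomial to a pure square by completing the square, and then to evaluate the resulting sum by a counting argument. Since $q$ is odd and $a_2 \neq 0$, the element $2a_2$ is invertible, so I would write
$$f(x) = a_2\left(x + \frac{a_1}{2a_2}\right)^2 - \frac{d}{4a_2}, \qquad d = a_1^2 - 4a_0a_2.$$
The map $x \mapsto y = x + \frac{a_1}{2a_2}$ is a bijection of $\gf_q$, so the substitution leaves the sum unchanged. Using that $\chi$ is completely multiplicative (with the convention $\chi(0)=0$), I would factor out $\chi(a_2)$ to obtain
$$\sum_{a\in\gf_q}\chi(f(a)) = \chi(a_2)\sum_{y\in\gf_q}\chi\!\left(y^2 - c\right), \qquad c = \frac{d}{4a_2^2}.$$
Since $4a_2^2 \neq 0$, we have $c = 0$ precisely when $d = 0$, which splits the argument into the two stated cases.

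The case $d = 0$ is immediate: here $c = 0$, and $\chi(y^2) = 1$ for every $y \neq 0$ while $\chi(0) = 0$, so the inner sum equals $q-1$ and the total is $(q-1)\chi(a_2)$, as claimed.

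The substantive case is $d \neq 0$, i.e. $c \neq 0$, and this is where the real work lies. The key idea is to pass from summing over $y$ to summing over $t = y^2$, using that the number of $y \in \gf_q$ with $y^2 = t$ equals $1 + \chi(t)$ for every $t$. This gives
$$\sum_{y}\chi(y^2 - c) = \sum_{t}(1+\chi(t))\chi(t-c) = \sum_{t}\chi(t-c) + \sum_{t}\chi\!\left(t(t-c)\right).$$
The first sum vanishes, being a complete character sum $\sum_{s}\chi(s) = 0$. For the second sum, I would factor $\chi(t(t-c)) = \chi(t^2)\chi(1 - c/t) = \chi(1 - c/t)$ for $t \neq 0$, and observe that $1 - c/t$ runs over $\gf_q \setminus \{1\}$ as $t$ runs over $\gf_q^*$, so that $\sum_{t}\chi(t(t-c)) = \sum_{s \neq 1}\chi(s) = -\chi(1) = -1$. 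Hence $\sum_y \chi(y^2 - c) = -1$ and the total sum is $-\chi(a_2)$.

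I expect the only genuine obstacle to be the clean evaluation of the auxiliary sum $\sum_t \chi(t(t-c))$ in the $d \neq 0$ case; everything else is the bookkeeping of completing the square and isolating $\chi(a_2)$. The reindexing $t \mapsto 1 - c/t$ on $\gf_q^*$ is what makes this sum collapse to $-1$, and keeping track of the single omitted value $s = 1$ (together with the $t = 0$ term, which contributes $0$) is the one place where care is genuinely needed.
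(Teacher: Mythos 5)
Your proof is correct. The paper does not prove this lemma at all --- it is quoted verbatim from Lidl--Niederreiter \cite[Theorem 5.48]{FF} --- and your argument (complete the square, factor out $\chi(a_2)$, convert $\sum_y\chi(y^2-c)$ into $\sum_t(1+\chi(t))\chi(t-c)$, and evaluate $\sum_t\chi(t(t-c))=-1$ for $c\neq0$ via the reindexing $t\mapsto 1-c/t$ on $\gf_q^*$) is exactly the standard textbook derivation, with all the edge cases ($t=0$, the omitted value $s=1$, and the $d=0$ branch) handled correctly.
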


The character sum plays an important role in determining the differential spectrum of the Ness-Helleseth function. Let $p=3$ and $n$ be an odd integer. For any fixed $u \in\mathcal{U}_0\setminus \gf_3$, define $g_i\in\gf_{3^n}[x]$ ($i\in\{1,2,3,4,5\}$) as follows: 
$$
\left\{
\begin{array}{ll}
	g_1(x)=-(u+1)x, \\
	g_2(x)=x(x-1-u), \\
	g_3(x)=x(x-1+u), \\
	g_4(x)=x^2-x+u^2=(x+1+\sqrt{1-u^2})(x+1-\sqrt{1-u^2}), \\
	 g_5(x)=-\varphi(u)(x+\frac{u^2}{\varphi(u)})=-\varphi(u)(x+1-\sqrt{1-u^2}),\;where\;\varphi(u)=1+\sqrt{1-u^2}.\\
\end{array}
\right.
$$
Herein and hereafter, for a square $x \in \gf_{3^n}^*$, we denote by $\sqrt{x}$ the square root of $x$ in $\gf_{3^n}$ such that $\chi(\sqrt{x})=1$. Since $n$ is odd, $\chi(-1)=-1$, this $\sqrt{x}$ is uniquely determined by $x$. For $u \in \mathcal{U}_0\setminus \gf_3$, the element $1-u^2$ is always a square in $\gf_{3^n}$, so $\sqrt{1-u^2}$ is well defined, and we have $\chi((u+1)\varphi(u))=\chi(-(u+1+\sqrt{1-u^2})^2)=-1$. Additionally, let $A=\{0,1\pm u,-1\pm \sqrt{1-u^2}\}$. It is easy to note that the set $A$ contains all the zeros of $g_i(x)$, $i=1,2,3,4,5$. The values of $\chi(g_i(x))$ on $A$ are displayed in the Table \ref{tab:vA}.
	\begin{table}[!htp]
		\footnotesize
		\centering
		\caption{The values of $\chi(g_i(x))$ on set A}
		\begin{tabular}{|c|c|c|c|c|c|}
			\hline
			$x$ & $\chi(g_1(x))$ & $\chi(g_2(x))$ & $\chi(g_3(x))$ & $\chi(g_4(x))$ & $\chi(g_5(x))$ \\
			\hline
			0 & 0 & 0  & 0 & 1 & $-1$ \\
			\hline
			$1+u$ & $-1$ & 0 & $-\chi(u^2+u)$ & $\chi(u-u^2)$ & $-\chi((u+1)(\sqrt{1-u^2})+(u-1)^2)$\\
			\hline
			$1-u$ & $-1$ & $\chi(u-u^2)$ & 0 & $-\chi(u^2+u)$ & $-\chi((1-u)(\sqrt{1-u^2})+(u+1)^2)$ \\
			\hline
			$-1+\sqrt{1-u^2}$ & $-1$ & $-\chi(u)\chi(-1+u+\sqrt{1-u^2})$ & $\chi(u)\chi(-1-u+\sqrt{1-u^2})$  & 0 & 0 \\
			\hline
			$-1-\sqrt{1-u^2}$ & $-1$ & $\chi(u)\chi(1-u+\sqrt{1-u^2})$ & $-\chi(u)\chi(1+u+\sqrt{1-u^2})$ & 0 & $\chi(u^2-1-\sqrt{1-u^2})$ \\
			\hline
		\end{tabular}
		\label{tab:vA}
	\end{table}
	
For any $u \in \mathcal{U}_0\backslash\gf_3$, the following character sums were meticulously computed in \cite{FurXia}:	
\begin{lemma}\cite{FurXia}
Let $u \in \mathcal{U}_0\backslash\gf_3$, we have 
\begin{itemize}
	\item $\sum\limits_{z\in\gf_{3^n}}\chi(g_1(z)g_2(z))=-1,$ 
	\item $\sum\limits_{z\in\gf_{3^n}}\chi(g_1(z)g_3(z))=-1,$ 
	\item $\sum\limits_{z\in\gf_{3^n}}\chi(g_1(z)g_5(z))=1,$ 
	\item $\sum\limits_{z\in\gf_{3^n}}\chi(g_2(z)g_3(z))=-2.$
	\item $\sum\limits_{z\in\gf_{3^n}}\chi(g_1(z)g_2(z)g_5(z))=2, $ 
	\item $\sum\limits_{z\in\gf_{3^n}}\chi(g_1(z)g_3(z)g_5(z))=2.$
\end{itemize}

\end{lemma}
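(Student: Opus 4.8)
The plan is to prove each of the six character sum identities by reducing the product of polynomials to a form where Lemma~1 (the Fil-Lidl formula for quadratic sums) applies, then carefully accounting for the contributions at the zeros in $A$. The basic strategy is uniform across all six items: write the product $g_i(z)g_j(z)$ (or the triple product) and observe that after clearing constant factors $c$ whose character $\chi(c)$ can be pulled out of the sum, what remains is a polynomial whose character sum we can compute either directly via Lemma~1 (when the polynomial factors as a constant times a perfect square, or as an irreducible/separable quadratic) or by a change of variables that linearizes the summand. Since $\sum_{z}\chi(z^2 h(z)) = \sum_{z\neq 0}\chi(h(z)) + [\text{term at }z=0]$ and $\chi(z^2)=1$ for $z\neq 0$, products containing the square factor from $g_2g_3$ or from $g_4$ simplify considerably.

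First I would handle the two-factor sums $\sum_z \chi(g_1 g_2)$, $\sum_z \chi(g_1 g_3)$, and $\sum_z \chi(g_1 g_5)$. Here $g_1(z)=-(u+1)z$ is linear, so each product is $-(u+1)z$ times a linear or quadratic polynomial, giving an overall quadratic or cubic. For the quadratic cases $g_1 g_2$ and $g_1 g_3$, I would expand, pull out $\chi(-(u+1))$, and apply Lemma~1, checking whether the discriminant $d=a_1^2-4a_0a_2$ vanishes; the stated answer $-1$ signals the generic non-degenerate case where Lemma~1 returns $-\chi(a_2)$, and I would verify $\chi(a_2)$ using that $\chi((u+1)\varphi(u))=-1$ and the other relations recorded just before Table~\ref{tab:vA}. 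For $g_1 g_5$, the factor $g_5(z)=-\varphi(u)(z+1-\sqrt{1-u^2})$ is again linear, so $g_1 g_5$ is a quadratic with leading coefficient $(u+1)\varphi(u)$, whose character is $-1$; this directly yields $-\chi(a_2)=+1$ matching the claim.

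Next I would treat $\sum_z \chi(g_2 g_3)$. Since $g_2(z)=z(z-1-u)$ and $g_3(z)=z(z-1+u)$, the product is $z^2(z-1-u)(z-1+u)=z^2\bigl((z-1)^2-u^2\bigr)$. The factor $z^2$ contributes $\chi(z^2)=1$ for all $z\neq 0$, so the sum equals $\sum_{z\neq 0}\chi\bigl((z-1)^2-u^2\bigr)$, and I would add and subtract the $z=0$ term. The remaining sum $\sum_{z}\chi\bigl((z-1)^2-u^2\bigr)$ is a genuine quadratic-in-$z$ character sum with discriminant proportional to $u^2\neq 0$ (as $u\notin\gf_3$, so $u\neq 0$), hence Lemma~1 gives $-\chi(1)=-1$; subtracting the $z=0$ value $\chi(1-u^2)=1$ then yields $-1-1=-2$, as stated.

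The main obstacle will be the two triple-product sums $\sum_z \chi(g_1 g_2 g_5)$ and $\sum_z \chi(g_1 g_3 g_5)$, since these are degree-four character sums to which Lemma~1 does not apply directly. The plan there is to exploit the specific linear factors: $g_1$ and $g_5$ are both linear, and $g_5(z)=-\varphi(u)(z+1-\sqrt{1-u^2})$ shares the root structure coming from $g_4$. I would look for a substitution $z\mapsto$ (a Möbius or affine transformation) that collapses the quartic into a quadratic times a square, or alternatively relate the quartic sum to the already-computed quadratic sums via the identity $g_2(z)g_5(z)$ or $g_3(z)g_5(z)$ combining into a recognizable square after using $g_4(z)=(z+1+\sqrt{1-u^2})(z+1-\sqrt{1-u^2})$ and $\varphi(u)=1+\sqrt{1-u^2}$. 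Concretely, I expect that the product $g_2 g_5$ (resp.\ $g_3 g_5$) can be rewritten so that the quartic $\chi(g_1 g_2 g_5)$ splits, via $\chi$ being multiplicative and the perfect-square factor dropping out, into a quadratic character sum evaluable by Lemma~1 plus boundary corrections at the points of $A$. Verifying that these boundary corrections, read off from Table~\ref{tab:vA}, combine to give exactly $+2$ in both cases is the delicate step; I would organize it by first establishing the algebraic factorization identity, then invoking Lemma~1 on the reduced quadratic, and finally summing the character values at the zeros listed in $A$ using the entries of Table~\ref{tab:vA} together with the normalization $\chi(\sqrt{x})=1$ and $\chi(-1)=-1$.
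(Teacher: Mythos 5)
First, a point of reference: the paper itself gives no proof of this lemma --- it is quoted directly from \cite{FurXia} --- so your attempt can only be judged against what a correct argument requires. Your overall strategy (multiplicativity of $\chi$, discarding perfect-square factors, reducing to linear or quadratic character sums, invoking Lemma~1, and correcting at the excluded points) is the right one, and your treatments of $\sum_z\chi(g_1(z)g_5(z))$ and $\sum_z\chi(g_2(z)g_3(z))$ are correct and essentially complete.

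There are, however, two concrete gaps. First, you classify $g_1g_2$ and $g_1g_3$ as ``the quadratic cases'' and plan to apply Lemma~1 to them; but $g_2$ and $g_3$ are already quadratic, so $g_1(z)g_2(z)=-(u+1)z^2(z-1-u)$ and $g_1(z)g_3(z)=-(u+1)z^2(z-1+u)$ are cubics, to which Lemma~1 does not apply. The correct route is exactly the $z^2$-reduction you announce in your opening paragraph but then fail to use here: for $z\neq 0$ the summand is $\chi(-(u+1)(z-1\mp u))$, the complete linear character sum vanishes, and the value $-1$ is the \emph{negative of the character at the excluded point} $z=0$, namely $-\chi((u+1)^2)=-1$ and $-\chi(1-u^2)=-1$ respectively; it does not arise as $-\chi(a_2)$ from Lemma~1, so the mechanism you cite for the answer $-1$ is wrong even though the answer is right. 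Second, for the two quartic sums you leave the decisive step open and go hunting for a M\"obius substitution, when none is needed: $g_1g_2g_5=(u+1)\varphi(u)\,z^2(z-1-u)(z+1-\sqrt{1-u^2})$ already contains the factor $z^2$ explicitly (one $z$ from $g_1$, one from $g_2$), so the sum equals $\chi((u+1)\varphi(u))\bigl[\sum_{z}\chi((z-1-u)(z+1-\sqrt{1-u^2}))-\chi((-1-u)(1-\sqrt{1-u^2}))\bigr]$. Lemma~1 gives $-1$ for the quadratic (its roots are distinct since $u\neq 0$), and using $1-\sqrt{1-u^2}=u^2/\varphi(u)$ together with $\chi((u+1)\varphi(u))=-1$ the subtracted boundary term equals $1$, whence the total is $(-1)(-1-1)=2$; the sum with $g_3$ in place of $g_2$ is identical using $\chi((u-1)\varphi(u))=1$. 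As written, your proposal does not actually establish the two values $+2$, which are the only genuinely nontrivial items in the lemma.
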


In what follows, we give a series of lemmas on quadratic character sums involving $g_i(x)$. The first three lemmas can be proved directly.
\begin{lemma}When $u \in \mathcal{U}_0\backslash\gf_3$, we have
	\[ \sum_{z\in\gf_{3^n}}\chi(g_4(z)g_5(z))=-\chi(\varphi(u)).\]
\end{lemma}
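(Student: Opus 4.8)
The plan is to exploit the explicit factorizations of $g_4$ and $g_5$ recorded in the definitions, so that their product picks up a repeated linear factor and the character sum collapses to an elementary one. First I would combine $g_4(z)=(z+1+\sqrt{1-u^2})(z+1-\sqrt{1-u^2})$ with $g_5(z)=-\varphi(u)(z+1-\sqrt{1-u^2})$ to get
\[
g_4(z)g_5(z) = -\varphi(u)\,\bigl(z+1+\sqrt{1-u^2}\bigr)\bigl(z+1-\sqrt{1-u^2}\bigr)^2 .
\]
The key observation is that the factor $\bigl(z+1-\sqrt{1-u^2}\bigr)^2$ is a perfect square, so its quadratic character equals $1$ for every $z\neq -1+\sqrt{1-u^2}$ and equals $0$ at $z=-1+\sqrt{1-u^2}$. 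Consequently the single value $z=-1+\sqrt{1-u^2}$ contributes $0$ to the sum (both $g_4$ and $g_5$ vanish there), while for every other $z$ one has $\chi(g_4(z)g_5(z))=\chi(-\varphi(u))\,\chi\bigl(z+1+\sqrt{1-u^2}\bigr)$. Note that the other root $z=-1-\sqrt{1-u^2}$ of $g_4$ is automatically handled, since $\chi(z+1+\sqrt{1-u^2})=\chi(0)=0$ there.

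Next I would make the substitution $w=z+1+\sqrt{1-u^2}$. As $z$ ranges over $\gf_{3^n}\setminus\{-1+\sqrt{1-u^2}\}$, the variable $w$ ranges over $\gf_{3^n}\setminus\{-\sqrt{1-u^2}\}$, using that $2=-1$ in characteristic $3$. Hence
\[
\sum_{z\in\gf_{3^n}}\chi(g_4(z)g_5(z)) = \chi(-\varphi(u))\sum_{w\neq -\sqrt{1-u^2}}\chi(w) = \chi(-\varphi(u))\Bigl(\sum_{w\in\gf_{3^n}}\chi(w)-\chi(-\sqrt{1-u^2})\Bigr),
\]
and since the full sum of the quadratic character over $\gf_{3^n}$ vanishes, the bracket reduces to $-\chi(-\sqrt{1-u^2})$.

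Finally I would evaluate this remaining sign using the conventions fixed in the paper. Write $\chi(-\sqrt{1-u^2})=\chi(-1)\,\chi(\sqrt{1-u^2})$; by the chosen normalization $\chi(\sqrt{1-u^2})=1$, and since $n$ is odd we have $\chi(-1)=-1$, so $\chi(-\sqrt{1-u^2})=-1$ and the bracket equals $1$. Combining everything gives $\sum_{z}\chi(g_4(z)g_5(z))=\chi(-\varphi(u))=\chi(-1)\chi(\varphi(u))=-\chi(\varphi(u))$, as claimed. There is essentially no obstacle here beyond spotting the squared factor; that single reduction drives the whole argument, and notably I do not need the degree-two character-sum lemma, only the vanishing of $\sum_{w}\chi(w)$ together with the sign conventions for $\sqrt{\cdot}$ and $\chi(-1)$.
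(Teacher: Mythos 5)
Your proof is correct and follows essentially the same route as the paper: factor out $-\varphi(u)$, drop the squared factor $(z+1-\sqrt{1-u^2})^2$ away from its root, and reduce to the sum of $\chi(z+1+\sqrt{1-u^2})$ over $z\neq -1+\sqrt{1-u^2}$, which equals $\chi(\sqrt{1-u^2})=1$ by the vanishing of the full character sum and the sign conventions. Your bookkeeping of the excluded point and of $\chi(-1)=-1$ matches the paper's computation exactly.
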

\begin{proof}We have,
	\begin{align*}
		\sum_{z\in\gf_{3^n}}\chi(g_4(z)g_5(z))&=-\chi(\varphi(u))\sum_{z\in\gf_{3^n}} \chi((z+1-\sqrt{1-u^2})^2(z+1+\sqrt{1-u^2})) \\
		&=-\chi(\varphi(u))\sum_{z\in\gf_{3^n},z\neq \sqrt{1-u^2}-1}\chi(z+1+\sqrt{1-u^2}) \\
		&=-\chi(\varphi(u))\chi(\sqrt{1-u^2})
		=-\chi(\varphi(u)).
	\end{align*}
\end{proof}
\begin{lemma}When $u \in \mathcal{U}_0\backslash\gf_3$, we have
	\[ \sum_{z\in\gf_{3^n}}\chi(g_1(z)g_4(z)g_5(z))=1+\chi(\varphi(u)).\]
\end{lemma}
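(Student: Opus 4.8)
The plan is to mimic the computation already used for $\sum_z\chi(g_4(z)g_5(z))$: expand the product $g_1(z)g_4(z)g_5(z)$, pull out the perfect-square factor so that $\chi$ of it collapses to $1$ away from its single zero, and thereby reduce the sum to one quadratic character sum evaluable by \cite[Theorem 5.48]{FF}.

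First I would write $g_4(z)=(z+1+\sqrt{1-u^2})(z+1-\sqrt{1-u^2})$ and $g_5(z)=-\varphi(u)(z+1-\sqrt{1-u^2})$, so that
\[
g_1(z)g_4(z)g_5(z)=(u+1)\varphi(u)\,z\,(z+1+\sqrt{1-u^2})(z+1-\sqrt{1-u^2})^2 .
\]
Since $\chi$ is completely multiplicative, $\chi((u+1)\varphi(u))=-1$, and $\chi\!\left((z+1-\sqrt{1-u^2})^2\right)=1$ for every $z\neq \sqrt{1-u^2}-1$ (while the whole product vanishes at $z=\sqrt{1-u^2}-1$), the sum reduces to
\[
\sum_{z\in\gf_{3^n}}\chi(g_1(z)g_4(z)g_5(z))=-\sum_{z\neq \sqrt{1-u^2}-1}\chi\!\left(z(z+1+\sqrt{1-u^2})\right).
\]
Next I would evaluate the complete sum $\sum_{z\in\gf_{3^n}}\chi(z(z+1+\sqrt{1-u^2}))$: its argument is the quadratic $z^2+\varphi(u)z$, whose discriminant is $\varphi(u)^2\neq 0$ (indeed $\varphi(u)=1+\sqrt{1-u^2}\neq 0$ because $\chi((u+1)\varphi(u))=-1$), so \cite[Theorem 5.48]{FF} returns $-\chi(1)=-1$.

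The delicate step is correcting for the one excluded term at $z_0=\sqrt{1-u^2}-1$. There $z_0(z_0+1+\sqrt{1-u^2})=(\sqrt{1-u^2}-1)(2\sqrt{1-u^2})$, and I would evaluate its character via the conventions $\chi(-1)=-1$ and $\chi(\sqrt{1-u^2})=1$, together with the identity $\sqrt{1-u^2}-1=-u^2/\varphi(u)$, which follows from $(1+\sqrt{1-u^2})(1-\sqrt{1-u^2})=u^2$. This yields $\chi(z_0(z_0+1+\sqrt{1-u^2}))=\chi(\varphi(u))$. Hence $\sum_{z\neq z_0}\chi(z(z+1+\sqrt{1-u^2}))=-1-\chi(\varphi(u))$, and therefore $\sum_z\chi(g_1 g_4 g_5)=1+\chi(\varphi(u))$, as claimed.

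The main obstacle is precisely this last step: getting the sign of the isolated excluded term right, which hinges on carefully tracking the square-root normalization $\chi(\sqrt{\cdot})=1$, the fact that $\chi(-1)=-1$ for odd $n$, and the relation $\sqrt{1-u^2}-1=-u^2/\varphi(u)$. Everything else is the routine quadratic-sum bookkeeping already carried out in the preceding lemma.
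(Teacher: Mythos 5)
Your proof is correct and follows essentially the same route as the paper's: factor out the perfect square $(z+1-\sqrt{1-u^2})^2$, use $\chi((u+1)\varphi(u))=-1$ to reduce to $-\sum_{z\neq\sqrt{1-u^2}-1}\chi(z(z+1+\sqrt{1-u^2}))$, evaluate the complete quadratic sum as $-1$, and correct for the excluded term. Your simplification of the excluded term via $\sqrt{1-u^2}-1=-u^2/\varphi(u)$ is just a cosmetic variant of the paper's use of $\chi(\sqrt{1-u^2}-1)\chi(\varphi(u))=\chi(-u^2)=-1$, and both give $1+\chi(\varphi(u))$.
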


\begin{proof} We have,
	\begin{align*}
		\sum_{z\in\gf_{3^n}}\chi(g_1(z)g_4(z)g_5(z))&=\chi(u+1)\chi(\varphi(u))\sum_{z\in\gf_{3^n}} \chi( z(z+1+\sqrt{1-u^2})(z+1-\sqrt{1-u^2})^2) \\
		&=-\sum_{z\in\gf_{3^n},z\neq \sqrt{1-u^2}-1} \chi( z(z+1+\sqrt{1-u^2})) \\
		&=-(-1-\chi((\sqrt{1-u^2}-1)(-\sqrt{1-u^2})))\\
		&=1-\chi(\sqrt{1-u^2}-1)
		=1+\chi(\varphi(u)).
	\end{align*}
\end{proof}

\begin{lemma}When $u \in \mathcal{U}_0\backslash\gf_3$, we have
	\[\sum_{z\in\gf_{3^n}}\chi(g_1(z)g_3(z)g_4(z)g_5(z))=2-\chi((\sqrt{1-u^2}+1+u)),\]
	and
	\[\sum_{z\in\gf_{3^n}}\chi(g_1(z)g_2(z)g_4(z)g_5(z))=2-\chi((\sqrt{1-u^2}+1-u)).\]
\end{lemma}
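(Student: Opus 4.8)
The statement I want to prove concerns two quartic character sums,
$$\sum_{z\in\gf_{3^n}}\chi(g_1(z)g_3(z)g_4(z)g_5(z))=2-\chi(\sqrt{1-u^2}+1+u),$$
and the analogous one with $g_2$ in place of $g_3$. The key structural observation is that $g_4$ and $g_5$ share the linear factor $(z+1-\sqrt{1-u^2})$: indeed $g_4(z)=(z+1+\sqrt{1-u^2})(z+1-\sqrt{1-u^2})$ and $g_5(z)=-\varphi(u)(z+1-\sqrt{1-u^2})$. Hence their product carries a \emph{perfect square} factor $(z+1-\sqrt{1-u^2})^2$, which is invisible to $\chi$ away from its root. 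This is exactly the device used in the two preceding lemmas, and I would exploit it the same way here.

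First I would factor out the square and the constant. Writing $\alpha=\sqrt{1-u^2}$, the product $g_4(z)g_5(z)=-\varphi(u)(z+1-\alpha)^2(z+1+\alpha)$, so for all $z\neq \alpha-1$ one has $\chi(g_4(z)g_5(z))=\chi(-\varphi(u)(z+1+\alpha))$. Pulling the constant $-\varphi(u)$ together with $g_1(z)=-(u+1)z$ and recalling the paper's identity $\chi((u+1)\varphi(u))=-1$, the sum collapses to
$$-\sum_{\substack{z\in\gf_{3^n}\\ z\neq \alpha-1}}\chi\bigl(z\,g_3(z)\,(z+1+\alpha)\bigr),$$
a single cubic character sum with one point deleted. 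This is the same reduction that turned the quartic sums in the previous two lemmas into cubic or quadratic sums, so I expect it to go through mechanically.

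The remaining work is to evaluate the deleted-point cubic sum $S=\sum_{z}\chi(z\,g_3(z)(z+1+\alpha))$ over all of $\gf_{3^n}$ and then add back the single omitted term at $z=\alpha-1$. Here $g_3(z)=z(z-1+u)$, so the cubic actually factors as $z^2(z-1+u)(z+1+\alpha)$, again exposing a square $z^2$; away from $z=0$ this reduces $S$ to a \emph{quadratic} character sum $\sum_{z\neq 0}\chi((z-1+u)(z+1+\alpha))$. I would apply the quadratic evaluation formula (the cited \cite[Theorem 5.48]{FF}), whose value is $-\chi(a_2)=-1$ unless the discriminant vanishes, and correct for the excluded point $z=0$ using the entries of Table \ref{tab:vA}. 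Combining the main term with the add-back term at $z=\alpha-1$ should produce the constant $2$ together with the single character correction $-\chi(\alpha+1+u)$. The $g_2$ case is entirely symmetric, with $g_2(z)=z(z-1-u)$ replacing $g_3$, yielding $-\chi(\alpha+1-u)$.

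The main obstacle, as I see it, is not the reduction itself but the careful bookkeeping of the deleted points and the sign of the square-root convention. Two things must be tracked precisely: whether the discriminant of the quadratic factor vanishes (which would switch the formula's output from $-\chi(a_2)$ to $(q-1)\chi(a_2)$ and drastically change the answer), and the exact contribution of the excluded evaluation point $z=\alpha-1$, whose value $\chi(g_1g_3)(\alpha-1)$ is read from the fourth row of Table \ref{tab:vA}. I would verify the discriminant is nonzero for $u\in\mathcal{U}_0\setminus\gf_3$ (so the generic formula applies), then assemble the constant and the lone surviving character value, being scrupulous about the normalization $\chi(\sqrt{x})=1$ that fixes the sign of $\alpha$.
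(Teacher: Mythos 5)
Your plan is correct and follows essentially the same route as the paper's proof: factor $g_1g_3g_4g_5$ into $(u+1)\varphi(u)\,z^2(z-1+u)(z+1+\sqrt{1-u^2})(z+1-\sqrt{1-u^2})^2$, drop the squares away from $z=0$ and $z=\sqrt{1-u^2}-1$, use $\chi((u+1)\varphi(u))=-1$ to reduce to the quadratic sum $-\sum\chi((z-1+u)(z+1+\sqrt{1-u^2}))$ with two points deleted, evaluate it as $-1$ via the degree-two formula (discriminant nonzero), and add back the two omitted values, which indeed combine to give $2-\chi(\sqrt{1-u^2}+1+u)$. The only caveat is the one you already flag: the add-back terms are the values of $\chi((z-1+u)(z+1+\sqrt{1-u^2}))$ at the deleted points (namely $\chi((u-1)\varphi(u))=1$ and $\chi((\sqrt{1-u^2}+1+u)(-\sqrt{1-u^2}))=-\chi(\sqrt{1-u^2}+1+u)$), not entries of Table~\ref{tab:vA} verbatim.
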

\begin{proof}We only prove the first identity, as the proof of the second is very similar. 
	\begin{align*}
		 \sum_{z\in\gf_{3^n}}\chi(g_1(z)g_3(z)g_4(z)g_5(z))=&\chi((u+1)\varphi(u))\sum_{z\in\gf_{3^n}}\chi((z-1+u)(z+1+\sqrt{1-u^2})z^2(z+1-\sqrt{1-u^2})^2)\\
		=&-\sum_{z\in\gf_{3^n}^*,z\neq\sqrt{1-u^2}-1}\chi((z-1+u)(z+1+\sqrt{1-u^2}))\\
		=&-\sum_{z\in\gf_{3^n}}\chi((z-1+u)(z+1+\sqrt{1-u^2}))\\
		&+\chi((u-1)\varphi(u))+\chi((\sqrt{1-u^2}+1+u)(-\sqrt{1-u^2}))\\
		=&1-\chi((u+1)\varphi(u))-\chi((\sqrt{1-u^2}+1+u))\\
		=&2-\chi((\sqrt{1-u^2}+1+u)).
	\end{align*}
\end{proof}

\begin{lemma}When $u \in \mathcal{U}_0\backslash\gf_3$, we have
	\[\sum_{z\in\gf_{3^n}}\chi(g_2(z)g_3(z)g_4(z))=-2.\]
\end{lemma}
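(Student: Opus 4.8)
The plan is to reduce the stated quartic character sum to a point count on a genus-$1$ curve over $\gf_{3^n}$ and then to pin down its trace of Frobenius by combining supersingularity with a parity argument coming from the $2$-torsion.

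First I would simplify the integrand. Since $g_2(z)g_3(z)=z^2\bigl(z^2+z+1-u^2\bigr)$ and $g_4(z)=z^2-z+u^2$, writing $A(z)=z^2+z+1-u^2$ and $B(z)=z^2-z+u^2$ gives $g_2g_3g_4=z^2A(z)B(z)$. As $\chi(z^2)=1$ for $z\neq0$ and the $z=0$ term vanishes,
\[\sum_{z\in\gf_{3^n}}\chi(g_2g_3g_4)=\sum_{z\neq0}\chi\bigl(A(z)B(z)\bigr)=S-\chi\bigl((1-u^2)u^2\bigr)=S-1,\]
where $S:=\sum_{z\in\gf_{3^n}}\chi\bigl(A(z)B(z)\bigr)$ and the last step uses that $u^2$ and $1-u^2$ are nonzero squares for $u\in\mathcal{U}_0\setminus\gf_3$. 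Thus it suffices to prove $S=-1$. Expanding and using the characteristic-$3$ identity $(1-u^2)+u^2=1$, the cubic and quadratic coefficients cancel and
\[A(z)B(z)=z^4-(u^2+1)z+u^2(1-u^2).\]
The vanishing of the $z^2$-coefficient is the key structural fact that makes $S$ exactly computable.

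Next I would interpret $S$ geometrically. Put $q=3^n$ and consider the curve $C:\,y^2=A(z)B(z)$ over $\gf_{3^n}$; its quartic is squarefree precisely because $u\notin\gf_3$ forces the four roots $1\pm u,\,-1\pm\sqrt{1-u^2}$ to be distinct, so $C$ has genus $1$. Counting affine points together with the two points at infinity (the leading coefficient $1$ being a square) gives $\#C(\gf_{3^n})=q+S+2=q+1-t$, where $t$ is the trace of Frobenius; hence $S=-1-t$, and the whole problem reduces to showing $t=0$. The Hasse--Witt invariant of $y^2=f(z)$ is the coefficient of $z^{p-1}$ in $f(z)^{(p-1)/2}$, which for $p=3$ is just the coefficient of $z^2$ in $f(z)$; here it vanishes, so $C$ is supersingular, and for odd $n$ the classification of supersingular traces restricts $t$ to $\{0,\pm 3^{(n+1)/2}\}$.

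Finally I would exclude the two nonzero possibilities by a $2$-torsion/parity argument. Because all four roots of $A(z)B(z)$ lie in $\gf_{3^n}$ (here $\sqrt{1-u^2}\in\gf_{3^n}$, as $1-u^2$ is a square), the curve has full rational $2$-torsion, so $4\mid\#C(\gf_{3^n})$. Since $n$ is odd, $q+1=3^n+1\equiv0\pmod4$, whence $t\equiv0\pmod4$; but $\pm 3^{(n+1)/2}$ are odd, so the only surviving value is $t=0$. This yields $S=-1$ and therefore $\sum_{z}\chi(g_2g_3g_4)=S-1=-2$. I expect the delicate part to be exactly this determination of the \emph{exact} trace: supersingularity alone leaves the spurious values $\pm 3^{(n+1)/2}$, and it is the complete splitting of the quartic (forcing $4\mid\#C(\gf_{3^n})$) that removes them. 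Setting up the supersingularity—via the vanishing $z^2$-coefficient, itself a consequence of $(1-u^2)+u^2=1$—and the $2$-torsion divisibility are the two points that need care.
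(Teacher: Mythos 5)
Your argument is correct, and it takes a genuinely different route from the paper. The paper stays entirely elementary: after reducing to $\sum_{z}\chi\bigl(\tfrac{z^2-z+u^2}{z^2+z+1-u^2}\bigr)$ it substitutes $t$ for that rational function, counts fibres via the discriminant $\Delta_t=u^2t^2+1-u^2$, and kills the resulting sum $\sum_t\chi(t\Delta_t)$ by the odd symmetry $t\mapsto -t$ together with $\chi(-1)=-1$. You instead observe that in characteristic $3$ the product collapses to $A(z)B(z)=z^4-(u^2+1)z+u^2(1-u^2)$ with vanishing $z^2$-coefficient, so the associated genus-$1$ curve is supersingular (Cartier--Manin entry $c_{p-1}=0$, which does apply to even-degree models, though a reader might prefer that you first pass to a cubic model --- easy here since all four roots are rational --- and quote the classical Deuring criterion), and then eliminate the traces $\pm 3^{(n+1)/2}$ by full rational $2$-torsion plus $3^n+1\equiv 0\pmod 4$. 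Both proofs are complete; yours is heavier machinery but more explanatory: it identifies the vanishing $z^2$-coefficient as the structural reason this particular sum evaluates exactly, in contrast with the sums $\Gamma_3$ and $\Gamma_4$ in Section IV (e.g.\ $y^2=z^3-z^2+u^2z$ has nonzero Hasse invariant, hence is ordinary), which indeed cannot be evaluated in closed form and survive as parameters in the differential spectrum. The paper's computation, by contrast, needs nothing beyond Lemma~1 and a parity trick, and matches the style of the surrounding lemmas.
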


\begin{proof}We have,
	\begin{align*}
		\sum_{z\in\gf_{3^n}}\chi(g_2(z)g_3(z)g_4(z)) &=\sum_{z\in\gf_{3^n}}\chi(z^2(z-1+u)(z-1-u)(z^2-z+u^2)) \\
		&=\sum_{z\in\gf_{3^n}^*}\chi((z-1+u)(z-1-u)(z^2-z+u^2)) \\
		&=\sum_{z\in\gf_{3^n}^*, z\neq 1\pm u}\chi(\frac{z^2-z+u^2}{z^2+z+1-u^2}) \\
		&=\sum_{z\in\gf_{3^n}, z\neq 1 \pm u}\chi(\frac{z^2-z+u^2}{z^2+z+1-u^2})-\chi(\frac{u^2}{1-u^2}).
	\end{align*}
	Let $t=\frac{z^2-z+u^2}{z^2+z+1-u^2}$. Then $(t-1)z^2+(t+1)z+t(1-u^2)-u^2=0$. We know that $t=1$ if and only if $z=1+u^2$. When $t\neq 1$, the discriminant of the quadratic equation on $z$ is $\Delta_t=(t+1)^2-(t-1)(t(1-u^2)-u^2)=u^2t^2+1-u^2$. The number of $z$ with a fixed $t\neq 1$ is $1+\chi(\Delta_t)$. Hence, we have
	\begin{align*}
		\sum_{z\in\gf_{3^n}}\chi(g_2(z)g_3(z)g_4(z)) &=\sum_{z\in\gf_{3^n}, z\neq 1 \pm u}\chi(\frac{z^2-z+u^2}{z^2+z+1-u^2})-\chi(\frac{u^2}{1-u^2}) \\
		&=\sum_{t\in\gf_{3^n}, t\neq 1}\chi(t)(1+\chi(\Delta_t))-\chi(\frac{u^2}{1-u^2}) \\
		&=\sum_{t\in\gf_{3^n}}\chi(t)(1+\chi(\Delta_t))-\chi(1)-\chi(\frac{u^2}{1-u^2}) \\
		 &=\sum_{t\in\gf_{3^n}}\chi(t)+\sum_{t\in\gf_{3^n}}\chi(t(u^2t^2+1-u^2))-\chi(1)-\chi(\frac{u^2}{1-u^2}).
	\end{align*}
	Note that $\sum_{t\in\gf_{3^n}}\chi(t(u^2t^2+1-u^2))=\sum_{t\in\gf_{3^n}}\chi(-t(u^2t^2+1-u^2))=-\sum_{t\in\gf_{3^n}}\chi(t(u^2t^2+1-u^2))$, then $\sum_{t\in\gf_{3^n}}\chi(t(u^2t^2+1-u^2))=0$. This with $\chi(\frac{u^2}{1-u^2})=1$ leads to $ \sum_{z\in\gf_{3^n}}\chi(g_2(z)g_3(z)g_4(z))=-2$.
\end{proof}

\begin{lemma}When $u \in \mathcal{U}_0\backslash\gf_3$, we have
	\[ \sum_{z\in\gf_{3^n}}\chi(g_1(z)g_4(z))+\sum_{z\in\gf_{3^n}}\chi(g_1(z)g_2(z)g_3(z))=0.\]
\end{lemma}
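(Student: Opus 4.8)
The plan is to reduce both sums to a single family of cubic character sums and then exploit a symmetry coming from a $2$-isogeny. First I would expand the two integrands. Using $\chi(z^3)=\chi(z)$, the multiplicativity of $\chi$, and the identities $-2=1$, $4=1$ in $\gf_{3^n}$, one obtains
\[
\sum_{z\in\gf_{3^n}}\chi\bigl(g_1(z)g_4(z)\bigr)=\chi\bigl(-(u+1)\bigr)\sum_{z\in\gf_{3^n}}\chi\bigl(z(z^2-z+u^2)\bigr),
\]
and, after replacing $z^3$ by $z$ and simplifying $(z-1)^2-u^2=z^2+z+1-u^2$,
\[
\sum_{z\in\gf_{3^n}}\chi\bigl(g_1(z)g_2(z)g_3(z)\bigr)=\chi\bigl(-(u+1)\bigr)\sum_{z\in\gf_{3^n}}\chi\bigl(z(z^2+z+1-u^2)\bigr).
\]
I would then introduce the auxiliary quantity $F(a)=\sum_{z\in\gf_{3^n}}\chi\bigl(z(z^2-z+a)\bigr)$. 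The first sum is exactly $F(u^2)$, while the substitution $z\mapsto -z$ together with $\chi(-1)=-1$ (valid since $n$ is odd) turns the second sum into $-F(1-u^2)$. Since $\chi(-(u+1))\neq 0$, the claim is therefore equivalent to the single identity $F(u^2)=F(1-u^2)$.

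The key step, and the part I expect to require the most care, is proving $F(a)=F(1-a)$ by a $2$-isogeny-type substitution. The underlying elementary fact is that for any $c\in\gf_{3^n}$ and any $d\in\gf_{3^n}^*$,
\[
\sum_{x\in\gf_{3^n}}\chi\bigl(x(x^2+cx+d)\bigr)=\sum_{t\in\gf_{3^n}}\chi\bigl(t(t^2-2ct+c^2-4d)\bigr).
\]
To see this I would discard the vanishing $x=0$ term, write $x(x^2+cx+d)=x^2\,t$ with $t=x+c+d/x$, and note $\chi(x^2)=1$ so that the summand equals $\chi(t)$; for a fixed $t$ the equation $x^2+(c-t)x+d=0$ has $1+\chi\bigl((t-c)^2-4d\bigr)$ solutions in $x$, all automatically nonzero because $d\neq 0$. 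Summing over $t$ and using $\sum_t\chi(t)=0$ yields the displayed identity. Applying it with $c=-1$ and $d=a$ (which needs $a\neq 0$) gives, after the char-$3$ reductions $2t=-t$ and $4a=a$,
\[
F(a)=\sum_{t\in\gf_{3^n}}\chi\bigl(t(t^2+2t+1-4a)\bigr)=\sum_{t\in\gf_{3^n}}\chi\bigl(t(t^2-t+1-a)\bigr)=F(1-a).
\]

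Finally I would specialize to $a=u^2$, which is legitimate since $u\notin\gf_3$ forces $u\neq 0$ (so $d=u^2\neq0$) and $u\in\mathcal{U}_0$ forces $u\neq\pm1$ (so that $1-u^2\neq0$ as well); this gives $F(u^2)=F(1-u^2)$ and hence
\[
\sum_{z\in\gf_{3^n}}\chi\bigl(g_1(z)g_4(z)\bigr)+\sum_{z\in\gf_{3^n}}\chi\bigl(g_1(z)g_2(z)g_3(z)\bigr)=\chi\bigl(-(u+1)\bigr)\bigl(F(u^2)-F(1-u^2)\bigr)=0.
\]
The main obstacle is recognizing the substitution $t=x+c+d/x$ that links the two cubics (these are exactly the $x$-lines of a $2$-isogenous pair of elliptic curves, which is why their character sums coincide), and verifying that the multiplicity count $1+\chi\bigl((t-c)^2-4d\bigr)$ is correct; the latter hinges on $d\neq 0$, so the excluded cases $u\in\{0,\pm1\}$ are precisely where the argument would break down. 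The remaining work is routine characteristic-$3$ bookkeeping.
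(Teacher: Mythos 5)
Your proposal is correct and follows essentially the same route as the paper: both reduce the two sums to cubic character sums of the form $\sum_z\chi\bigl(z(z^2-z+d)\bigr)$ and then apply the substitution $t=(z^2-z+d)/z$ with the discriminant count $1+\chi(\Delta_t)$ to match them up. Your only cosmetic difference is packaging this as the general identity $F(a)=F(1-a)$ before specializing to $a=u^2$, whereas the paper carries out the same substitution directly with $d=1-u^2$.
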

\begin{proof}First,
	\[\sum_{z\in\gf_{3^n}}\chi(g_1(z)g_4(z))=-\chi(u+1)\sum_{z\in\gf_{3^n}}\chi(z(z^2-z+u^2)),\]
	and
	\begin{align*}
		\sum_{z\in\gf_{3^n}}\chi(g_1(z)g_2(z)g_3(z)) &=-\chi(u+1)\sum_{z\in\gf_{3^n}}\chi(z^3(z-(u+1))(z+(u-1)))\\
		&=-\chi(u+1)\sum_{z\in\gf_{3^n}^*}\chi(z(z^2+z+1-u^2)).
	\end{align*}
	Note that
	 \[\sum_{z\in\gf_{3^n}^*}\chi(z(z^2+z+1-u^2))=-\sum_{z\in\gf_{3^n}^*}\chi(z(z^2-z+1-u^2))=-\sum_{z\in\gf_{3^n}^*}\chi(\frac{z^2-z+1-u^2}{z}).\]
	Let $\frac{z^2-z+1-u^2}{z}=t$. Then $t$ satisfies the quadratic equation \[z^2-(t+1)z+1-u^2=0.\] Clearly, $z=0$ is not the solution of this quadratic equation for any $t\in\gf_{3^n}$ since $u\notin \gf_3$. For each $t$, the number of solutions of $z$ is $1+\chi(\Delta_t)$, where $\Delta_t=(t+1)^2-(1-u^2)=t^2-t+u^2$. Hence
	 \[\sum_{z\in\gf_{3^n}^*}\chi(\frac{z^2-z+1-u^2}{z})=\sum_{t\in\gf_{3^n}}\chi(t)(1+\chi(\Delta_t))=\sum_{t\in\gf_{3^n}}\chi(t\Delta_t)=\sum_{t\in\gf_{3^n}}\chi(t(t^2-t+u^2)).\]
	The desired result follows.
\end{proof}

\begin{lemma}When $u \in \mathcal{U}_0\backslash\gf_3$, we have
	\[ \sum_{z\in\gf_{3^n}}\chi(g_2(z)g_4(z))+\sum_{z\in\gf_{3^n}}\chi(g_1(z)g_2(z)g_4(z))=-2,\]
	and
		\[ \sum_{z\in\gf_{3^n}}\chi(g_3(z)g_4(z))+\sum_{z\in\gf_{3^n}}\chi(g_1(z)g_3(z)g_4(z))=-2.\]
\end{lemma}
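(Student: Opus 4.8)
Write $S_1=\sum_{z\in\gf_{3^n}}\chi(g_2(z)g_4(z))$ and $S_2=\sum_{z\in\gf_{3^n}}\chi(g_1(z)g_2(z)g_4(z))$, and set $c_1=1+u$ and $Q(z)=z^2-z+u^2$, so that $g_2g_4=z(z-c_1)Q(z)$. The plan is to first fold $S_2$ into $S_1$ using the binomial structure, and then to collapse the pair by an involution. Since $g_1=-(1+u)z$, we have $\chi(g_1g_2g_4)=-\chi(1+u)\,\chi(z)\,\chi(g_2g_4)$; extracting $\chi(z^2)$ gives
\[
S_2=-\chi(1+u)\sum_{z\neq0}\chi\big((z-c_1)Q(z)\big).
\]
Next I would apply the involution $z\mapsto u^2/z$ of $\gf_{3^n}^*$ to $S_1=\sum_{z}\chi\big(z(z-c_1)Q(z)\big)$. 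Because $u^2$ is a square, $Q(u^2/z)=u^2z^{-2}Q(z)$ and $z-c_1\mapsto -c_1z^{-1}(z-c_2)$ with $c_2:=u^2/c_1=u^2/(1+u)$, the character of the argument transforms cleanly (all even powers of $z$ and of $u$ drop out), yielding $S_1=-\chi(1+u)\sum_{z\neq0}\chi\big((z-c_2)Q(z)\big)$.

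Adding the two expressions, the common factor $-\chi(1+u)$ factors out; restoring the $z=0$ terms (each of the two cubics contributes $\chi(-c_i u^2)=-\chi(1+u)$ there, using $\chi(c_1)=\chi(c_2)=\chi(1+u)$) turns the whole statement into the single identity
\[
\sum_{z\in\gf_{3^n}}\chi\big((z-c_1)Q(z)\big)+\sum_{z\in\gf_{3^n}}\chi\big((z-c_2)Q(z)\big)=0,\qquad c_1c_2=u^2 .
\]
Indeed $S_1+S_2=-\chi(1+u)\big(M+2\chi(1+u)\big)=-\chi(1+u)M-2$, where $M$ denotes the left-hand side above, so $S_1+S_2=-2$ is \emph{equivalent} to $M=0$. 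Thus the lemma reduces to showing that two cubic character sums, sharing the quadratic factor $Q$ and with linear roots whose product is $u^2$, are negatives of one another.

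To attack $M=0$ I would reduce each summand by a rational substitution of the kind used for $\sum_z\chi(g_2g_3g_4)=-2$: with $t=(z-c_i)/Q(z)$ one counts preimages through the quadratic $tz^2-(t+1)z+(tu^2+c_i)=0$ and obtains $\sum_z\chi((z-c_i)Q(z))=\sum_t\chi\big(t\,\mathrm{disc}_i(t)\big)$, where $\mathrm{disc}_i(t)=(1-u^2)t^2+(2-c_i)t+1$ has leading and constant coefficients independent of $i$. The sign that must surface is precisely $\chi(u^2-1)=-1$, which is exactly the defining property $\chi(u+1)\neq\chi(u-1)$ of $\mathcal U_0$, to be used together with $\chi\big((1+u)\varphi(u)\big)=-1$. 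I expect this cubic-sum cancellation to be the main obstacle: the two cubics are \emph{not} related by any affine or fractional-linear change of variable (the associated elliptic curves have distinct $j$-invariants), so $M=0$ is one of the genuinely delicate relations among the character sums and cannot be produced by a single substitution; it must be teased out from the coefficient structure and the parity identity $\sum_t\chi(t\,\mathrm{disc}_i(t))=-\sum_t\chi\big((1-u^2)t^3-(2-c_i)t^2+t\big)$ coming from $t\mapsto -t$ and $\chi(-1)=-1$. Finally, the second identity (for $g_3$) follows from the first by the symmetry $u\mapsto -u$, which interchanges $g_2\leftrightarrow g_3$ and $1+u\leftrightarrow 1-u$ and leaves $\mathcal U_0$ invariant.
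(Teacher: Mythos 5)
Your reduction is correct and even elegant: the involution $z\mapsto u^2/z$ cleanly converts $\sum_z\chi(g_2g_4)$ into $-\chi(1+u)\sum_{z\ne0}\chi((z-c_2)Q(z))$ with $c_2=u^2/(1+u)$, and after restoring the $z=0$ terms the lemma is indeed equivalent to the cancellation $M=\sum_z\chi((z-c_1)Q(z))+\sum_z\chi((z-c_2)Q(z))=0$. (Your appeal to $u\mapsto-u$ for the second identity is also fine, since $\chi(1-u)=\chi(1+u)$ for $u\in\mathcal U_0$.) But the proof stops exactly where the lemma actually lives: you reduce to $M=0$, observe that the two cubics are not related by any fractional-linear substitution, and then state that the cancellation ``must be teased out from the coefficient structure'' without doing so. That cancellation is the entire content of the lemma, so as it stands the argument has a genuine gap.

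The missing mechanism in the paper is worth knowing because it is not a linear-algebraic coincidence but a characteristic-$3$ phenomenon. The paper treats the two sums asymmetrically: $\sum_z\chi(g_1g_2g_4)$ is reduced by a translation to $-1-\chi(u+1)\sum_z\chi\bigl(z(z^2-(u-1)z-(u^2-u))\bigr)$, while $\sum_z\chi(g_2g_4)$ is pushed through \emph{two} preimage-counting substitutions (first $t=\frac{z^2-z+u^2}{z(z-1-u)}$, then $v=\Delta_t/t$), a rescaling, and crucially the cube-root identity
\begin{equation*}
\sum_{w\in\gf_{3^n}}\chi\bigl(w(w^2+w-c^3)\bigr)=\sum_{w\in\gf_{3^n}}\chi\bigl(w(w^2+w-c)\bigr),
\end{equation*}
which holds because $w\mapsto w^{1/3}$ is a bijection and $\chi(x)=\chi(x^3)=\chi(x)^3$. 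This is what lets the constant term $\frac{u^3}{(u-1)^3}$ be replaced by $\frac{u}{u-1}$, after which the second sum becomes $-1-\chi(u-1)\sum_z\chi\bigl(z(z^2-(u-1)z-(u^2-u))\bigr)$ --- the \emph{same} cubic sum as before, now weighted by $\chi(u-1)$ instead of $\chi(u+1)$. The defining property $\chi(u+1)+\chi(u-1)=0$ of $\mathcal U_0$ then kills the pair. So your instinct that no single affine or fractional-linear change of variable works is right, but the resolution is the Frobenius-type cube substitution, and without it (or some substitute for it) your $M=0$ remains unproved. If you want to keep your symmetric formulation, you would need to run this machinery on each of $\sum_z\chi((z-c_i)Q(z))$ separately and verify that the two resulting cubic sums coincide up to the sign $\chi(u+1)\chi(u-1)=-1$.
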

\begin{proof}We only prove the first identity. The proof of the second one is similar, so we omit it.
	Note that
	\[\sum_{z\in\gf_{3^n}}\chi(g_2(z)g_4(z))=\sum_{z\in\gf_{3^n}}\chi(z(z-1-u)(z^2-z+u^2)),\]
	and
	 \[\sum_{z\in\gf_{3^n}}\chi(g_1(z)g_2(z)g_4(z))=-\chi(u+1)\sum_{z\in\gf_{3^n}}\chi(z^2(z-1-u)(z^2-z+u^2)).\]
	We have
	\begin{align*}
		\sum_{z\in\gf_{3^n}}\chi(g_1(z)g_2(z)g_4(z)) &=-\chi(u+1)\sum_{z\in\gf_{3^n}^*}\chi((z-1-u)(z^2-z+u^2)) \\
		&=-\chi(u+1)\sum_{z\in\gf_{3^n}, z\ne -1-u}\chi(z((z+1+u)^2-(z+1+u)+u^2)) \\
		&=-\chi(u+1)\sum_{z\in\gf_{3^n}, z\ne -1-u}\chi(z(z^2-(u-1)z-(u^2-u))) \\
		&=-\chi(u+1)(\sum_{z\in\gf_{3^n}}\chi(z(z^2-(u-1)z-(u^2-u))-\chi(-(u+1)u^2))) \\
		&=-1-\chi(u+1)\sum_{z\in\gf_{3^n}}\chi(z(z^2-(u-1)z-(u^2-u))), \\
	\end{align*}
	and 	
	 \[\sum_{z\in\gf_{3^n}}\chi(g_2(z)g_4(z))=\sum_{z\in\gf_{3^n}}\chi(z(z-1-u)(z^2-z+u^2))=\sum_{z\in\gf_{3^n}^*, z\ne u+1}\chi(\frac{z^2-z+u^2}{z(z-1-u)}).\]
	Let $\frac{z^2-z+u^2}{z(z-1-u)}=t$. Then $t$ satisfies
	\[(t-1)z^2+(1-(u+1)t)z-u^2=0.\]
	We know that $t=1$ if and only if $z=-u$. When $t\neq 1$, the discriminant of the quadratic equation on $z$ is $\Delta_t=(1-(u+1)t)^2+(t-1)u^2=(u+1)^2t^2+(u-1)^2t+1-u^2$. The number of $z$ with a fixed $t\neq 1$ is $1+\chi(\Delta_t)$. Hence, we have
	\begin{align*}
		\sum_{z\in\gf_{3^n}}\chi(g_2(z)g_4(z))&=\sum_{z\in\gf_{3^n}^*, z\ne u+1}\chi(\frac{z^2-z+u^2}{z(z-1-u)}) \\
		&=1+\sum_{t\in\gf_{3^n}, t\ne 1}\chi(t)(1+\chi(\Delta_t)) \\
		&=1+\sum_{t\in\gf_{3^n}}\chi(t)(1+\chi(\Delta_t))-(1+\chi(u^2)) \\
		&=\sum_{t\in\gf_{3^n}}\chi(t)(1+\chi(\Delta_t))-1. \\
	\end{align*}
	Note that $\sum\limits_{t\in\gf_{3^n}}\chi(t\Delta_t)=\sum\limits_{t\in\gf_{3^n}}\chi(t((u+1)^2t^2+(u-1)^2t+1-u^2))=\sum\limits_{t\in\gf_{3^n}^*}\chi(\frac{(u+1)^2t^2+(u-1)^2t+1-u^2}{t})$. Let $v=\frac{(u+1)^2t^2+(u-1)^2t+1-u^2}{t}$. Then
	\[(u+1)^2t^2+((u-1)^2-v)t+(1-u^2)=0,\]
	which is a quadratic equation on $t$. $\Delta_v=((u-1)^2-v)^2-(u+1)^2(1-u^2)=v^2+(u-1)^2v-(u-1)u^3.$
	Then
	\begin{align*}
		\sum_{t\in\gf_{3^n}}\chi(t\Delta_t)&=\sum_{v\in\gf_{3^n}}\chi(v(1+\chi(\Delta_v)))\\
		&=\sum_{v\in\gf_{3^n}}\chi(v(v^2+(u-1)^2v-(u-1)u^3)) \\
		&=\sum_{w\in\gf_{3^n}}\chi((u-1)^2w((u-1)^4w^2+(u-1)^4w-(u-1)u^3)) \\
		&=\sum_{w\in\gf_{3^n}}\chi(w(w^2+w-\frac{u^3}{(u-1)^3})) \\
		&=\sum_{w\in\gf_{3^n}}\chi(w^{1/3}((w^{1/3})^2+w^{1/3}-\frac{u^3}{(u-1)^3})) \\
		&=\sum_{w\in\gf_{3^n}}\chi(w(w^2+w-\frac{u}{u-1})) \\
		 &=\sum_{w\in\gf_{3^n}}\chi(\frac{w}{u-1}((\frac{w}{u-1})^2+\frac{w}{u-1}-\frac{u}{u-1})) \\
		&=\chi(u-1)\sum_{w\in\gf_{3^n}}\chi(w(w^2+(u-1)w-(u^2-u))) \\
		&=-\chi(u-1)\sum_{w\in\gf_{3^n}}\chi(w(w^2-(u-1)w-(u^2-u))).
	\end{align*}
	We conclude that
	 \[\sum_{z\in\gf_{3^n}}\chi(g_1(z)g_2(z)g_4(z))=-1-\chi(u+1)\sum_{z\in\gf_{3^n}}\chi(z(z^2-(u-1)z-(u^2-u))),\]
	and
	 \[\sum_{z\in\gf_{3^n}}\chi(g_2(z)g_4(z))=-1-\chi(u-1)\sum_{w\in\gf_{3^n}}\chi(w(w^2-(u-1)w-(u^2-u))).\]
	Then we have
	\[\sum_{z\in\gf_{3^n}}\chi(g_2(z)g_4(z))+\sum_{z\in\gf_{3^n}}\chi(g_1(z)g_2(z)g_4(z))=-2.\]
\end{proof}
\begin{lemma}When $u \in \mathcal{U}_0\backslash\gf_3$, we have
	\[ \sum_{z\in\gf_{3^n}}\chi(g_2(z)g_3(z)g_5(z))+\sum_{z\in\gf_{3^n}}\chi(g_1(z)g_2(z)g_3(z)g_5(z))=2.\]
\end{lemma}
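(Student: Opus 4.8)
The plan is to expand both products explicitly, discard the square factors that $\chi$ cannot see, and reduce everything to one family of cubic character sums which will then cancel. Write $\varphi=\varphi(u)$ and $\alpha=\sqrt{1-u^2}$. From $g_2g_3=z^2(z^2+z+1-u^2)$, $g_5=-\varphi(z+1-\alpha)$ and $g_1=-(u+1)z$ one gets, for $z\neq0$, $\chi(g_2g_3g_5)=-\chi(\varphi)\chi(P(z))$ and $\chi(g_1g_2g_3g_5)=-\chi(zP(z))$, where $P(z)=(z^2+z+1-u^2)(z+1-\alpha)$; this uses $\chi((u+1)\varphi)=-1$ and $\chi(-1)=-1$. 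Since $g_2g_3g_5$ and $g_1g_2g_3g_5$ vanish at $z=0$, and since $\chi(P(0))=\chi((1-u^2)(1-\alpha))=\chi(1-\alpha)=\chi(\varphi)$ (because $\chi((1-\alpha)(1+\alpha))=\chi(u^2)=1$), summation gives $\sum_z\chi(g_2g_3g_5)=1-\chi(\varphi)A$ and $\sum_z\chi(g_1g_2g_3g_5)=-B$, where $A=\sum_z\chi(P(z))$ and $B=\sum_z\chi(zP(z))$.

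Next I would evaluate $A$ and $B$ by the rational-substitution device already used for the earlier lemmas, recording both in a common form. Introduce $G(m):=\sum_t\chi\bigl(t(t^2+mt+u^2)\bigr)$. Putting $t=\frac{z^2+z+1-u^2}{z+1-\alpha}$ and counting preimages by $1+\chi(\Delta_t)$ yields $A=G(-\varphi)$. For $B$, the substitution $s=\frac{z^2+z+1-u^2}{z(z+1-\alpha)}$ followed by the scaling $s=r/(1-\alpha)$ gives $B=-1+\chi(\varphi)G(\beta)$, where $\beta=\varphi^2/(1-\alpha)=\varphi^3/u^2$ (using $u^2=(1-\alpha)\varphi$). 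Substituting back, the quantity to be computed becomes $1-\chi(\varphi)A-B=2-\chi(\varphi)\bigl(G(-\varphi)+G(\beta)\bigr)$, so it remains only to prove $G(-\varphi)+G(\beta)=0$.

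The decisive, and I expect hardest, step is the identity $G(\beta)=G(\varphi)$; combined with the elementary $G(-m)=-G(m)$ (substitute $t\mapsto -t$ and use $\chi(-1)=-1$) it gives $G(-\varphi)+G(\beta)=-G(\varphi)+G(\varphi)=0$. The identity $G(\beta)=G(\varphi)$ is delicate because the curves $y^2=t(t^2+\beta t+u^2)$ and $y^2=t(t^2+\varphi t+u^2)$ have different $j$-invariants, so it cannot follow from an isomorphism, a quadratic twist, or any scaling/reciprocal substitution (these preserve the constant $u^2$ only through $G(m)=-G(-m)$). The plan is to use the characteristic instead: apply the Frobenius substitution $t=v^3$ and the factorization $v^6+\beta v^3+u^2=\bigl(v^2+\beta^{1/3}v+(u^2)^{1/3}\bigr)^3$, valid in characteristic $3$, together with $\chi(w^3)=\chi(w)$. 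Since $\beta=\varphi^3/u^2$ forces $\beta^{1/3}=\varphi\,(u^2)^{-1/3}$, this turns $G(\beta)$ into $\sum_v\chi\bigl(v(v^2+\varphi(u^2)^{-1/3}v+(u^2)^{1/3})\bigr)$; a final scaling $v=(u^2)^{-1/3}w$, whose character factor $\chi((u^2)^{-1})$ equals $1$, restores the constant term to $u^2$ and the linear coefficient to $\varphi$, giving $G(\beta)=G(\varphi)$ and hence the claimed value $2$.

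Throughout I would track the fibres omitted in the substitutions (the points $z=0$ and $z=\alpha-1$, and the value $t=1$), since these are the only places where stray constants can enter and be mishandled; a direct check over $\gf_{27}$ for one admissible $u\in\mathcal{U}_0\setminus\gf_3$ is a convenient safeguard that the bookkeeping is correct.
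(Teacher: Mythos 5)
Your proposal is correct and takes essentially the same route as the paper: both sums are reduced, by counting fibres of a rational substitution with $1+\chi(\Delta_t)$, to cubic character sums in a single one-parameter family, and the final cancellation rests on exactly the two facts you isolate — the odd symmetry under $t\mapsto -t$ and the characteristic-$3$ Frobenius identity obtained from $t=v^3$ and $\chi(w^3)=\chi(w)$. The only real difference is presentational: the paper normalizes to sums of the form $\sum_t\chi(t(t^2\pm t+c))$ and hides the cube-root step in a single unexplained line (replacing $D^3$ by $D$ in the constant term), whereas you normalize the constant term to $u^2$ and make that step explicit as $G(\beta)=G(\varphi)$; your bookkeeping of the omitted fibres ($z=0$, $z=\alpha-1$, $s=1$) is consistent with the paper's.
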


\begin{proof}Note that
	\begin{align*}
		&\sum_{z\in\gf_{3^n}}\chi(g_2(z)g_3(z)g_5(z))\\
		&=-\chi(\varphi(u))\sum_{z\in\gf_{3^n}}\chi(z^2(z-1-u)(z-1+u)(z+1-\sqrt{1-u^2})) \\
		&=-\chi(\varphi(u))\sum_{z\in\gf_{3^n}^*}\chi((z-1-u)(z-1+u)(z+1-\sqrt{1-u^2})) \\
		 &=-\chi(\varphi(u))(-\chi((-1-u)(-1+u)(1-\sqrt{1-u^2}))+\sum_{z\in\gf_{3^n}}\chi((z-1-u)(z-1+u)(z+1-\sqrt{1-u^2})))\\ 	 
		&=1-\chi(\varphi(u))\sum_{z\in\gf_{3^n}}\chi((z-1-u)(z-1+u)(z+1-\sqrt{1-u^2})),
	\end{align*}
		and
	\begin{align*}
		&\sum_{z\in\gf_{3^n}}\chi((z-1-u)(z-1+u)(z+1-\sqrt{1-u^2})) \\
		=&\sum_{z\in\gf_{3^n}}\chi((z+\sqrt{1-u^2}+1-u)(z+\sqrt{1-u^2}+1+u)z) \\
		=&\sum_{z\in\gf_{3^n}}\chi(z(z^2-(\sqrt{1-u^2}+1)z+(u^2-1-\sqrt{1-u^2}))) \\
		 =&\chi(\sqrt{1-u^2}+1)\sum_{z\in\gf_{3^n}}\chi(z(z^2-z+\frac{-u^2+1-\sqrt{1-u^2}}{u^2})).
	\end{align*}
	Then
	 \[\sum_{z\in\gf_{3^n}}\chi(g_2(z)g_3(z)g_5(z))=1-\sum_{z\in\gf_{3^n}}\chi(z(z^2-z+\frac{-u^2+1-\sqrt{1-u^2}}{u^2})).\]			 
	Moreover,
	\begin{align*}
		 &\sum_{z\in\gf_{3^n}}\chi(g_1(z)g_2(z)g_3(z)g_5(z)) \\
		 =&\chi(u+1)\chi(\varphi(u))\sum_{z\in\gf_{3^n}}\chi(z^3(z-1-u)(z-1+u)(z+1-\sqrt{1-u^2})) \\
		=&-\sum_{z\in\gf_{3^n}}\chi(z(z-1-u)(z-1+u)(z+1-\sqrt{1-u^2})) \\
		 =&-\sum_{t\in\gf_{3^n}^*}\chi(\frac{1}{t}(\frac{1}{t}-1-u)(\frac{1}{t}-1+u)(\frac{1}{t}+1-\sqrt{1-u^2})) \\
		=&-\sum_{t\in\gf_{3^n}^*}\chi((1-(1+u)t)(1-(1-u)t)(1+(1-\sqrt{1-u^2})t)) \\
		=&1-\sum_{t\in\gf_{3^n}}\chi((1-(1+u)t)(1-(1-u)t)(1+(1-\sqrt{1-u^2})t)) \\
		 =&1-\chi((1+u)(1-u)(1-\sqrt{1-u^2}))\sum_{t\in\gf_{3^n}}\chi((\frac{1}{1+u}-t)(\frac{1}{1-u}-t)(\frac{1}{1-\sqrt{1-u^2}}+t)) \\
		 =&1-\chi(1-\sqrt{1-u^2})\sum_{t\in\gf_{3^n}}\chi((t-\frac{1}{1+u})(t-\frac{1}{1-u})(t+\frac{1}{1-\sqrt{1-u^2}})) \\
		 =&1-\chi(1-\sqrt{1-u^2})\sum_{t\in\gf_{3^n}}\chi((t-\frac{1}{1-\sqrt{1-u^2}}-\frac{1}{1+u})(t-\frac{1}{1-\sqrt{1-u^2}}-\frac{1}{1-u})t) \\
		 =&1-\chi(1-\sqrt{1-u^2})\sum_{t\in\gf_{3^n}}\chi(t(t^2+\frac{1+(1-u^2)\sqrt{1-u^2}}{u^2(1-u^2)}t-\frac{u^4+u^2+1+\sqrt{1-u^2}}{u^4(1-u^2)})) \\
		=&1-\chi(1-\sqrt{1-u^2})\chi(\frac{1+(1-u^2)\sqrt{1-u^2}}{u^2(1-u^2)})\\
		 &\cdot\sum_{t\in\gf_{3^n}}\chi(t(t^2+t-\frac{u^4+u^2+1+\sqrt{1-u^2}}{u^4(1-u^2)}/(\frac{1+(1-u^2)\sqrt{1-u^2}}{u^2(1-u^2)})^2))\\
		 =&1-\sum_{t\in\gf_{3^n}}\chi(t(t^2+t-\frac{u^4+u^2+1+\sqrt{1-u^2}}{u^4(1-u^2)}/(\frac{1+(1-u^2)\sqrt{1-u^2}}{u^2(1-u^2)})^2))\\
		=&1-\sum_{t\in\gf_{3^n}}\chi(t(t^2+t+\frac{-(u^2-1)^3-(\sqrt{1-u^2})^3}{u^6}))\\
		=&1-\sum_{t\in\gf_{3^n}}\chi(t(t^2+t+\frac{-u^2+1-\sqrt{1-u^2}}{u^2}))\\
		=&1+\sum_{t\in\gf_{3^n}}\chi(t(t^2-t+\frac{-u^2+1-\sqrt{1-u^2}}{u^2})).
	\end{align*}
	The desired result follows.
\end{proof}

\begin{lemma}When $u \in \mathcal{U}_0\backslash\gf_3$, we have
	\[ \sum_{z\in\gf_{3^n}}\chi(g_2(z)g_3(z)g_4(z)g_5(z))+\sum_{z\in\gf_{3^n}}\chi(g_1(z)g_2(z)g_3(z)g_4(z)g_5(z))=2.\]
\end{lemma}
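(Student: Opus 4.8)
The plan is to imitate the proof of the preceding lemma (the identity $\sum_z\chi(g_2g_3g_5)+\sum_z\chi(g_1g_2g_3g_5)=2$), now carrying the extra factor $g_4$. Writing $s=\sqrt{1-u^2}$, the observation that makes this feasible is that $g_4(z)=(z+1+s)(z+1-s)$ shares the linear factor $(z+1-s)$ with $g_5$, so each fivefold product contains the \emph{square} $(z+1-s)^2$, which is invisible to $\chi$ except at its zero $z=s-1$. Concretely, pulling the square $z^2$ out of $g_2g_3$ and the leading coefficient out of $g_5$, and using $\chi((u+1)\varphi(u))=-1$, I would reduce the two sums to $\sum_z\chi(g_2g_3g_4g_5)=-\chi(\varphi(u))\sum_{z\neq0}\chi(P(z))$ and $\sum_z\chi(g_1g_2g_3g_4g_5)=-\sum_z\chi(zP(z))$, where $P(z)=(z-1-u)(z-1+u)(z+1+s)(z+1-s)^2$.

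Next I would strip the square factor. For $z\neq s-1$ one has $\chi(P(z))=\chi(h(z))$ with $h(z)=(z-1-u)(z-1+u)(z+1+s)$, and the term $z=s-1$ contributes nothing. Setting $C=\sum_z\chi(h(z))$ and $D=\sum_z\chi(zh(z))$, and evaluating the boundary contributions $\chi(h(s-1))=\chi(\varphi(u))$, $\chi(P(0))=\chi(\varphi(u))$, and $\chi((s-1)h(s-1))=\chi(-s^2u^2)=-1$, I expect to land on $\sum_z\chi(g_2g_3g_4g_5)=2-\chi(\varphi(u))C$ and $\sum_z\chi(g_1g_2g_3g_4g_5)=-(D+1)$. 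The lemma then reduces to the single clean identity $\chi(\varphi(u))C+D=-1$.

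To handle this I would bring both cubic-type sums into the standard shape $\sum_v\chi(v(v^2-v+c))$. A shift $z=w-1-s$ turns $h$ into $w\bigl((w+1-s)^2-u^2\bigr)=w(w^2+(s-1)w+s(1-s))$; scaling $w=(s-1)v$ and collecting the prefactors (using $\chi((1+s)(s-1))=\chi(-u^2)=-1$ and $v\mapsto-v$) gives the tidy form $\chi(\varphi(u))C=\sum_v\chi(v(v^2-v+c_1))$ with $c_1=\tfrac{s}{1-s}$. For $D$ I would apply $z\mapsto 1/t$, converting the quartic into the reciprocal cubic $(1-(1+u)t)(1-(1-u)t)(1+(1+s)t)$, and then shift and scale analogously. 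Here the two char-$3$ collapses $1+s+s^2=(1-s)^2$ and $c_2=\tfrac{(1-s)s^3}{(1+s+s^2)^2}=c_1^3$ come into play: they reduce the prefactor to $-\chi(1+s+s^2)=-\chi((1-s)^2)=-1$ and the constant to $c_1^3$, giving $D=-\sum_v\chi(v(v^2-v+c_1^3))-1$.

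The decisive step, and the one I expect to require the most care, is to see that the sums with parameters $c_1$ and $c_1^3$ are \emph{equal} despite $c_1\neq c_1^3$. This is a Frobenius phenomenon: substituting $v=w^3$ (a bijection of $\gf_{3^n}$) and using $(w^2-w+c_1)^3=w^6-w^3+c_1^3$ together with $\chi(X^3)=\chi(X)$ yields $\sum_v\chi(v(v^2-v+c_1^3))=\sum_w\chi(w(w^2-w+c_1))$ (this is exactly the tacit passage from $\tfrac{-(u^2-1)^3-s^3}{u^6}$ to $\tfrac{-u^2+1-s}{u^2}$ in the preceding lemma). Substituting back gives $\chi(\varphi(u))C+D=\sum_v\chi(v(v^2-v+c_1))-\sum_v\chi(v(v^2-v+c_1))-1=-1$, whence $\sum_z\chi(g_2g_3g_4g_5)+\sum_z\chi(g_1g_2g_3g_4g_5)=2$. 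The main obstacles are thus bookkeeping of the $\chi$-prefactors and boundary points, and spotting the char-$3$ identities $1+s+s^2=(1-s)^2$ and $c_2=c_1^3$ that trigger the Frobenius collapse.
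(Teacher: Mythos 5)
Your proposal is correct and follows essentially the same route as the paper: strip the square factors $z^2$ and $(z+1-\sqrt{1-u^2})^2$ with the appropriate boundary corrections, reduce one sum by a shift-and-scale and the other by the inversion $z\mapsto 1/t$ to cubic sums of the form $\sum_v\chi(v(v^2\pm v+c))$ with constants $c_1=\frac{\sqrt{1-u^2}}{1-\sqrt{1-u^2}}$ and $c_1^3$, and identify them via the cube substitution (the step the paper leaves tacit). Your reorganization into the single identity $\chi(\varphi(u))C+D=-1$ is only a cosmetic repackaging, and all the boundary evaluations and characteristic-$3$ identities you cite check out.
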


\begin{proof}We have
	\begin{align*}
		&\sum_{z\in\gf_{3^n}}\chi(g_2(z)g_3(z)g_4(z)g_5(z)) \\
		 =&-\chi(\varphi(u))\sum_{z\in\gf_{3^n}}\chi(z^2(z-1-u)(z-1+u)(z+1+\sqrt{1-u^2})(z+1-\sqrt{1-u^2})^2) \\
		=&-\chi(\varphi(u))\sum_{z\in\gf_{3^n}, z\ne 0, z\ne \sqrt{1-u^2}-1}\chi((z-1-u)(z-1+u)(z+1+\sqrt{1-u^2})) \\
		 =&-\chi(\varphi(u))(-\chi((-1-u)(-1+u)(1+\sqrt{1-u^2}))-\chi((\sqrt{1-u^2}+1-u)(\sqrt{1-u^2}+1+u)(-\sqrt{1-u^2})) \\
		&+\sum_{z\in\gf_{3^n}}\chi((z-1-u)(z-1+u)(z+1+\sqrt{1-u^2})))\\
		=&2-\chi(\varphi(u))\sum_{z\in\gf_{3^n}}\chi((z-1-u)(z-1+u)(z+1+\sqrt{1-u^2})).
	\end{align*}
	Note that
	\begin{align*}
		&\sum_{z\in\gf_{3^n}}\chi((z-1-u)(z-1+u)(z+1+\sqrt{1-u^2})) \\
		=&\sum_{z\in\gf_{3^n}}\chi(z(z+1-\sqrt{1-u^2}-u)(z+1-\sqrt{1-u^2}+u))\\
		=&\sum_{z\in\gf_{3^n}}\chi(z(z^2+(\sqrt{1-u^2}-1)z+(u^2-1+\sqrt{1-u^2})))\\
		 =&\chi(\sqrt{1-u^2}-1)\sum_{z\in\gf_{3^n}}\chi(z(z^2+z+\frac{1-u^2+\sqrt{1-u^2}}{u^2})).
	\end{align*}
	
	Then
	 \[\sum_{z\in\gf_{3^n}}\chi(g_2(z)g_3(z)g_4(z)g_5(z))=2+\sum_{z\in\gf_{3^n}}\chi(z(z^2+z+\frac{1-u^2+\sqrt{1-u^2}}{u^2})).\]
	Moreover,
	\begin{align*}
		 &\sum_{z\in\gf_{3^n}}\chi(g_1(z)g_2(z)g_3(z)g_4(z)g_5(z))\\
		 &=\chi(\varphi(u))\chi(u+1)\sum_{z\in\gf_{3^n}}\chi(z^3(z-1-u)(z-1+u)(z+1+\sqrt{1-u^2})(z+1-\sqrt{1-u^2})^2)\\
		&=-\sum_{z\in\gf_{3^n},z\ne \sqrt{1-u^2}-1}\chi(z(z-1-u)(z-1+u)(z+1+\sqrt{1-u^2})) \\
		&=-(-\chi((\sqrt{1-u^2}-1)(\sqrt{1-u^2}+1-u)(\sqrt{1-u^2}+1+u)(-\sqrt{1-u^2})) \\
		&\quad+\sum_{z\in\gf_{3^n}}\chi(z(z-1-u)(z-1+u)(z+1+\sqrt{1-u^2}))) \\
		&=-1-\sum_{z\in\gf_{3^n}}\chi(z(z-1-u)(z-1+u)(z+1+\sqrt{1-u^2})).\\
	\end{align*}
	
	Note that
	\begin{align*}
		&\sum_{z\in\gf_{3^n}}\chi(z(z-1-u)(z-1+u)(z+1+\sqrt{1-u^2})) \\
		 =&\sum_{t\in\gf_{3^n}^*}\chi(\frac{1}{t}(\frac{1}{t}-1-u)(\frac{1}{t}-1+u)(\frac{1}{t}+1+\sqrt{1-u^2}))\\
		=&\sum_{t\in\gf_{3^n}^*}\chi((1-(1+u)t)(1-(1-u)t)(1+(1+\sqrt{1-u^2})t))\\
		=&-1+\sum_{t\in\gf_{3^n}}\chi((1-(1+u)t)(1-(1-u)t)(1+(1+\sqrt{1-u^2})t))\\
		 =&-1+\chi(1+u)\chi(1-u)\chi(1+\sqrt{1-u^2})\sum_{t\in\gf_{3^n}}\chi((t-\frac{1}{1+u})(t-\frac{1}{1-u})(t+\frac{1}{1+\sqrt{1-u^2}}))\\
		 =&-1+\chi(1+\sqrt{1-u^2})\sum_{t\in\gf_{3^n}}\chi(t(t-\frac{1}{1+\sqrt{1-u^2}}-\frac{1}{1+u})(t-\frac{1}{1+\sqrt{1-u^2}}-\frac{1}{1-u}))\\
		 =&-1+\chi(1+\sqrt{1-u^2})\sum_{t\in\gf_{3^n}}\chi(t(t^2+\frac{1-(1-u^2)\sqrt{1-u^2}}{u^2(1-u^2)}t+\frac{-u^4-u^2-1+\sqrt{1-u^2}}{u^4(1-u^2)}))\\
		=&-1+\chi(1+\sqrt{1-u^2})\chi(\frac{1-(1-u^2)\sqrt{1-u^2}}{u^2(1-u^2)})\\
		 &\cdot\sum_{t\in\gf_{3^n}}\chi(t(t^2+t+\frac{-u^4-u^2-1+\sqrt{1-u^2}}{u^4(1-u^2)}/(\frac{1-(1-u^2)\sqrt{1-u^2}}{u^2(1-u^2)})^2))\\
		=&-1+\sum_{t\in\gf_{3^n}}\chi(t(t^2+t+(\frac{\sqrt{1-u^2}}{1-\sqrt{1-u^2}})^3))\\
		=&-1+\sum_{t\in\gf_{3^n}}\chi(t(t^2+t+\frac{\sqrt{1-u^2}}{1-\sqrt{1-u^2}}))\\
		=&-1+\sum_{t\in\gf_{3^n}}\chi(t(t^2+t+\frac{1-u^2+\sqrt{1-u^2}}{u^2})).
	\end{align*}
	Then
	 \[\sum_{z\in\gf_{3^n}}\chi(g_1(z)g_2(z)g_3(z)g_4(z)g_5(z))=-\sum_{t\in\gf_{3^n}}\chi(t(t^2+t+\frac{1-u^2+\sqrt{1-u^2}}{u^2})).\]
	The desired result follows.
\end{proof}

\begin{lemma}When $u \in \mathcal{U}_0\backslash\gf_3$, we have
	\[ \sum_{z\in\gf_{3^n}}\chi(g_2(z)g_5(z))+\sum_{z\in\gf_{3^n}}\chi(g_3(z)g_4(z)g_5(z))=\chi(\sqrt{1-u^2}+1+u),\]
	and
		\[ \sum_{z\in\gf_{3^n}}\chi(g_3(z)g_5(z))+\sum_{z\in\gf_{3^n}}\chi(g_2(z)g_4(z)g_5(z))=\chi(\sqrt{1-u^2}+1-u).\]
\end{lemma}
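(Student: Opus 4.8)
The plan is to prove the first identity in full and then obtain the second for free. The involution $u\mapsto-u$ fixes $\sqrt{1-u^2}$ (the distinguished square root depends only on $u^2$), interchanges $g_2$ and $g_3$, leaves $g_4,g_5$ unchanged, and preserves $\mathcal{U}_0\setminus\gf_3$; hence it carries the first identity exactly onto the second, and no separate work is needed. Throughout write $s=\sqrt{1-u^2}$ and set $S_1=\sum_z\chi(g_2(z)g_5(z))$, $S_2=\sum_z\chi(g_3(z)g_4(z)g_5(z))$.

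First I would strip off the common structure. Since $g_5(x)=-\varphi(u)(x+1-s)$ and $g_4(x)=(x+1+s)(x+1-s)$, the product $g_3g_4g_5$ carries the square factor $(x+1-s)^2$; thus for $z\neq s-1$ one has $\chi(g_3(z)g_4(z)g_5(z))=\chi(-\varphi(u))\chi(z(z-1+u)(z+1+s))$, while $z=s-1$ contributes $0$, and likewise $\chi(g_2(z)g_5(z))=\chi(-\varphi(u))\chi(z(z-1-u)(z+1-s))$. Writing $A=\sum_z\chi(z(z-1-u)(z+1-s))$ and $B=\sum_z\chi(z(z-1+u)(z+1+s))$ and restoring the one excluded point reduces the claim to
\[ S_1+S_2=\chi(-\varphi(u))(A+B)+\chi\bigl(-(1+s)(s-1)\bigr)\chi(s+u+1). \]
Because $-(1+s)(s-1)=1-s^2=u^2$ is a square, the last term is exactly $\chi(s+1+u)$, so everything comes down to showing $A+B=0$.

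Next I would evaluate $A$ and $B$ by the ratio-and-discriminant device already used in the earlier lemmas: for $z\neq0$ one has $\chi(zW)=\chi(W/z)$, and setting $X$ equal to the resulting rational function and counting preimages via $1+\chi(\Delta_X)$ converts each cubic sum into another cubic sum. After scaling by $s+u$ (nonzero, since $s+u=0$ would force $u^2=-1$, impossible for odd $n$) both land in the normal form $N(c):=\sum_t\chi\bigl(t(t^2-t+c)\bigr)$, giving $A=\chi(s+u)N(c_A)$ and $B=-\chi(s+u)N(c_B)$ with
\[ c_A=\frac{(1+s)(u-1)}{(s+u)^2},\qquad c_B=\frac{(1+u)(s-1)}{(s+u)^2}. \]
A short computation using $(s+u)^2=1-su$ shows the two numerators add up to $(s+u)^2$, i.e. $c_A+c_B=1$.

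The heart of the argument — and the step I expect to be the main obstacle — is then the relation $N(c)=N(1-c)$. I would establish it by applying the same ratio substitution a second time: writing $N(c)=\sum_{t\neq0}\chi\bigl((t^2-t+c)/t\bigr)$, putting $X=(t^2-t+c)/t$ and counting preimages gives
\[ N(c)=\sum_X\chi\bigl(X((X+1)^2-c)\bigr)=\sum_X\chi\bigl(X(X^2-X+(1-c))\bigr)=N(1-c), \]
the middle equality using $(X+1)^2=X^2-X+1$ in characteristic $3$ (conceptually, $y^2=x(x^2-x+c)$ and $y^2=x(x^2-x+1-c)$ are $2$-isogenous, hence have equal point counts over $\gf_{3^n}$). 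Combined with $c_A+c_B=1$ this yields $N(c_A)=N(c_B)$, so $A+B=\chi(s+u)\bigl(N(c_A)-N(c_B)\bigr)=0$ and $S_1+S_2=\chi(s+1+u)$, as claimed. The delicate part of the write-up will be the careful bookkeeping of the quadratic-character prefactors and of the boundary points dropped in passing between the polynomial and rational forms — precisely where the technicalities of the earlier lemmas resurface.
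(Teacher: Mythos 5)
Your proposal is correct. It follows the paper's reduction exactly: strip the constant from $g_5$ and the repeated factor $(z+1-\sqrt{1-u^2})^2$ from $g_4g_5$, account for the single excluded point $z=\sqrt{1-u^2}-1$ (which produces the right-hand side $\chi(\sqrt{1-u^2}+1+u)$), and reduce everything to the cancellation $A+B=0$ of the two cubic sums. The computational engine is also the same, namely the ratio substitution with preimages counted by $1+\chi(\Delta)$. Where you genuinely differ is in how the cancellation is organized. The paper applies that substitution once, directly to $A=\sum_z\chi(z(z-1-u)(z+1-\sqrt{1-u^2}))$, and the resulting cubic is (after $t\mapsto -t$) literally the cubic defining $B$, so $A=-B$ in one step. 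You instead normalize both sums to the form $N(c)=\sum_t\chi(t(t^2-t+c))$, verify $c_A+c_B=1$ via $(s+u)^2=1-su$ with $s=\sqrt{1-u^2}$, and invoke the general identity $N(c)=N(1-c)$ --- which is the same substitution in disguise. Your route costs an extra normalization and the check $c_A+c_B=1$ (both of which are correct, as is the nonvanishing of $s+u$), but it buys a reusable symmetry that explains \emph{why} the cancellation occurs; moreover, your observation that $u\mapsto -u$ swaps $g_2\leftrightarrow g_3$ while fixing $g_4$, $g_5$, $\sqrt{1-u^2}$ and the set $\mathcal{U}_0\setminus\gf_3$ gives the second identity genuinely for free, where the paper only asserts its proof is ``very similar.'' One small point for the write-up: in deriving $N(c)=N(1-c)$ you should note $c\neq 0$ (true here, since $u\neq 0,-1$ and $s\neq 1$) so that $t=0$ is never a root of $t^2-(X+1)t+c$ and the preimage count $1+\chi((X+1)^2-c)$ is exact.
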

\begin{proof} We only prove the first identity, as the proof of the second is very similar. 
	\[\sum_{z\in\gf_{3^n}}\chi(g_2(z)g_5(z))=-\chi(\varphi(u))\sum_{z\in\gf_{3^n}} \chi(z(z-1-u)(z+1-\sqrt{1-u^2})).\]
	Note that
	\begin{align*}
		&\sum_{z\in\gf_{3^n}} \chi(z(z-1-u)(z+1-\sqrt{1-u^2}))\\
		=&\sum_{z\in\gf_{3^n}} \chi(z(z^2-(u+\sqrt{1-u^2})z-(u+1)(1-\sqrt{1-u^2})))\\
		=&\sum_{z\in\gf_{3^n}^*} \chi(\frac{z^2-(u+\sqrt{1-u^2})z-(u+1)(1-\sqrt{1-u^2})}{z}).
	\end{align*}
	Let $t=\frac{z^2-(u+\sqrt{1-u^2})z-(u+1)(1-\sqrt{1-u^2})}{z}$. Then
	\[z^2-(t+u+\sqrt{1-u^2})z-(u+1)(1-\sqrt{1-u^2})=0,\]
	and $\Delta_t=(t+u+\sqrt{1-u^2})^2+(u+1)(1-\sqrt{1-u^2})=t^2-(u+\sqrt{1-u^2})t+(u-1)(1+\sqrt{1-u^2})$. \\
	We have
	\begin{align*}
		&\sum_{z\in\gf_{3^n}} \chi(z(z-1-u)(z+1-\sqrt{1-u^2}))\\
		=&\sum_{t\in\gf_{3^n}}\chi(t)(1+\chi(\Delta_t))\\
		=&\sum_{t\in\gf_{3^n}}\chi(t(t^2-(u+\sqrt{1-u^2})t+(u-1)(1+\sqrt{1-u^2})))\\
		=&-\sum_{t\in\gf_{3^n}}\chi(t(t^2+(u+\sqrt{1-u^2})t+(u-1)(1+\sqrt{1-u^2}))),
	\end{align*}
	then
\[\sum_{z\in\gf_{3^n}}\chi(g_2(z)g_5(z))=\chi(\varphi(u))\sum_{t\in\gf_{3^n}}\chi(t(t^2+(u+\sqrt{1-u^2})t+(u-1)(1+\sqrt{1-u^2}))).\]
	On the other hand,
	\begin{align*}
		 &\sum_{z\in\gf_{3^n}}\chi(g_3(z)g_4(z)g_5(z))\\
		 &=-\chi(\varphi(u))\sum_{z\in\gf_{3^n}}\chi(z(z-1+u)(z+1+\sqrt{1-u^2})(z+1-\sqrt{1-u^2})^2)\\
		&=-\chi(\varphi(u))\sum_{z\in\gf_{3^n},z\neq \sqrt{1-u^2}-1}\chi(z(z-1+u)(z+1+\sqrt{1-u^2}))\\
		&=-\chi(\varphi(u))(-\chi((\sqrt{1-u^2}-1)(\sqrt{1-u^2}+1+u)(-\sqrt{1-u^2}))\\
		&\quad+\sum_{z\in\gf_{3^n}}\chi(z(z-1+u)(z+1+\sqrt{1-u^2})))\\
		 &=\chi(\sqrt{1-u^2}+1+u)-\chi(\varphi(u))\sum_{z\in\gf_{3^n}}\chi(z(z^2+(u+\sqrt{1-u^2})z+(u-1)(1+\sqrt{1-u^2}))).
	\end{align*}
	Hence, the first identity ensues. 
\end{proof}

\section{On the number of solutions of the differential equation of $f_u$}\label{sec:DSpre}
Let $n$ be a positive odd integer, $u\in\gf_{3^n}$. Recall that the Ness-Helleseth function is defined as
\[f_u(x)=ux^{d_1}+x^{d_2},\]
where $d_1=\frac{3^n-1}{2}-1$ and $d_2=3^n-2$. To determine the differential spectrum of $f_u(x)$, attention should be given to the differential equation
\begin{equation}\label{DE}
	\begin{aligned}
		 \mathbb{D}_af_u(x)=u(x+a)^{\frac{3^n-1}{2}-1}+(x+a)^{3^n-2}-ux^{\frac{3^n-1}{2}-1}-x^{3^n-2}=b,
	\end{aligned}
\end{equation}
where $(a,b)\in\gf_{3^n}^*\times\gf_{3^n}$. This equation was studied in \cite{FurXia}. For the sake of completeness, we give some details here. 

We denote by $N(a,b)$, $N_1(a,b)$ and $N_2(a,b)$ the numbers of solutions of (\ref{DE}) in the sets $\gf_{3^n}$, $\{0,-a\}$ and $\gf_{3^n}\backslash\{0,-a\}$ respectively. Then $$N(a,b)=N_1(a,b)+N_2(a,b).$$
The following lemma is given in  \cite{FurXia}.
\begin{lemma}\cite[Lemma 3]{FurXia}\label{n1ab}
	The value of $N_1(a,b)$ is determined as follows:
	$$N_1(a,b)=
	\left\{
             \begin{array}{ll}
             2, & if\;b=a^{-1}\;and\;u=0, \\
             1, & if\;b=a^{-1}(1\pm u\chi(a))\;and\;u \ne 0, \\
             0, & otherwise.
             \end{array}
	\right.
	$$
	When $b = 0$, the value of $N_2(a,b)$ is given as
	$$N_2(a,0)=
	\left\{
             \begin{array}{ll}
             \frac{3^n-3}{4}, & if\;u\in\{\pm 1\}, \\
             0, & if\;u \in \mathcal{U}_0\backslash\{\pm1\}. \\
             \end{array}
	\right.
	$$
\end{lemma}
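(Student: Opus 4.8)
The plan is to begin by replacing the exponential expression for $f_u$ by an explicit closed form. Since $n$ is odd, for $x\in\gf_{3^n}^*$ we have $x^{\frac{3^n-1}{2}}=\chi(x)$ and $x^{3^n-2}=x^{-1}$, so that $f_u(x)=x^{-1}(u\chi(x)+1)=\frac{1+u\chi(x)}{x}$, while $f_u(0)=0$. This single identity is what makes the whole computation tractable: it turns the derivative $\mathbb{D}_af_u$ into a rational expression whose behaviour is governed entirely by the two quadratic characters $\chi(x)$ and $\chi(x+a)$.

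For $N_1(a,b)$, I would simply substitute the two special points into $\mathbb{D}_af_u$. At $x=0$ we get $\mathbb{D}_af_u(0)=f_u(a)=a^{-1}(1+u\chi(a))$, and at $x=-a$, using $\chi(-1)=-1$ and $(-a)^{-1}=-a^{-1}$, we get $\mathbb{D}_af_u(-a)=-f_u(-a)=a^{-1}(1-u\chi(a))$. Thus $0$ (resp.\ $-a$) solves (\ref{DE}) precisely when $b=a^{-1}(1+u\chi(a))$ (resp.\ $b=a^{-1}(1-u\chi(a))$). If $u=0$ these two target values coincide and equal $a^{-1}$, giving $N_1=2$ exactly when $b=a^{-1}$; if $u\neq 0$ they are distinct (their difference is $2u\chi(a)a^{-1}\neq0$), so each value $a^{-1}(1\pm u\chi(a))$ is attained by exactly one of the two points, giving $N_1=1$, and $N_1=0$ otherwise.

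For $N_2(a,0)$, the equation becomes $\frac{1+u\chi(x+a)}{x+a}=\frac{1+u\chi(x)}{x}$ on $\gf_{3^n}\setminus\{0,-a\}$, and I would split according to the possibilities for the pair $(\chi(x),\chi(x+a))$. When $\chi(x)=\chi(x+a)$ the equation forces $1+u\chi(x)=0$, which is possible only for $u=\pm1$; when $\chi(x)=-\chi(x+a)$ it reduces to a linear equation with a single candidate root $x$ that can be written explicitly, and demanding that this root actually carry the prescribed characters of $x$ and $x+a$ collapses, after using multiplicativity of $\chi$ and $\chi(-1)=-1$, to the condition $\chi(u+1)=\chi(u-1)$. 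Since $u\in\mathcal{U}_0$ means exactly $\chi(u+1)\neq\chi(u-1)$, the opposite-sign case contributes nothing, and for $u\notin\{\pm1\}$ the equal-sign case is vacuous too, so $N_2(a,0)=0$.

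Finally, for $u=\pm1$ the only contribution comes from the equal-sign case, namely all $x$ with $\chi(x)=\chi(x+a)=-u$ (one checks that the opposite-sign candidates degenerate onto the excluded points $0$ and $-a$, so there is no double counting). I would count these by the character-sum expansion $N_2(a,0)=\frac14\sum_{x\neq0,-a}(1-u\chi(x))(1-u\chi(x+a))$, in which $\sum_{x}\chi(x)=0$, the boundary corrections involving $\chi(\pm a)$ cancel because $\chi(-a)=-\chi(a)$, and the cross term $\sum_{x}\chi(x(x+a))=-1$ follows from the quadratic character-sum formula (Lemma~1) applied to $x^2+ax$, whose discriminant $a^2$ is nonzero. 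This yields $N_2(a,0)=\frac14(3^n-3)$. The main obstacle, and the only step requiring genuine care, is the sign bookkeeping in the $N_2$ analysis: one must confirm that the opposite-sign candidate roots land exactly on $\{0,-a\}$ when $u=\pm1$ and are incompatible with their required characters when $u\in\mathcal{U}_0$; the remainder is routine substitution together with one application of the quadratic character-sum formula.
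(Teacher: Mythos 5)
Your proposal is correct. Note that the paper itself does not prove this lemma --- it is quoted verbatim from \cite[Lemma 3]{FurXia} --- so there is no in-paper argument to compare against; your route via the closed form $f_u(x)=\frac{1+u\chi(x)}{x}$ is the natural one and is exactly the reduction the paper does carry out for $b\neq 0$ in equation (3), specialized to $b=0$ (where the quadratic degenerates to $-u(\tau_a-\tau_0)x+a(u\tau_0+1)=0$, matching your two cases). All the individual claims check out: the values $\mathbb{D}_af_u(0)=a^{-1}(1+u\chi(a))$ and $\mathbb{D}_af_u(-a)=a^{-1}(1-u\chi(a))$, the degeneration of the opposite-sign candidates onto $\{0,-a\}$ when $u=\pm1$, the obstruction $\chi(1-u^2)=-1\iff\chi(u+1)=\chi(u-1)$ ruling out the opposite-sign case on $\mathcal{U}_0$, and the count $\frac14\left((3^n-1)-2\right)=\frac{3^n-3}{4}$. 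The only cosmetic point is that for $u=0$ the opposite-sign case does not literally ``reduce to a linear equation with a single candidate root'' (the linear coefficient vanishes and the equation reads $a=0$), but the conclusion of no contribution is then immediate, so nothing is lost.
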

What needs to be calculated is $N_2(a,b)$ for $b\in\gf_{3^n}^*$. When $x\notin\{0, -a\}$, the differential equation is equivalent to
$$u(x+a)^{\frac{3^n-1}{2}}x+x-ux^{\frac{3^n-1}{2}}(x+a)-(x+a)=bx(x+a),$$
which can be simplified as
\begin{equation}
	\begin{aligned}
		bx^2+(ba-u(\tau_a-\tau_0))x+a(u\tau_0+1)=0, \label{QE}
	\end{aligned}
\end{equation}
where $\tau_a=\chi(x+a)$ and $\tau_0=\chi(x)$.
The discussion of the solutions of the quadratic equation above when $b\neq0$ has been clarified by Helleseth in \cite{ORG} and results are listed in Table \ref{tab:FCases}, in which $x_1$ and $x_2$ denote the two solutions of the quadratic equations in each case.
\begin{table}[!htp]
    \centering
    \caption{List of Equations and Solutions}
    \begin{tabular}{|c|c|c|c|c|}
        \hline
        Case  & I &  II &  III &  IV \\
        \hline
        $(\tau_a,\tau_0)$ & $(1,1)$ & $(1,-1)$ & $(-1,1)$ & $(-1,-1)$\\
        \hline
        $Equation$ & $bx^2+abx+a(u+1)=0$ & $bx^2+(u+ab)x-a(u-1)=0$ & $bx^2-(u-ab)x+a(u+1)=0$ & $bx^2+abx-a(u-1)=0$ \\
        \hline
		$x$ & $a\pm a\sqrt{1-\frac{u+1}{ab}}$ & $-\frac{1}{b}[-u-ab\pm \sqrt{u^2+a^2b^2-ab}]$ & $-\frac{1}{b}[u-ab\mp \sqrt{u^2+a^2b^2-ab}]$ & $a\pm a\sqrt{1+\frac{u-1}{ab}}$ \\
		\hline
		$x+a$ & $-a\pm a\sqrt{1-\frac{u+1}{ab}}$ & $-\frac{1}{b}[-u+ab\pm \sqrt{u^2+a^2b^2-ab}]$ & $-\frac{1}{b}[u+ab\mp \sqrt{u^2+a^2b^2-ab}]$ & $-a\pm a\sqrt{1+\frac{u-1}{ab}}$ \\
		\hline
		$x_1x_2$ & $\frac{a(u+1)}{b}$ & $-\frac{a(u-1)}{b}$ & $\frac{a(u+1)}{b}$ & $-\frac{a(u-1)}{b}$ \\
		\hline
		$x(x+a)$ & $-\frac{a(u+1)}{b}$ & $\frac{-u^2-ab\pm u\sqrt{u^2+a^2b^2-ab}}{b^2}$ & $\frac{-u^2-ab\mp u\sqrt{u^2+a^2b^2-ab}}{b^2}$ & $\frac{a(u-1)}{b}$ \\
		\hline
    \end{tabular}
	\label{tab:FCases}
 \end{table}

Drawing upon the information in Table \ref{tab:FCases}, the subsequent pivotal results have been unveiled. Note that the term \textit{a desired solution} refers to a solution of a certain quadratic equation in any case in Table \ref{tab:FCases} that indeed satisfies the corresponding condition on $(\tau_a,\tau_0)$. In the rest of this paper, we always assume that $u \in \mathcal{U}_0 \backslash \gf_3=\{u\in\gf_{3^n}\setminus \gf_3 | \chi(u+1)\neq\chi(u-1)\}$, then $\chi(1-u^2)=1$ and $(a,b) \in \gf_{3^n}^* \times \gf_{3^n}^*$. 
For the sake of brevity and clarity, for such fixed $u$ and $(a,b)$, we denote by $\mathrm{N_I}$ (respectively $\mathrm{N_{II}}$, $\mathrm{N_{III}}$ and $\mathrm{N_{IV}}$) the number of desired solutions in Case I (respectively, Case II, Case III and Case IV). Consequently, $N_2(a,b)=\mathrm{N_I}+\mathrm{N_{II}}+\mathrm{N_{III}}+\mathrm{N_{IV}}$. We discuss the values of $\mathrm{N_I}$, $\mathrm{N_{II}}$, $\mathrm{N_{III}}$ and $\mathrm{N_{IV}}$ as follows. It was proved in \cite{FurXia} that $\mathrm{N_I}\leq 1$ and $\mathrm{N_{IV}}\leq 1$. Moreover, the following proposition was proposed.
\begin{proposition}(\cite{FurXia})\label{n14} We have,
	\begin{enumerate}
		\item $\mathrm{N_I}=1$ if and only if $$\chi(1-\frac{u+1}{ab})=1\ and\ \chi(\frac{u+1}{ab})=-1.$$
		\item $\mathrm{N_{IV}}=1$ if and only if $$\chi(1+\frac{u-1}{ab})=1\ and\ \chi(\frac{u+1}{ab})=-1.$$
	\end{enumerate}
\end{proposition}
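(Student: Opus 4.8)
The plan is to analyze Cases I and IV of Table~\ref{tab:FCases} directly. Since it is already known that $\mathrm{N_I}\le 1$ and $\mathrm{N_{IV}}\le 1$, it suffices in each case to determine exactly when at least one desired solution exists. Consider Case I first, where $(\tau_a,\tau_0)=(1,1)$: a desired solution is a root $x$ of $bx^2+abx+a(u+1)=0$ with $\chi(x)=\chi(x+a)=1$. Because $u\in\mathcal{U}_0\setminus\gf_3$ forces $u\ne-1$, we have $u+1\ne0$, so $\frac{u+1}{ab}\ne0$. The roots lie in $\gf_{3^n}$ precisely when $\chi\!\left(1-\frac{u+1}{ab}\right)\in\{0,1\}$; writing $s$ for a square root of $1-\frac{u+1}{ab}$, they are $x_\pm=a(1\pm s)$, and since $2=-1$ in characteristic $3$ we get $x_\pm+a=-a(1\mp s)$. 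Note that $\frac{u+1}{ab}\ne0$ gives $s\ne\pm1$, so $\chi(1\pm s)\in\{\pm1\}$ are well defined.

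First I would evaluate the relevant characters. Putting $\alpha=\chi(a)$ and using $\chi(-1)=-1$ (valid since $n$ is odd),
\[
\chi(x_+)=\alpha\,\chi(1+s),\qquad \chi(x_++a)=-\alpha\,\chi(1-s),
\]
\[
\chi(x_-)=\alpha\,\chi(1-s),\qquad \chi(x_-+a)=-\alpha\,\chi(1+s).
\]
Requiring $\chi(x_+)=\chi(x_++a)=1$ forces $\chi(1+s)=\alpha$ and $\chi(1-s)=-\alpha$, and requiring $\chi(x_-)=\chi(x_-+a)=1$ forces $\chi(1-s)=\alpha$ and $\chi(1+s)=-\alpha$; in either case $\chi(1+s)$ and $\chi(1-s)$ have opposite signs, i.e. $\chi(1-s^2)=-1$. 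Conversely, if $\chi(1-s^2)=-1$ then $\{\chi(1+s),\chi(1-s)\}=\{\alpha,-\alpha\}$ and exactly one of these two patterns holds, so precisely one desired solution exists.

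Hence a desired solution exists iff $\chi(1-s^2)=-1$, i.e. $\chi\!\left(\frac{u+1}{ab}\right)=-1$ since $1-s^2=\frac{u+1}{ab}$. This forces $\frac{u+1}{ab}$ to be a nonsquare, so $1-\frac{u+1}{ab}\ne0$ and the degenerate double-root case $s=0$ (where $x=a$ and $\chi(x+a)=-\chi(x)$ rule out a desired solution) is automatically excluded; the existence of the roots then sharpens to $\chi\!\left(1-\frac{u+1}{ab}\right)=1$. Together these give $\mathrm{N_I}=1 \iff \chi\!\left(1-\frac{u+1}{ab}\right)=1 \text{ and } \chi\!\left(\frac{u+1}{ab}\right)=-1$, which is part~(1).

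For Case IV I would repeat the argument verbatim with $(\tau_a,\tau_0)=(-1,-1)$: a desired root of $bx^2+abx-a(u-1)=0$ satisfies $\chi(x)=\chi(x+a)=-1$, the roots are $a(1\pm s')$ with $s'^2=1+\frac{u-1}{ab}$, and the same character computation shows a desired solution exists iff $\chi(1-s'^2)=-1$. Now $1-s'^2=\frac{1-u}{ab}$, so the raw condition is $\chi\!\left(\frac{1-u}{ab}\right)=-1$. The one genuinely new ingredient, and the step I expect to be the main obstacle to stating the result cleanly, is rewriting this in the form of the proposition: since $u\in\mathcal{U}_0$ gives $\chi(1-u^2)=1$, we have $\chi(1-u)=\chi(1+u)$, whence $\chi\!\left(\frac{1-u}{ab}\right)=\chi\!\left(\frac{u+1}{ab}\right)$. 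Combined with the existence condition $\chi\!\left(1+\frac{u-1}{ab}\right)=1$ (the double-root case being excluded exactly as before), this yields part~(2). Apart from this substitution, the only care needed is the characteristic-$3$ sign bookkeeping via $2=-1$ and $\chi(-1)=-1$.
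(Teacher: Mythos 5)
Your argument is correct. Note that the paper itself states this proposition without proof, simply citing \cite{FurXia}, so there is no internal proof to compare against; your direct verification from the explicit roots $x_\pm=a(1\pm s)$ in Table \ref{tab:FCases} --- reducing both subcases to the single condition $\chi(1-s^2)=-1$ and then using $\chi(1-u)=\chi(1+u)$ (a consequence of $\chi(1-u^2)=1$ for $u\in\mathcal{U}_0\setminus\gf_3$) to convert $\chi\!\left(\frac{1-u}{ab}\right)=-1$ into $\chi\!\left(\frac{u+1}{ab}\right)=-1$ in Case IV --- is sound, and correctly handles the degenerate double-root case, which is automatically excluded by the nonsquare condition.
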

Since $\chi(\frac{u+1}{ab})\neq 0$, the following corollary can be deduced immediately.
\begin{corollary} We have,
	\begin{enumerate}
		\item $\mathrm{N_I}=0$ if one of the subsequent three disjoint conditions is met:
		\begin{enumerate}
			\item $\chi(1-\frac{u+1}{ab})=0.$
			\item $\chi(1-\frac{u+1}{ab})=-1.$
			\item $\chi(1-\frac{u+1}{ab})=1$ and $\chi(\frac{u+1}{ab})=1.$
		\end{enumerate}
		\item $\mathrm{N_{IV}}=0$ if one of the subsequent three disjoint conditions is met:
		\begin{enumerate}
			\item $\chi(1+\frac{u-1}{ab})=0.$
			\item $\chi(1+\frac{u-1}{ab})=-1.$
			\item $\chi(1+\frac{u-1}{ab})=1$ and $\chi(\frac{u+1}{ab})=1.$
		\end{enumerate}
	\end{enumerate}
\end{corollary}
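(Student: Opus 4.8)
The plan is to obtain the Corollary purely as the logical contrapositive of Proposition \ref{n14}, exploiting the fact that $\mathrm{N_I}$ and $\mathrm{N_{IV}}$ are each bounded above by $1$. Since it was shown in \cite{FurXia} that $\mathrm{N_I}\leq 1$, the event $\mathrm{N_I}=0$ is exactly the complement of $\mathrm{N_I}=1$. By part (1) of Proposition \ref{n14}, $\mathrm{N_I}=1$ holds if and only if both $\chi(1-\frac{u+1}{ab})=1$ and $\chi(\frac{u+1}{ab})=-1$. Hence $\mathrm{N_I}=0$ holds if and only if at least one of these two equalities fails, i.e. $\chi(1-\frac{u+1}{ab})\neq 1$ or $\chi(\frac{u+1}{ab})\neq -1$.

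First I would record the only arithmetic input needed, namely that $\chi(\frac{u+1}{ab})\neq 0$. This holds because $u\in\mathcal{U}_0\setminus\gf_3$ forces $u\neq -1$, so $u+1\neq 0$, while $(a,b)\in\gf_{3^n}^*\times\gf_{3^n}^*$ gives $ab\neq 0$; thus $\frac{u+1}{ab}\in\gf_{3^n}^*$ and its character is $\pm 1$. Consequently $\chi(\frac{u+1}{ab})\neq -1$ is equivalent to $\chi(\frac{u+1}{ab})=1$.

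Next I would reorganize the disjunction into the three listed cases by splitting on the value of $\chi(1-\frac{u+1}{ab})\in\{-1,0,1\}$. If this value is $0$ or $-1$, then the first equality already fails, giving conditions (a) and (b) and forcing $\mathrm{N_I}=0$ regardless of $\chi(\frac{u+1}{ab})$. If instead $\chi(1-\frac{u+1}{ab})=1$, then the first equality holds, so $\mathrm{N_I}=0$ requires the second equality to fail, i.e. $\chi(\frac{u+1}{ab})\neq -1$, which by the previous paragraph means $\chi(\frac{u+1}{ab})=1$; this is condition (c). These three cases are pairwise disjoint because they are distinguished by the three possible values of $\chi(1-\frac{u+1}{ab})$, and they are exhaustive because every failure of the conjunction falls into exactly one of them. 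The statement for $\mathrm{N_{IV}}$ is identical after replacing part (1) by part (2) of Proposition \ref{n14} and $\chi(1-\frac{u+1}{ab})$ by $\chi(1+\frac{u-1}{ab})$, noting that the quantity $\chi(\frac{u+1}{ab})$ appearing in the second coordinate is the same in both parts.

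There is essentially no genuine obstacle here: the Corollary is a formal negation of an ``if and only if'' statement, and the only substantive point is the observation $\chi(\frac{u+1}{ab})\neq 0$, which legitimizes passing from $\chi(\frac{u+1}{ab})\neq -1$ to $\chi(\frac{u+1}{ab})=1$ in case (c). The remaining care is bookkeeping: verifying that the three enumerated conditions are mutually exclusive and together exhaust the complement of the condition in Proposition \ref{n14}.
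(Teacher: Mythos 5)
Your proposal is correct and matches the paper's reasoning: the paper likewise deduces the corollary immediately from Proposition \ref{n14} by negation, using exactly the observation that $\chi(\frac{u+1}{ab})\neq 0$ (together with $\mathrm{N_I},\mathrm{N_{IV}}\leq 1$) to convert $\chi(\frac{u+1}{ab})\neq -1$ into $\chi(\frac{u+1}{ab})=1$. Your case split on the three possible values of $\chi(1-\frac{u+1}{ab})$ is just the explicit bookkeeping the paper leaves implicit.
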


As has been demonstrated in \cite{FurXia}, if $x$ is a solution of the quadratic equation in Case II, then $-(x+a)$ is a solution of the quadratic equation in Case III, and vice versa. Besides, $x$ and $-(x+a)$ cannot be desired solutions simultaneously. Therefore, it can be concluded that $\mathrm{N_{II}}+\mathrm{N_{III}}\leq 2$. More specifically, the following proposition showed the sufficient and necessary condition of  $\mathrm{N_{II}}+\mathrm{N_{III}}=2$.
\begin{proposition}(\cite{FurXia})We have,
	 $\mathrm{N_{II}}+\mathrm{N_{III}}=2$ if and only if
	$$\left\{
             \begin{array}{ll}
             \chi(u^2+a^2b^2-ab)=1, \\
             \chi(-u^2-ab-ab\sqrt{1-u^2})=1. \\
             \end{array}
	\right.
	$$
\end{proposition}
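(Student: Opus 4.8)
The plan is to reduce the count $\mathrm{N_{II}}+\mathrm{N_{III}}$ to a single quadratic-character question about the two roots of the Case II quadratic, and then to resolve that question with an elementary criterion for when both roots of a quadratic are squares. First I would exploit the pairing recalled just before the statement. Since $x\mapsto-(x+a)$ is a bijection between the roots of the Case II and Case III quadratics, and since a Case II root $x$ is a desired solution exactly when $(\chi(x+a),\chi(x))=(1,-1)$ while its partner $-(x+a)$ is a desired Case III solution exactly when $(\chi(x+a),\chi(x))=(-1,1)$, both patterns force $\chi(x(x+a))=-1$ and are mutually exclusive. Because every root of the Case II quadratic lies in $\gf_{3^n}\setminus\{0,-a\}$ (as $u\notin\gf_3$), we have $x(x+a)\neq0$, so each root contributes exactly $1$ to $\mathrm{N_{II}}+\mathrm{N_{III}}$ when $\chi(x(x+a))=-1$ and $0$ when $\chi(x(x+a))=1$. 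Hence $\mathrm{N_{II}}+\mathrm{N_{III}}$ equals the number of roots $x$ of the Case II quadratic with $\chi(x(x+a))=-1$.

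Next I would analyse the two roots. The Case II quadratic $bx^2+(u+ab)x-a(u-1)=0$ has discriminant $D:=u^2+a^2b^2-ab$ in characteristic $3$, so it has two distinct roots iff $\chi(D)=1$, a double root iff $D=0$, and none iff $\chi(D)=-1$; in the last two cases $\mathrm{N_{II}}+\mathrm{N_{III}}\le1$, so $\mathrm{N_{II}}+\mathrm{N_{III}}=2$ forces the first stated condition $\chi(D)=1$. Reading the products $p_{1,2}=x(x+a)=\frac{-u^2-ab\pm u\sqrt{D}}{b^2}$ off Table~\ref{tab:FCases}, a short computation gives $p_1p_2=\frac{a^2(1-u^2)}{b^2}$, which is a nonzero square because $\chi(1-u^2)=1$ for $u\in\mathcal{U}_0\setminus\gf_3$. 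Therefore $\chi(p_1)=\chi(p_2)$, and when $\chi(D)=1$ the count $\mathrm{N_{II}}+\mathrm{N_{III}}$ equals $2$ if this common value is $-1$ and $0$ if it is $+1$. The whole problem thus reduces to deciding the common sign $\chi(p_1)=\chi(p_2)$.

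The crux, and the step I expect to be the main obstacle, is that $\chi(p_1)$ cannot be read off directly, since $p_1$ involves the irrational $\sqrt{D}$ rather than the $\sqrt{1-u^2}$ appearing in the statement. I would circumvent this with the elementary fact that for a quadratic $Y^2-SY+w^2$ with $w\neq0$ and distinct roots, both roots are squares in $\gf_{3^n}$ iff $S+2w$ and $S-2w$ are both squares; this is proved by writing the roots as $y_1^2,y_2^2$ with $y_1y_2=w$ and observing $(y_1\pm y_2)^2=S\pm2w$, and conversely by reconstructing $y_1,y_2$ from square roots of $S\pm2w$. Applying this to $Y_i:=b^2p_i$, whose sum is $S=u^2+ab$ and whose product is $P=a^2b^2(1-u^2)=w^2$ with $w:=ab\sqrt{1-u^2}$, and noting $S^2-4P=u^2D$, the hypothesis $\chi(D)=1$ makes $S+2w$ and $S-2w$ share the same quadratic character. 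Hence $\chi(p_1)=\chi(p_2)=-1$ iff $\chi(S-2w)=-1$.

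Finally I would translate this back. In characteristic $3$ one has $S-2w=u^2+ab+ab\sqrt{1-u^2}=-\bigl(-u^2-ab-ab\sqrt{1-u^2}\bigr)$, so $\chi(S-2w)=-1$ is exactly $\chi(-u^2-ab-ab\sqrt{1-u^2})=1$, the second stated condition. Combining the two reductions, $\mathrm{N_{II}}+\mathrm{N_{III}}=2$ holds iff $\chi(D)=1$ and $\chi(-u^2-ab-ab\sqrt{1-u^2})=1$, as claimed. The only genuinely delicate point is the character bookkeeping in the third paragraph: one must verify that $\chi(D)=1$ indeed prevents $S\pm2w$ from having opposite characters (and that neither vanishes, which follows since $(S+2w)(S-2w)=u^2D\neq0$), so that the single condition on $S-2w$ faithfully records the common value $\chi(p_1)=\chi(p_2)$.
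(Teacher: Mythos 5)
Your proof is correct and complete. Note that the paper itself offers no proof of this proposition --- it is quoted from \cite{FurXia} --- so there is no in-paper argument to compare against; the surrounding text only records the pairing $x\mapsto -(x+a)$ between the Case II and Case III quadratics and the resulting bound $\mathrm{N_{II}}+\mathrm{N_{III}}\leq 2$. Your first reduction uses exactly that pairing, and I verified the supporting details: the Case II quadratic has no root in $\{0,-a\}$ because $u\neq\pm1$, each root contributes $1$ to $\mathrm{N_{II}}+\mathrm{N_{III}}$ precisely when $\chi(x(x+a))=-1$, the discriminant is $u^2+a^2b^2-ab$ in characteristic $3$, and $p_1p_2=a^2(1-u^2)/b^2$ is a nonzero square since $\chi(1-u^2)=1$ on $\mathcal{U}_0\setminus\gf_3$. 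The genuinely nontrivial step --- converting the common value of $\chi(p_i)$, which a priori involves $\sqrt{u^2+a^2b^2-ab}$, into a condition involving $\sqrt{1-u^2}$ --- is handled cleanly by your $(y_1\pm y_2)^2=S\pm 2w$ criterion applied to $Y_i=b^2p_i$ with $S=u^2+ab$, $w=ab\sqrt{1-u^2}$ and $S^2-4P=u^2(u^2+a^2b^2-ab)$; the hypotheses of that criterion (namely $w\neq0$, distinct roots, and $(S+2w)(S-2w)\neq0$) all hold here, the conclusion is insensitive to the choice of sign of $\sqrt{1-u^2}$ because $\chi(S+2w)=\chi(S-2w)$, and the final sign flip $\chi(-(S-2w))=-\chi(S-2w)$ (using $\chi(-1)=-1$ for odd $n$) lands exactly on the stated condition $\chi(-u^2-ab-ab\sqrt{1-u^2})=1$. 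This is a self-contained derivation of the cited result and, as far as I can check against the data in Table~\ref{tab:FCases}, every step is sound.
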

Next, we specifically consider the case when $\mathrm{N_{II}}+\mathrm{N_{III}}=1$. We have the following proposition.
\begin{proposition}We have,
	$\mathrm{N_{II}}+\mathrm{N_{III}}=1$ if and only if
		$$\left\{
	\begin{array}{ll}
		\chi(u^2+a^2b^2-ab)=0, \\
		\chi(a^2b^2-u^2)=1. \\
	\end{array}
	\right.
	$$
\end{proposition}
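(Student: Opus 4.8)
The plan is to reduce the quantity $\mathrm{N_{II}}+\mathrm{N_{III}}$ to a single root-counting problem for the Case II quadratic and then analyze it through its discriminant. First I would record that, in characteristic $3$, both the Case II equation $bx^2+(u+ab)x-a(u-1)=0$ and the Case III equation $bx^2-(u-ab)x+a(u+1)=0$ have the same discriminant $D=u^2+a^2b^2-ab$ (the cross terms of the form $3abu$ vanish). Hence each equation has $1+\chi(D)$ roots in $\gf_{3^n}$: two distinct roots when $\chi(D)=1$, a single double root when $\chi(D)=0$, and none when $\chi(D)=-1$.

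Next I would exploit the bijection $x\mapsto -(x+a)$ between the root sets of the two equations, which is already recorded from \cite{FurXia}. Using $\chi(-1)=-1$ (valid since $n$ is odd), a root $x$ of the Case II equation is a desired solution exactly when $(\chi(x+a),\chi(x))=(1,-1)$, while its partner $-(x+a)$ is a desired solution of Case III exactly when $(\chi(x+a),\chi(x))=(-1,1)$. Summing over all roots of the Case II equation and reindexing $\mathrm{N_{III}}$ through the bijection, this yields the clean identity
$$\mathrm{N_{II}}+\mathrm{N_{III}}=\#\{x:\,bx^2+(u+ab)x-a(u-1)=0,\ \chi(x(x+a))=-1\},$$
so that a root contributes precisely when $\chi(x(x+a))=-1$.

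The crux is then to control the values $\chi(x(x+a))$ over the roots. From Table \ref{tab:FCases} the two roots of Case II give $x(x+a)=\frac{-u^2-ab\pm u\sqrt{D}}{b^2}$, and a direct multiplication shows their product equals $\frac{a^2(1-u^2)}{b^2}$, which is a nonzero square because $\chi(1-u^2)=1$ for $u\in\mathcal{U}_0\backslash\gf_3$. Consequently, when $\chi(D)=1$ the two values $\chi(x(x+a))$ coincide, so $\mathrm{N_{II}}+\mathrm{N_{III}}\in\{0,2\}$; together with the trivial case $\chi(D)=-1$ (no roots), this forces $\chi(D)=0$ whenever $\mathrm{N_{II}}+\mathrm{N_{III}}=1$. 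This parity argument is the key step and, I expect, the main obstacle, since it relies on the exact product identity rather than a rough estimate.

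Finally, in the remaining case $\chi(D)=0$ there is a unique double root $x_0$ with $x_0(x_0+a)=\frac{-u^2-ab}{b^2}$. Using $D=0$ to substitute $ab=u^2+a^2b^2$ and the identity $-2=1$ in $\gf_3$, I would simplify $-u^2-ab=-(a^2b^2-u^2)$, whence $\chi(x_0(x_0+a))=-\chi(a^2b^2-u^2)$. A short check that $u\notin\gf_3$ rules out the degenerate possibilities $x_0\in\{0,-a\}$ and $a^2b^2=u^2$, so $\chi(a^2b^2-u^2)=\pm1$. Thus $\mathrm{N_{II}}+\mathrm{N_{III}}=1$, i.e. $\chi(x_0(x_0+a))=-1$, if and only if $\chi(a^2b^2-u^2)=1$, which combined with $\chi(D)=0$ gives both asserted conditions and settles both directions.
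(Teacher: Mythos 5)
Your proof is correct. It follows the same overall decomposition as the paper's proof --- a trichotomy on $\chi(u^2+a^2b^2-ab)$, with the double-root case computed explicitly --- but the step that rules out $\mathrm{N_{II}}+\mathrm{N_{III}}=1$ when the discriminant is a nonzero square is argued by a genuinely different route. The paper invokes the cross-pairing statement imported from \cite{FurXia}, namely that $x_1$ is a desired Case II solution if and only if $-x_2-a$ is a desired Case III solution, and concludes that desired solutions occur in pairs. You instead collapse $\mathrm{N_{II}}+\mathrm{N_{III}}$ into the single count of roots $x$ of the Case II quadratic with $\chi(x(x+a))=-1$, and prove evenness from the explicit identity $x_1(x_1+a)\cdot x_2(x_2+a)=a^2(1-u^2)/b^2$, a nonzero square for $u\in\mathcal{U}_0\setminus\gf_3$. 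This is self-contained (it does not rely on the pairing claim), it shows as a by-product that neither root can be $0$ or $-a$, and the same counting identity settles sufficiency and necessity simultaneously, whereas the paper dismisses sufficiency as obvious and argues only necessity. In the degenerate case the two computations agree: your substitution of $ab=u^2+a^2b^2$ into $x_0(x_0+a)=(-u^2-ab)/b^2$ yields $(u^2-a^2b^2)/b^2$, which is exactly the product $\chi\bigl(\tfrac{u+ab}{b}\bigr)\chi\bigl(\tfrac{u-ab}{b}\bigr)$ that the paper evaluates, and your observation that $a^2b^2=u^2$ together with $D=0$ forces $u\in\{\pm1\}$ matches the remark the paper makes only after the proposition.
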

\begin{proof} The sufficiency is obvious. We only prove the necessity. When $\chi(u^2+a^2b^2-ab)=1$, the quadratic equation in Case II has two solutions, namely $x_1$ and $x_2$. Then the solutions of the quadratic equation in Case III are $-x_1-a$ and $-x_2-a$. Note that $x_1$ ($x_2$, respectively) is a desired solution if and only if $-x_2-a$ ($-x_1-a$, respectively) is a desired solution. Then $\mathrm{N_{II}}+\mathrm{N_{III}}\neq 1$ when  $\chi(u^2+a^2b^2-ab)=1$. Moreover, when  $\chi(u^2+a^2b^2-ab)=-1$, $\mathrm{N_{II}}+\mathrm{N_{III}}=0$. We conclude that if $\mathrm{N_{II}}+\mathrm{N_{III}}=1$, then $\chi(u^2+a^2b^2-ab)=0$.
	
When $\chi(u^2+a^2b^2-ab)=0$, let $x_0$ be the unique solution of the quadratic equation in Case II, then
$x_0=\frac{u+ab}{b}$. Moreover, the unique solution of the quadratic equation in Case III is $x'_0=-\frac{u-ab}{b}$. If $x_0$ is a desired solution, then $\chi(x_0)=\chi(\frac{u+ab}{b})=-1$ and $\chi(x_0+a)=\chi(\frac{u-ab}{b})=1$. If $x'_0$ is a desired solution, then $\chi(x'_0)=\chi(-\frac{u-ab}{b})=1$ and $\chi(x'_0+a)=\chi(-\frac{u+ab}{b})=-1$. Obviously, $x_0$ and $x'_0$ cannot be desired solutions simultaneously. If $\mathrm{N_{II}}+\mathrm{N_{III}}=1$, then $\chi(\frac{u+ab}{b})\chi(\frac{u-ab}{b})=-1$, i.e., $\chi(u^2-a^2b^2)=-1$. The proof is completed.
\end{proof}
Note that $\chi(-u^2-ab-ab\sqrt{1-u^2})=0$ implies that $\chi(u^2+a^2b^2-ab)=0$. Moreover, $\chi(u^2+a^2b^2-ab)=0$ and $\chi(a^2b^2-u^2)=0$ cannot hold simultaneously for  $u \in \mathcal{U}_0 \backslash \gf_3$. Then we have the following corollary.
\begin{corollary} $\mathrm{N_{II}}+\mathrm{N_{III}}=0$ if one of the subsequent three disjoint conditions is met:
	\begin{enumerate}
		\item $\chi(u^2+a^2b^2-ab)=-1.$
		\item $\chi(u^2+a^2b^2-ab)=0,\chi(a^2b^2-u^2)=-1.$
		\item $\chi(u^2+a^2b^2-ab)=1$ and $\chi(-u^2-ab-ab\sqrt{1-u^2})=-1.$
	\end{enumerate}
\end{corollary}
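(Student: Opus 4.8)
The plan is to exploit that the preceding two propositions, together with the already established bound $\mathrm{N_{II}}+\mathrm{N_{III}}\leq 2$, completely pin down the value of $\mathrm{N_{II}}+\mathrm{N_{III}}$. Since this quantity lies in $\{0,1,2\}$, we have $\mathrm{N_{II}}+\mathrm{N_{III}}=0$ exactly when the pair $(a,b)$ satisfies neither the characterization of the case $\mathrm{N_{II}}+\mathrm{N_{III}}=2$ (namely $\chi(u^2+a^2b^2-ab)=1$ and $\chi(-u^2-ab-ab\sqrt{1-u^2})=1$) nor the characterization of the case $\mathrm{N_{II}}+\mathrm{N_{III}}=1$ (namely $\chi(u^2+a^2b^2-ab)=0$ and $\chi(a^2b^2-u^2)=1$). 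Hence the corollary reduces to checking that each of the three listed conditions is simultaneously incompatible with both of these affirmative characterizations, which yields the sufficiency direction immediately.

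Writing $\alpha=\chi(u^2+a^2b^2-ab)$, I would organize the verification according to the value $\alpha\in\{-1,0,1\}$, which is precisely how the three disjoint conditions are separated. For condition (1), $\alpha=-1$: the case $\mathrm{N_{II}}+\mathrm{N_{III}}=2$ requires $\alpha=1$ and the case $\mathrm{N_{II}}+\mathrm{N_{III}}=1$ requires $\alpha=0$, so both are excluded and $\mathrm{N_{II}}+\mathrm{N_{III}}=0$. For condition (2), $\alpha=0$ together with $\chi(a^2b^2-u^2)=-1$: the value $2$ is ruled out since it needs $\alpha=1$, and the value $1$ is ruled out since it needs $\chi(a^2b^2-u^2)=1$. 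For condition (3), $\alpha=1$ together with $\chi(-u^2-ab-ab\sqrt{1-u^2})=-1$: the value $1$ is ruled out since it needs $\alpha=0$, and the value $2$ is ruled out since it needs $\chi(-u^2-ab-ab\sqrt{1-u^2})=1$. In each of the three cases both affirmative characterizations fail, forcing $\mathrm{N_{II}}+\mathrm{N_{III}}=0$.

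I expect there to be no genuine obstacle here, since the statement asks only for sufficiency; the only care required is the bookkeeping of which character condition disqualifies which value. I would, however, invoke the two auxiliary observations recorded just before the corollary to confirm that the three cases are genuinely disjoint and that nothing is omitted: the implication $\chi(-u^2-ab-ab\sqrt{1-u^2})=0\Rightarrow\chi(u^2+a^2b^2-ab)=0$ shows that when $\alpha=1$ the character $\chi(-u^2-ab-ab\sqrt{1-u^2})$ must be $\pm1$, so condition (3) captures every way of obtaining $0$ in the branch $\alpha=1$; and the fact that $\chi(u^2+a^2b^2-ab)=0$ and $\chi(a^2b^2-u^2)=0$ cannot hold together forces $\chi(a^2b^2-u^2)\in\{\pm1\}$ when $\alpha=0$, so condition (2) likewise captures every remaining way of obtaining $0$ in that branch. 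This confirms that the three disjoint conditions not only each imply $\mathrm{N_{II}}+\mathrm{N_{III}}=0$ but also exhaust it.
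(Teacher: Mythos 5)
Your proposal is correct and follows essentially the same route the paper takes: the corollary is an immediate consequence of the two preceding ``if and only if'' propositions for the values $2$ and $1$, the bound $\mathrm{N_{II}}+\mathrm{N_{III}}\leq 2$, and the two auxiliary observations recorded just before the corollary (which the paper states precisely to justify disjointness and exhaustiveness). You have merely made explicit the case check that the paper leaves implicit.
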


When $\chi(u^2+a^2b^2-ab)=0$, then $ab=-1\pm \sqrt{1-u^2}$, which implies that $\chi(\frac{u+1}{ab})=-\chi((u+1)\varphi(u))=1$. Then we conclude that $\mathrm{N_I}=\mathrm{N_{IV}}=0$ when $\mathrm{N_{II}}+\mathrm{N_{III}}=1$. Moreover, by propositions and corollaries demonstrated previously in this section, the discussion on the value of $N_2(a,b)=\mathrm{N_{I}}+\mathrm{N_{II}}+\mathrm{N_{III}}+\mathrm{N_{IV}}$ is finished. Recall that $N(a,b)=N_1(a,b)+N_2(a,b)$. By Lemma \ref{n1ab}, For $u\in \mathcal{U}_0 \backslash \gf_3$, $N_1(a,b)=1$ or $0$. When $N_1(a,b)=1$, then $ab=1\pm u$, the conditions in Proposition \ref{n14} cannot hold, hence $\mathrm{N_{I}}=\mathrm{N_{IV}}=0$. Moreover, for $u\in\mathcal{U}_0 \backslash \gf_3$, if $ab=1\pm u$, then $u^2+a^2b^2-ab\neq 0$. We conclude that $\mathrm{N_{II}}+\mathrm{N_{III}}\neq1$ when $N_1(a,b)=1$. We summarize the above discussion in the following Table \ref{tab:vN}.

\begin{table*}[!htp]
	\centering
	\caption{Values of $N(a,b)$}
	\label{table1}
	\begin{tabular}{|c|c|c|c|c|}
	\hline
	\multirow{2}{*}{$N(a,b)$}&\multirow{2}{*}{$N_1(a,b)$} & \multicolumn{3}{c|}{$N_2(a,b)$} \\
	\cline{3-5}
	
	 & & $\mathrm{N_I}$ & $\mathrm{N_{II}}+\mathrm{N_{III}}$ & $\mathrm{N_{IV}}$ \\
	\hline
	 $0$ & $0$ & $0$ & $0$ & $0$ \\
	\hline

	\multirow{4}{*}{$1$} & $1$ & $0$ & $0$ & $0$ \\
	\cline{2-5}

	\multirow{4}{*}{ } & $0$ & $1$ & $0$ & $0$ \\
	\cline{2-5}

	\multirow{4}{*}{ } & $0$ & $0$ & $1$ & $0$ \\
	\cline{2-5}

	\multirow{4}{*}{ } & $0$ & $0$ & $0$ & $1$ \\
	\hline

	\multirow{2}{*}{$2$} & $0$ & $0$ & $2$ & $0$ \\
	\cline{2-5}

	\multirow{2}{*}{ } & $0$ & $1$ & $0$ & $1$ \\
	\hline

	\multirow{3}{*}{$3$} & $1$ & $0$ & $2$ & $0$ \\
	\cline{2-5}

	\multirow{3}{*}{ } & $0$ & $1$ & $2$ & $0$ \\
	\cline{2-5}

	\multirow{3}{*}{ } & $0$ & $0$ & $2$ & $1$ \\
	\hline

	$4$ & $0$ & $1$ & $2$ & $1$ \\
	\hline
	\end{tabular}
	\label{tab:vN}
\end{table*}

By Table \ref{tab:vN}, we obtain the following sufficient and necessary conditions about the numbers of solutions of the differential equation (\ref{DE}). We mention that the sufficient and necessary condition for (\ref{DE}) to have 4 solutions was given in \cite{FurXia}.

\begin{proposition}\cite[Proposition 2]{FurXia}
	When $u \in \mathcal{U}_0 \backslash \gf_3$, the differential equation $\mathbb{D}_af_u(x)=b$ has four solutions if and only if $(a,b)$ satisfies the following conditions
	$$\left\{
             \begin{array}{ll}
             \chi(\frac{u+1}{ab})=-1, \\
             \chi(1-\frac{u+1}{ab})=1, \\
			 \chi(1-\frac{u-1}{ab})=1, \\
			 \chi(u^2+a^2b^2-ab)=1, \\
             \chi(-u^2-ab-ab\sqrt{1-u^2})=1. \\
             \end{array}
	\right.
	$$
	\label{prop:om4}
\end{proposition}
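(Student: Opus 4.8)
The plan is to read the four-solution case directly off Table~\ref{tab:vN}, which already records every admissible combination of the four quantities $N_1(a,b)$, $\mathrm{N_I}$, $\mathrm{N_{II}}+\mathrm{N_{III}}$, $\mathrm{N_{IV}}$ together with the resulting $N(a,b)=N_1(a,b)+\mathrm{N_I}+(\mathrm{N_{II}}+\mathrm{N_{III}})+\mathrm{N_{IV}}$. For $u\in\mathcal{U}_0\backslash\gf_3$ we have $N_1(a,b)\in\{0,1\}$, $\mathrm{N_I}\le 1$, $\mathrm{N_{IV}}\le 1$ and $\mathrm{N_{II}}+\mathrm{N_{III}}\le 2$, so the crude maximum of the sum is $5$. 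The key reductions, all established earlier, trim this: when $N_1(a,b)=1$ one has $ab=1\pm u$, which forces $\mathrm{N_I}=\mathrm{N_{IV}}=0$ and $\mathrm{N_{II}}+\mathrm{N_{III}}\neq 1$, so the $N_1=1$ rows cap at $1+0+2+0=3$; and when $\mathrm{N_{II}}+\mathrm{N_{III}}=1$ one again has $\mathrm{N_I}=\mathrm{N_{IV}}=0$. Consequently the only configuration that attains $N(a,b)=4$ is the last row of the table, namely
\[
N_1(a,b)=0,\qquad \mathrm{N_I}=1,\qquad \mathrm{N_{II}}+\mathrm{N_{III}}=2,\qquad \mathrm{N_{IV}}=1.
\]
Proving the proposition therefore reduces to showing that this single configuration is equivalent to the five displayed character conditions.

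For the translation I would invoke the preceding characterizations one at a time. Proposition~\ref{n14} converts $\mathrm{N_I}=1$ into $\chi(1-\frac{u+1}{ab})=1$ together with $\chi(\frac{u+1}{ab})=-1$, and $\mathrm{N_{IV}}=1$ into the analogous quadratic-character condition from Proposition~\ref{n14}(2) together with the \emph{same} requirement $\chi(\frac{u+1}{ab})=-1$. The proposition characterizing $\mathrm{N_{II}}+\mathrm{N_{III}}=2$ supplies the remaining two conditions $\chi(u^2+a^2b^2-ab)=1$ and $\chi(-u^2-ab-ab\sqrt{1-u^2})=1$. Since the requirement $\chi(\frac{u+1}{ab})=-1$ appears in the characterizations of both $\mathrm{N_I}=1$ and $\mathrm{N_{IV}}=1$, it is recorded once; collecting the distinct conditions then yields exactly the system in the statement.

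It remains to check that the equality $N_1(a,b)=0$ in that row is automatic, so that it need not be imposed separately. By Lemma~\ref{n1ab}, $N_1(a,b)\neq 0$ would force $ab=1\pm u$; but $ab=1+u$ gives $\chi(\frac{u+1}{ab})=\chi(1)=1$, while $ab=1-u$ gives $\chi(\frac{u+1}{ab})=\chi\big((u+1)(1-u)\big)=\chi(1-u^2)=1$, using $\chi(1-u^2)=1$ for $u\in\mathcal{U}_0$. Both contradict the condition $\chi(\frac{u+1}{ab})=-1$. Hence once $\chi(\frac{u+1}{ab})=-1$ is assumed, $N_1(a,b)=0$ follows for free, and both directions of the equivalence close: the five conditions force $(N_1,\mathrm{N_I},\mathrm{N_{II}}+\mathrm{N_{III}},\mathrm{N_{IV}})=(0,1,2,1)$ and hence $N(a,b)=4$, while conversely $N(a,b)=4$ forces that configuration and hence the five conditions.

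Because Lemma~\ref{n1ab}, Proposition~\ref{n14}, the $\mathrm{N_{II}}+\mathrm{N_{III}}=2$ proposition, and the assembly of Table~\ref{tab:vN} are all in place, the remaining argument is essentially bookkeeping; no new character sum enters. The only points requiring genuine care are justifying the mutual-exclusion entries of the table (so that the last row is the \emph{unique} source of four solutions) and confirming the automatic vanishing of $N_1(a,b)$ under $\chi(\frac{u+1}{ab})=-1$. I therefore expect this verification, together with matching the precise algebraic form of the $\mathrm{N_{IV}}=1$ condition to the one displayed, rather than any fresh computation, to be the crux of the proof.
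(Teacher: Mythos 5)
Your proposal is correct and takes essentially the same route as the paper, which builds Table \ref{tab:vN} from Lemma \ref{n1ab}, Proposition \ref{n14} and the characterizations of $\mathrm{N_{II}}+\mathrm{N_{III}}$, and obtains the four-solution case by reading off the unique configuration $(N_1,\mathrm{N_I},\mathrm{N_{II}}+\mathrm{N_{III}},\mathrm{N_{IV}})=(0,1,2,1)$ in the last row (the proposition itself being quoted from \cite{FurXia}). Note only that your derivation produces $\chi(1+\frac{u-1}{ab})=1$ as the third condition, which is the form consistent with Proposition \ref{n14} and with Propositions \ref{prop:om3}--\ref{prop:om0}; the displayed $\chi(1-\frac{u-1}{ab})=1$ appears to be a typographical slip rather than a gap in your argument.
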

\begin{proposition}
	When $u \in \mathcal{U}_0 \backslash \gf_3$, the differential equation $\mathbb{D}_af_u(x)=b$ of the function $f_u(x)$ has three solutions if and only if $(a,b)$ satisfies one of the following conditions
	\begin{enumerate}
		\item $ab=1\pm u,\chi(u^2+a^2b^2-ab)=1, \chi(-u^2-ab-ab\sqrt{1-u^2})=1.$
		\item $\chi(1-\frac{u+1}{ab})=1, \chi(\frac{a(u+1)}{b})=-1, \chi(1+\frac{u-1}{ab})=-1, \chi(u^2+a^2b^2-ab)=1, \chi(-u^2-ab-ab\sqrt{1-u^2})=1.$
		\item $\chi(1-\frac{u+1}{ab})=-1, \chi(\frac{a(u+1)}{b})=-1, \chi(1+\frac{u-1}{ab})=1, \chi(u^2+a^2b^2-ab)=1, \chi(-u^2-ab-ab\sqrt{1-u^2})=1.$
	\end{enumerate}
	\label{prop:om3}
\end{proposition}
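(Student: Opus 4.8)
The plan is to read the value $N(a,b)=3$ directly off Table~\ref{tab:vN} and then translate each admissible pattern of the quadruple $(N_1,\mathrm{N_I},\mathrm{N_{II}}+\mathrm{N_{III}},\mathrm{N_{IV}})$ into conditions on $(a,b)$ via the characterizations already established. Because Lemma~\ref{n1ab}, Proposition~\ref{n14}, and the proposition characterizing $\mathrm{N_{II}}+\mathrm{N_{III}}=2$ are all \emph{if and only if} statements, necessity and sufficiency will come out together: an equation with exactly three solutions must realize one of the three rows of Table~\ref{tab:vN} for which $N(a,b)=3$, and conversely each of the three listed condition sets forces exactly one such row. So the whole proof reduces to verifying the dictionary row-by-row, and to checking that the ``automatically satisfied'' entries of each row are genuinely implied by the stated conditions.

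Concretely, the three rows with $N(a,b)=3$ are $(N_1,\mathrm{N_I},\mathrm{N_{II}}+\mathrm{N_{III}},\mathrm{N_{IV}})$ equal to $(1,0,2,0)$, $(0,1,2,0)$, and $(0,0,2,1)$. For the first row I would use Lemma~\ref{n1ab} to convert $N_1=1$ into $ab=1\pm u$, then invoke the facts already proved in this section (that $ab=1\pm u$ forces $\mathrm{N_I}=\mathrm{N_{IV}}=0$ and makes $u^2+a^2b^2-ab\neq 0$, so $\mathrm{N_{II}}+\mathrm{N_{III}}\neq 1$) together with the proposition characterizing $\mathrm{N_{II}}+\mathrm{N_{III}}=2$; this yields item~1 of the statement. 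For the second row I would read $\mathrm{N_I}=1$ through Proposition~\ref{n14}(1), giving $\chi(1-\tfrac{u+1}{ab})=1$ and $\chi(\tfrac{u+1}{ab})=-1$, and rewrite the latter as $\chi(\tfrac{a(u+1)}{b})=-1$ using $\chi(\tfrac{a(u+1)}{b})=\chi(\tfrac{u+1}{ab})$. Then $\mathrm{N_{IV}}=0$ is handled by the corresponding corollary: since $\chi(\tfrac{u+1}{ab})=-1$ kills the case $\chi(1+\tfrac{u-1}{ab})=1$, only $\chi(1+\tfrac{u-1}{ab})\in\{0,-1\}$ remain, and the value $0$ would force $ab=1-u$, contradicting $N_1=0$; hence $\chi(1+\tfrac{u-1}{ab})=-1$. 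Adding the two $\mathrm{N_{II}}+\mathrm{N_{III}}=2$ conditions gives item~2. The third row is entirely symmetric, with the roles of Case I and Case IV interchanged, producing item~3.

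The main subtlety, and the step I expect to take the most care, is the bookkeeping of the degenerate zero-character branches and the self-consistency of each condition set. I would verify that the conditions in items~2 and~3 automatically force $ab\notin\{1+u,1-u\}$ (from $\chi(1-\tfrac{u+1}{ab})\neq 0$ and $\chi(1+\tfrac{u-1}{ab})\neq 0$), so that $N_1=0$ indeed holds and the row is realized exactly; and that in item~1 the restriction $ab=1\pm u$ is compatible with $\mathrm{N_{II}}+\mathrm{N_{III}}=2$ precisely because $u^2+a^2b^2-ab\neq 0$ for $u\in\mathcal{U}_0\backslash\gf_3$. Once these consistency checks are in place, the three condition sets are seen to correspond bijectively to the three rows, and since the rows are the complete list of patterns giving $N(a,b)=3$, the equivalence is established.
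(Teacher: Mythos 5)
Your proposal is correct and follows essentially the same route as the paper, which likewise obtains this proposition by reading the three rows $(N_1,\mathrm{N_I},\mathrm{N_{II}}+\mathrm{N_{III}},\mathrm{N_{IV}})\in\{(1,0,2,0),(0,1,2,0),(0,0,2,1)\}$ off Table \ref{tab:vN} and translating them through Lemma \ref{n1ab}, Proposition \ref{n14} and the characterization of $\mathrm{N_{II}}+\mathrm{N_{III}}=2$; the consistency checks you flag (that $ab=1\pm u$ forces $\mathrm{N_I}=\mathrm{N_{IV}}=0$ and $u^2+a^2b^2-ab\neq0$, and that the nonzero character values in items 2 and 3 force $N_1=0$) are exactly the observations the paper records just before stating the table. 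Your write-up simply makes explicit the row-by-row dictionary that the paper leaves implicit.
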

\begin{proposition}
	When $u \in \mathcal{U}_0 \backslash \gf_3$, the differential equation $\mathbb{D}_af_u(x)=b$ of the function $f_u(x)$ has two solutions if and only if $(a,b)$ satisfies one of the following conditions
	\begin{enumerate}
		\item $\chi(1-\frac{u+1}{ab})=1, \chi(\frac{a(u+1)}{b})=-1, \chi(1+\frac{u-1}{ab})=1, \chi(u^2+a^2b^2-ab)=-1.$
		\item $\chi(1-\frac{u+1}{ab})=1, \chi(\frac{a(u+1)}{b})=-1, \chi(1+\frac{u-1}{ab})=1, \chi(u^2+a^2b^2-ab)=1,\chi(-u^2-ab-ab\sqrt{1-u^2})=-1.$
		\item $\chi(1-\frac{u+1}{ab})=-1,\chi(1+\frac{u-1}{ab})=-1,\chi(u^2+a^2b^2-ab)=1,\chi(-u^2-ab-ab\sqrt{1-u^2})=1.$
		\item $\chi(1-\frac{u+1}{ab})=-1,\chi(1+\frac{u-1}{ab})=1,\chi(\frac{a(u+1)}{b})=1,\chi(u^2+a^2b^2-ab)=1,\chi(-u^2-ab-ab\sqrt{1-u^2})=1.$
		\item $\chi(1-\frac{u+1}{ab})=1,\chi(\frac{a(u+1)}{b})=1,\chi(1+\frac{u-1}{ab})=-1,\chi(u^2+a^2b^2-ab)=1,\chi(-u^2-ab-ab\sqrt{1-u^2})=1.$
		\item $\chi(1-\frac{u+1}{ab})=1,\chi(\frac{a(u+1)}{b})=1,\chi(1+\frac{u-1}{ab})=1,\chi(u^2+a^2b^2-ab)=1,\chi(-u^2-ab-ab\sqrt{1-u^2})=1.$
	\end{enumerate}
	\label{prop:om2}
\end{proposition}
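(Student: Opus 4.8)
The plan is to read the answer directly off Table~\ref{tab:vN} and then translate each admissible configuration into character conditions using the results already established in this section. According to Table~\ref{tab:vN}, for $u\in\mathcal{U}_0\setminus\gf_3$ the equality $N(a,b)=2$ holds in exactly two ways: either (A) $N_1(a,b)=0$, $\mathrm{N_I}=0$, $\mathrm{N_{II}}+\mathrm{N_{III}}=2$, $\mathrm{N_{IV}}=0$, or (B) $N_1(a,b)=0$, $\mathrm{N_I}=1$, $\mathrm{N_{II}}+\mathrm{N_{III}}=0$, $\mathrm{N_{IV}}=1$. I would split the proof into these two scenarios and, in each, substitute the explicit descriptions of $\mathrm{N_I}$ and $\mathrm{N_{IV}}$ (Proposition~\ref{n14} and its corollary), of $\mathrm{N_{II}}+\mathrm{N_{III}}=2$, and of $\mathrm{N_{II}}+\mathrm{N_{III}}=0$.

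Before the case analysis I would record three simplifications that drive the bookkeeping. First, since $\frac{u+1}{ab}\cdot\frac{a(u+1)}{b}=\left(\frac{u+1}{b}\right)^2$ is a square (and $u+1\neq 0$ as $u\notin\gf_3$), one has $\chi(\frac{u+1}{ab})=\chi(\frac{a(u+1)}{b})$, so the quantity $P:=\chi(\frac{a(u+1)}{b})$ appearing in the target conditions is the same as the $\chi(\frac{u+1}{ab})$ occurring in Proposition~\ref{n14}. Second, writing $Q:=\chi(1-\frac{u+1}{ab})$ and $R:=\chi(1+\frac{u-1}{ab})$, one has $Q=0\iff ab=1+u$ and $R=0\iff ab=1-u$, each of which forces $N_1(a,b)=1$ by Lemma~\ref{n1ab}; hence in Scenario (A), where $N_1=0$, necessarily $Q,R\in\{\pm1\}$, and the requirement $N_1=0$ needs no separate encoding. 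Third, $\chi(u^2+a^2b^2-ab)=0$ forces $ab=-1\pm\sqrt{1-u^2}$ and hence $P=1$; this rules out the degenerate sub-case of $\mathrm{N_{II}}+\mathrm{N_{III}}=0$ whenever $P=-1$.

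With these in hand the enumeration is routine. In Scenario (B) the conditions $\mathrm{N_I}=1$ and $\mathrm{N_{IV}}=1$ read $Q=R=1$ and $P=-1$ by Proposition~\ref{n14}; the third remark then reduces $\mathrm{N_{II}}+\mathrm{N_{III}}=0$ to its two nondegenerate sub-cases $\chi(u^2+a^2b^2-ab)=-1$ and $\{\chi(u^2+a^2b^2-ab)=1,\ \chi(-u^2-ab-ab\sqrt{1-u^2})=-1\}$, yielding conditions (1) and (2). In Scenario (A), $\mathrm{N_{II}}+\mathrm{N_{III}}=2$ gives $\chi(u^2+a^2b^2-ab)=1$ together with $\chi(-u^2-ab-ab\sqrt{1-u^2})=1$, and I would split on $(Q,R)\in\{\pm1\}^2$: when a coordinate equals $-1$ the corresponding $\mathrm{N_I}$ or $\mathrm{N_{IV}}$ vanishes for free, while when it equals $1$ forcing the relevant count to $0$ requires pinning $P=1$; the four cases produce conditions (3)--(6). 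Finally I would check sufficiency by reversing these readings --- each listed condition determines $N_1,\mathrm{N_I},\mathrm{N_{II}}+\mathrm{N_{III}},\mathrm{N_{IV}}$ and sums them to $2$ --- and verify that the six conditions are pairwise disjoint, being separated by the signs of $\chi(u^2+a^2b^2-ab)$, $P$, $Q$, and $R$.

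The main obstacle I anticipate is not any single computation but the precise bookkeeping at the boundaries: deciding exactly when $P$ must be specified versus left free, confirming that the excluded combinations really land in the $N=3$ or $N=4$ rows of Table~\ref{tab:vN} rather than in $N=2$ (for instance $(Q,R)=(1,1)$ with $P=-1$ and $\mathrm{N_{II}}+\mathrm{N_{III}}=2$ gives $\mathrm{N_I}=\mathrm{N_{IV}}=1$, hence $N=4$ as in Proposition~\ref{prop:om4}), and ensuring the six stated conditions are simultaneously exhaustive and non-overlapping. Keeping the auxiliary facts $Q=0\Leftrightarrow ab=1+u$, $R=0\Leftrightarrow ab=1-u$, and $\chi(u^2+a^2b^2-ab)=0\Rightarrow P=1$ explicit is what makes this case analysis close cleanly.
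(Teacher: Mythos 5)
Your proposal is correct and follows essentially the same route as the paper, which derives this proposition directly from Table \ref{tab:vN} (the two admissible decompositions $(\mathrm{N_I},\mathrm{N_{II}}+\mathrm{N_{III}},\mathrm{N_{IV}})=(0,2,0)$ and $(1,0,1)$ with $N_1=0$) combined with Proposition \ref{n14}, its corollary, and the criteria for $\mathrm{N_{II}}+\mathrm{N_{III}}\in\{0,2\}$. Your three auxiliary observations (that $\chi(\frac{u+1}{ab})=\chi(\frac{a(u+1)}{b})$, that $Q=0$ or $R=0$ forces $N_1=1$, and that $\chi(u^2+a^2b^2-ab)=0$ forces $P=1$) are exactly the bookkeeping facts the paper records before stating the table, so the argument closes as you describe.
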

\begin{proposition}
	When $u \in \mathcal{U}_0 \backslash \gf_3$, the differential equation $\mathbb{D}_af_u(x)=b$ of the function $f_u(x)$ has one solution if and only if $(a,b)$ satisfies one of the following conditions
	\begin{enumerate}
		\item $ab=1\pm u,\chi(u^2+a^2b^2-ab)=-1.$
		\item $ab=1\pm u,\chi(u^2+a^2b^2-ab)=1,\chi(-u^2-ab-ab\sqrt{1-u^2})=-1.$
		\item $\chi(u^2+a^2b^2-ab)=0,\chi(a^2b^2-u^2)=1.$
		\item $\chi(1-\frac{u+1}{ab})=1, \chi(\frac{a(u+1)}{b})=-1, \chi(1+\frac{u-1}{ab})=-1, \chi(u^2+a^2b^2-ab)=-1.$
		\item $\chi(1-\frac{u+1}{ab})=-1, \chi(\frac{a(u+1)}{b})=-1, \chi(1+\frac{u-1}{ab})=1, \chi(u^2+a^2b^2-ab)=-1.$
		\item $\chi(1-\frac{u+1}{ab})=1, \chi(\frac{a(u+1)}{b})=-1, \chi(1+\frac{u-1}{ab})=-1, \chi(u^2+a^2b^2-ab)=1, \chi(-u^2-ab-ab\sqrt{1-u^2})=-1.$
		\item $\chi(1-\frac{u+1}{ab})=-1, \chi(\frac{a(u+1)}{b})=-1, \chi(1+\frac{u-1}{ab})=1, \chi(u^2+a^2b^2-ab)=1, \chi(-u^2-ab-ab\sqrt{1-u^2})=-1.$
	\end{enumerate}
	\label{prop:om1}
\end{proposition}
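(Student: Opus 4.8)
The plan is to prove the proposition by reading off from Table~\ref{tab:vN} the four rows that produce $N(a,b)=1$ and translating each admissible combination of $N_1(a,b)$, $\mathrm{N_I}$, $\mathrm{N_{II}}+\mathrm{N_{III}}$ and $\mathrm{N_{IV}}$ into explicit conditions on $(a,b)$, exactly in the spirit of the proofs of Propositions~\ref{prop:om4}, \ref{prop:om3} and \ref{prop:om2}. According to the table, $N(a,b)=1$ occurs in precisely four mutually exclusive ways: (i) $N_1=1$ with $\mathrm{N_I}=\mathrm{N_{II}}+\mathrm{N_{III}}=\mathrm{N_{IV}}=0$; (ii) $\mathrm{N_I}=1$ with the others zero; (iii) $\mathrm{N_{II}}+\mathrm{N_{III}}=1$ with the others zero; and (iv) $\mathrm{N_{IV}}=1$ with the others zero. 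It therefore suffices to characterise each of these four cases and verify that their union is exactly the list of seven conditions in the statement. A small but repeatedly used observation is that $\chi\!\left(\frac{a(u+1)}{b}\right)=\chi\!\left(\frac{u+1}{ab}\right)$, because $\frac{a(u+1)}{b}=a^2\cdot\frac{u+1}{ab}$ and $\chi(a^2)=1$; this lets me pass freely between the quantity appearing in Proposition~\ref{n14} and the quantity $\frac{a(u+1)}{b}$ written in the statement.

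For case (i), when $N_1(a,b)=1$ we have $ab=1\pm u$; as already noted before the table, this forces $\mathrm{N_I}=\mathrm{N_{IV}}=0$ and guarantees $u^2+a^2b^2-ab\neq0$, so that $\mathrm{N_{II}}+\mathrm{N_{III}}\neq1$ automatically. Hence $\mathrm{N_{II}}+\mathrm{N_{III}}=0$ is the only remaining requirement, and by the corollary characterising $\mathrm{N_{II}}+\mathrm{N_{III}}=0$ together with $u^2+a^2b^2-ab\neq0$ (which discards the middle subcase), this splits into either $\chi(u^2+a^2b^2-ab)=-1$, or $\chi(u^2+a^2b^2-ab)=1$ with $\chi(-u^2-ab-ab\sqrt{1-u^2})=-1$; these are precisely conditions (1) and (2). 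For cases (ii) and (iv) I would invoke Proposition~\ref{n14}: $\mathrm{N_I}=1$ gives $\chi(1-\frac{u+1}{ab})=1$ and $\chi(\frac{u+1}{ab})=-1$, while $\mathrm{N_{IV}}=1$ gives $\chi(1+\frac{u-1}{ab})=1$ and $\chi(\frac{u+1}{ab})=-1$. Since $\chi(\frac{u+1}{ab})=-1$, the requirement $\mathrm{N_{IV}}=0$ (resp.\ $\mathrm{N_I}=0$) reduces to $\chi(1+\frac{u-1}{ab})\neq1$ (resp.\ $\chi(1-\frac{u+1}{ab})\neq1$); the value $0$ there would force $ab\in\{1+u,\,1-u\}$ and hence $N_1(a,b)=1$, which is excluded, so these become $\chi(1+\frac{u-1}{ab})=-1$ and $\chi(1-\frac{u+1}{ab})=-1$ respectively, which also secures $N_1=0$. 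Imposing $\mathrm{N_{II}}+\mathrm{N_{III}}=0$ and discarding the subcase with $\chi(u^2+a^2b^2-ab)=0$ (which forces $\chi(\frac{u+1}{ab})=1$, contradicting $\chi(\frac{u+1}{ab})=-1$) then yields exactly conditions (4) and (6) in case (ii) and conditions (5) and (7) in case (iv).

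Case (iii) is the most self-contained: by the proposition characterising $\mathrm{N_{II}}+\mathrm{N_{III}}=1$, this case is equivalent to $\chi(u^2+a^2b^2-ab)=0$ and $\chi(a^2b^2-u^2)=1$, which is condition (3). The side requirements $N_1=\mathrm{N_I}=\mathrm{N_{IV}}=0$ are automatic here, since $\chi(u^2+a^2b^2-ab)=0$ gives $ab=-1\pm\sqrt{1-u^2}$, whence $\chi(\frac{u+1}{ab})=1$ (so $\mathrm{N_I}=\mathrm{N_{IV}}=0$ by Proposition~\ref{n14}) and $ab\neq1\pm u$ (so $N_1=0$ by Lemma~\ref{n1ab}). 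Assembling cases (i)--(iv) reproduces the seven conditions, giving the ``only if'' direction; the ``if'' direction follows by reversing each translation, since each listed condition pins down a single row of Table~\ref{tab:vN} with $N(a,b)=1$.

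The main obstacle I anticipate is the careful treatment of the degenerate character values $\chi(\cdot)=0$, i.e.\ the boundary points where $ab\in\{1\pm u,\,u+1,\,-1\pm\sqrt{1-u^2}\}$. These must be tracked so that each such point is correctly assigned to a single row of the table (typically by being absorbed into $N_1=1$ or ruled out by a strict inequality), guaranteeing that the seven output conditions stay pairwise disjoint and jointly exhaust the $N(a,b)=1$ locus. The remaining work is routine bookkeeping with the quadratic character, relying on multiplicativity of $\chi$, the relation $\chi(\frac{a(u+1)}{b})=\chi(\frac{u+1}{ab})$, and the consistency facts $\chi(u^2+a^2b^2-ab)=0\Rightarrow\chi(\frac{u+1}{ab})=1$ and $ab=1\pm u\Rightarrow u^2+a^2b^2-ab\neq0$ already established in this section.
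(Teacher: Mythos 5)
Your proposal is correct and follows essentially the same route as the paper: the paper obtains this proposition directly from Table~\ref{tab:vN} together with Proposition~\ref{n14}, the two corollaries, the characterisations of $\mathrm{N_{II}}+\mathrm{N_{III}}\in\{1,2\}$, and the observations that $ab=1\pm u$ forces $u^2+a^2b^2-ab\neq0$ and that $\chi(u^2+a^2b^2-ab)=0$ forces $\chi(\frac{u+1}{ab})=1$. Your case split into the four table rows and your handling of the degenerate values $\chi(\cdot)=0$ reproduce exactly that derivation.
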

\begin{proposition}
	When $u \in \mathcal{U}_0 \backslash \gf_3$, the differential equation $\mathbb{D}_af_u(x)=b$ of the function $f_u(x)$ has no solution if and only if $(a,b)$ satisfies one of the following conditions
	\begin{enumerate}
		\item $b=0.$
		\item $\chi(u^2+a^2b^2-ab)=0,\chi(a^2b^2-u^2)=-1.$
		\item $\chi(1-\frac{u+1}{ab})=-1, \chi(1+\frac{u-1}{ab})=-1, \chi(u^2+a^2b^2-ab)=-1.$
		\item $\chi(1-\frac{u+1}{ab})=-1, \chi(1+\frac{u-1}{ab})=1,\chi(\frac{a(u+1)}{b})=1,\chi(u^2+a^2b^2-ab)=-1.$
		\item $\chi(1-\frac{u+1}{ab})=1, \chi(\frac{a(u+1)}{b})=1, \chi(1+\frac{u-1}{ab})=-1, \chi(u^2+a^2b^2-ab)=-1.$
		\item $\chi(1-\frac{u+1}{ab})=1, \chi(\frac{a(u+1)}{b})=1, \chi(1+\frac{u-1}{ab})=1, \chi(u^2+a^2b^2-ab)=-1.$
		\item $\chi(1-\frac{u+1}{ab})=-1, \chi(1+\frac{u-1}{ab})=-1, \chi(u^2+a^2b^2-ab)=1, \chi(-u^2-ab-ab\sqrt{1-u^2})=-1.$
		\item $\chi(1-\frac{u+1}{ab})=-1, \chi(1+\frac{u-1}{ab})=1,\chi(\frac{a(u+1)}{b})=1,\chi(u^2+a^2b^2-ab)=1, \chi(-u^2-ab-ab\sqrt{1-u^2})=-1.$
		\item $\chi(1-\frac{u+1}{ab})=1, \chi(\frac{a(u+1)}{b})=1, \chi(1+\frac{u-1}{ab})=-1, \chi(u^2+a^2b^2-ab)=1,\chi(-u^2-ab-ab\sqrt{1-u^2})=-1.$
		\item $\chi(1-\frac{u+1}{ab})=1, \chi(\frac{a(u+1)}{b})=1, \chi(1+\frac{u-1}{ab})=1, \chi(u^2+a^2b^2-ab)=1,\chi(-u^2-ab-ab\sqrt{1-u^2})=-1.$
	\end{enumerate}
	\label{prop:om0}
\end{proposition}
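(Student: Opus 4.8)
The plan is to read off the answer from Table~\ref{tab:vN}, which already records every combination of $(N_1(a,b),\mathrm{N_I},\mathrm{N_{II}}+\mathrm{N_{III}},\mathrm{N_{IV}})$ that can occur together with the resulting value of $N(a,b)$. Since $N(a,b)=N_1(a,b)+\mathrm{N_I}+(\mathrm{N_{II}}+\mathrm{N_{III}})+\mathrm{N_{IV}}$ is a sum of nonnegative integers, $N(a,b)=0$ holds if and only if all four summands vanish, which is precisely the unique row of Table~\ref{tab:vN} with $N(a,b)=0$. So the task reduces to translating the four conditions $N_1(a,b)=0$, $\mathrm{N_I}=0$, $\mathrm{N_{II}}+\mathrm{N_{III}}=0$ and $\mathrm{N_{IV}}=0$ into character conditions. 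I would handle $b=0$ first and separately: by Lemma~\ref{n1ab}, for $u\in\mathcal{U}_0\setminus\gf_3$ (so $u\notin\{0,\pm1\}$) we have $N_2(a,0)=0$, while $N_1(a,0)=0$ since $1\pm u\chi(a)\neq 0$; hence $N(a,0)=0$, which is condition~(1). Henceforth assume $b\neq 0$.

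Next I would record the crucial observation that the vanishing of $N_1(a,b)$ is already encoded in the conditions attached to $\mathrm{N_I}$ and $\mathrm{N_{IV}}$. By Lemma~\ref{n1ab}, for these $u$ one has $N_1(a,b)=1$ exactly when $ab\in\{1+u,1-u\}$, the set $\{1\pm u\chi(a)\}$ being equal to $\{1+u,1-u\}$ regardless of $\chi(a)$; otherwise $N_1(a,b)=0$. Since $\chi(1-\frac{u+1}{ab})=0\iff ab=1+u$ and $\chi(1+\frac{u-1}{ab})=0\iff ab=1-u$, the condition $N_1(a,b)=0$ is equivalent to $\chi(1-\frac{u+1}{ab})\neq 0$ together with $\chi(1+\frac{u-1}{ab})\neq 0$. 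Substituting this into the corollary following Proposition~\ref{n14}, and using $\chi(\frac{a(u+1)}{b})=\chi(\frac{u+1}{ab})$ (as $\chi(a^2)=1$), the joint requirement $N_1(a,b)=\mathrm{N_I}=0$ becomes: either $\chi(1-\frac{u+1}{ab})=-1$, or $\chi(1-\frac{u+1}{ab})=1$ and $\chi(\frac{a(u+1)}{b})=1$; likewise $N_1(a,b)=\mathrm{N_{IV}}=0$ becomes: either $\chi(1+\frac{u-1}{ab})=-1$, or $\chi(1+\frac{u-1}{ab})=1$ and $\chi(\frac{a(u+1)}{b})=1$.

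The final step is to split on $\chi(u^2+a^2b^2-ab)\in\{-1,0,1\}$ as in the corollary characterizing $\mathrm{N_{II}}+\mathrm{N_{III}}=0$, and to intersect with the two binary alternatives just obtained. When $\chi(u^2+a^2b^2-ab)=0$ we have $ab=-1\pm\sqrt{1-u^2}$, so $\chi(\frac{u+1}{ab})=1$ forces $\mathrm{N_I}=\mathrm{N_{IV}}=0$ and $ab\neq 1\pm u$ forces $N_1(a,b)=0$; thus $N(a,b)=0$ collapses to $\mathrm{N_{II}}+\mathrm{N_{III}}=0$, i.e.\ $\chi(a^2b^2-u^2)=-1$, which is condition~(2). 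When $\chi(u^2+a^2b^2-ab)=-1$, the quantity $\mathrm{N_{II}}+\mathrm{N_{III}}$ vanishes automatically, and conjoining the two alternatives for $\mathrm{N_I}$ and $\mathrm{N_{IV}}$ yields exactly the four cases (3)--(6). When $\chi(u^2+a^2b^2-ab)=1$, vanishing of $\mathrm{N_{II}}+\mathrm{N_{III}}$ additionally requires $\chi(-u^2-ab-ab\sqrt{1-u^2})=-1$, and the same conjunction produces the four cases (7)--(10).

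I expect the main difficulty to be organizational rather than computational: one must check carefully that $\chi(1-\frac{u+1}{ab})=0$ and $\chi(1+\frac{u-1}{ab})=0$ are precisely the two ways to have $N_1(a,b)=1$, so that imposing $N(a,b)=0$ genuinely discards the ``$=0$'' alternatives in the corollaries for $\mathrm{N_I}$ and $\mathrm{N_{IV}}$, and that in the degenerate branch $\chi(u^2+a^2b^2-ab)=0$ the quantities $N_1(a,b)$, $\mathrm{N_I}$ and $\mathrm{N_{IV}}$ all vanish of their own accord. Granting these compatibility checks, the ten conditions are mutually disjoint (they are separated by the value of $\chi(u^2+a^2b^2-ab)$, by $b=0$, and within each block by the signs of $\chi(1-\frac{u+1}{ab})$ and $\chi(1+\frac{u-1}{ab})$) and together exhaust the $N(a,b)=0$ row of Table~\ref{tab:vN}, completing the proof.
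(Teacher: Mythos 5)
Your proposal is correct and follows essentially the same route as the paper, which derives Proposition~\ref{prop:om0} from the unique $N(a,b)=0$ row of Table~\ref{tab:vN} by combining Lemma~\ref{n1ab}, the corollaries for $\mathrm{N_I}=\mathrm{N_{IV}}=0$, and the trichotomy on $\chi(u^2+a^2b^2-ab)$ governing $\mathrm{N_{II}}+\mathrm{N_{III}}=0$. The compatibility checks you flag (that $\chi(1-\frac{u+1}{ab})=0$ and $\chi(1+\frac{u-1}{ab})=0$ are exactly the two ways to get $N_1(a,b)=1$, and that $N_1$, $\mathrm{N_I}$, $\mathrm{N_{IV}}$ vanish automatically when $\chi(u^2+a^2b^2-ab)=0$) are precisely the observations the paper records in the discussion preceding Table~\ref{tab:vN}, so nothing is missing.
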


\section{The Differential Spectrum of $f_u$ when $\chi(u+1)\neq \chi(u-1)$}\label{sec:DS}
Recall that $\omega_i=|\{(a,b)\in\gf_{p^n}^*\times\gf_{p^n}|\delta_F(a,b)=i\}|,0\leqslant i\leqslant\Delta_F$, where $\delta_F(a,b)$ denotes the number of solutions to the differential equation $\mathbb{D}_aF=b$. We are ready to investigate the differential spectrum of $f_u$.
As a prerequisite, we define two quadratic character sums, namely $\Gamma_3$ and $\Gamma_4$, as enumerated below.
\begin{align*}
	\Gamma_3 &=\sum_{z\in \gf_{3^n}}\chi(g_1(z)g_4(z))=-\chi(u+1)\sum\limits_{z\in\gf_{3^n}}\chi(z^3-z^2+u^2z). \\
	\Gamma_4 &=\sum_{z\in \gf_{3^n}}\chi(g_1(z)g_2(z)g_3(z)g_4(z))=-\chi(u+1)\sum\limits_{z\in\gf_{3^n}}\chi(z^5-(u^2+1)z^2+(u^2-u^4)z).
\end{align*}
These two character sums will be used in the differential spectrum of $f_u$. The main result of this paper is given as follows.
\begin{theorem}\label{thm:DS}
	Let $n\geq3$ be an odd integer and $f_u(x)=ux^{d_1}+x^{d_2}$ be the Ness-Helleseth function over $\gf_{3^n}$ with $d_1=\frac{3^n-1}{2}-1$ and $d_2=3^n-2$. Then, when $u \in \mathcal{U}_0\setminus \gf_3$, the differential spectrum of $f_u$ is given by
	\begin{align*}
		[
			\omega_0&=(3^n-1)(-1+\varepsilon+\frac{1}{32}(5\cdot3^{n+1}-17-\Gamma_4)),\\
			\omega_1&=(3^n-1)(3-\varepsilon+\frac{1}{16}(3^{n+1}+3+2\Gamma_3+\Gamma_4)),\\
			 \omega_2&=(3^n-1)(-\varepsilon+\frac{1}{4}(3^n-7-\Gamma_3)),\\
			\omega_3&=(3^n-1)(\varepsilon+\frac{1}{16}(3^n+1+2\Gamma_3-\Gamma_4)), \\
			\omega_4&=\frac{(3^n-1)}{32}(3^n+1+\Gamma_4)
		],
	\end{align*}
	where
	\begin{align}\label{epsilon}
	\varepsilon&=
	\left\{
            \begin{array}{ll}
            1, & \chi(u)=\chi(u+1),\chi((u+1)\sqrt{1-u^2}+(u-1)^2)=-1,or, \\
			   & \chi(u)=\chi(u-1),\chi((1-u)\sqrt{1-u^2}+(u+1)^2)=-1; \\
            0, & otherwise. \\
            \end{array}
	\right.
	\end{align}
	\label{thm:ds}
\end{theorem}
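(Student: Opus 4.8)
The plan is to assemble the five values $\omega_0,\ldots,\omega_4$ by counting, for each $i$, the number of pairs $(a,b)\in\gf_{3^n}^*\times\gf_{3^n}^*$ that satisfy one of the disjoint sets of conditions listed in Propositions~\ref{prop:om4}--\ref{prop:om0}. Each such condition is a conjunction of constraints of the form $\chi(\cdot)=\pm1$ or $\chi(\cdot)=0$ on quantities that are rational functions of the single variable $w:=ab$ (together with the fixed $u$). So the first step is to reduce every count to a sum over $w\in\gf_{3^n}^*$: for a fixed $u$ the map $(a,b)\mapsto ab$ is $(3^n-1)$-to-one from $\gf_{3^n}^*\times\gf_{3^n}^*$ onto $\gf_{3^n}^*$, which is exactly where the ubiquitous factor $(3^n-1)$ in the statement comes from. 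Thus I would first rewrite each $\omega_i$ as $(3^n-1)$ times a count of admissible $w$, except for $\omega_4$, whose defining conditions force $\mathrm{N_I}=\mathrm{N_{IV}}=1$ and hence will be handled symmetrically.

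The second and central step is to convert each ``count of $w$ satisfying a conjunction of character conditions'' into a linear combination of the character sums $\Gamma_3,\Gamma_4$ and the elementary sums from Section~\ref{sec:CS}. The standard device is the indicator identity $\tfrac{1}{2}(1+\chi(y))=[\,\chi(y)=1\,]$ and $\tfrac{1}{2}(1-\chi(y))=[\,\chi(y)=-1\,]$ for $y\neq0$, with the zero locus treated separately. Expanding a product of such indicators over the relevant $w$ produces a main term (the count with no character constraint) plus sums of $\chi$ of products of the linear/quadratic factors appearing in the conditions. After clearing denominators, the quantities $1-\tfrac{u+1}{ab}$, $1+\tfrac{u-1}{ab}$, $\tfrac{a(u+1)}{b}$, $u^2+a^2b^2-ab$, and $-u^2-ab-ab\sqrt{1-u^2}$ become, up to squares and constants, precisely the factors $g_1,\ldots,g_5$ evaluated at a shifted/scaled version of $w$; this is the reason the $g_i$ were defined as they were. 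Consequently every cross term collapses to one of the sums $\sum_z\chi(g_{i_1}(z)\cdots g_{i_k}(z))$ already evaluated in Lemmas~2--10, or to $\Gamma_3,\Gamma_4$. I would organize this as a bookkeeping table: one row per condition in each Proposition, expressing its $w$-count as a rational expression in $3^n,\Gamma_3,\Gamma_4$ and the small constants $-1,-2,1,2,\pm\chi(\varphi(u)),\pm\chi(\sqrt{1-u^2}\pm1\pm u)$ coming from those lemmas.

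The third step is to sum the contributions within each $\omega_i$ and simplify. Two global consistency checks guide and verify the algebra: the identities in \eqref{eqt:omega}, namely $\sum_i\omega_i=(3^n-1)3^n$ and $\sum_i i\,\omega_i=(3^n-1)3^n$, must hold identically in $\Gamma_3,\Gamma_4$, which pins down the coefficients and catches sign errors. I expect that after summation most of the character-sum constants cancel in pairs (this is what Lemmas~6--10, stated as sums of two character sums equal to a constant, are engineered to achieve), leaving only $\Gamma_3$ and $\Gamma_4$ as genuinely surviving unknowns, plus the discrete correction $\varepsilon$.

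The role of $\varepsilon$ is the step I expect to be the main obstacle. It arises from the boundary cases where one of the constrained quantities vanishes, i.e. where $\chi(\cdot)=0$ rather than $\pm1$; these correspond to the special values $w=ab\in\{1\pm u,\,-1\pm\sqrt{1-u^2}\}$, precisely the zero set $A$ of the $g_i$ tabulated in Table~\ref{tab:vA}. At these finitely many values of $w$ the indicator-expansion breaks down (one cannot write $[\chi(y)=1]=\tfrac12(1+\chi(y))$ when $y=0$), so each must be inspected by hand to decide into which $\omega_i$ the corresponding pair falls. The two clauses defining $\varepsilon$ in \eqref{epsilon} are exactly the outcomes of evaluating $\chi(g_2),\chi(g_3),\chi(g_4),\chi(g_5)$ at $w=1+u$ and $w=1-u$ using Table~\ref{tab:vA}, and the delicate point is that whether $\mathrm{N_{II}}+\mathrm{N_{III}}$ equals $1$ or $2$ at these degenerate $w$ depends on the sign $\chi\big((u+1)\sqrt{1-u^2}+(u-1)^2\big)$ (resp. its $1-u$ analogue). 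I would therefore isolate these finitely many $w$ first, compute their exact contribution to each $\omega_i$ directly from Lemma~\ref{n1ab} and Table~\ref{tab:FCases}, record the result as the $\pm\varepsilon$ corrections seen in the statement, and only then apply the indicator expansion to the remaining generic $w$. Reconciling these boundary contributions with the generic count — ensuring no pair is double-counted or dropped as $w$ ranges over $A$ — is the part requiring the most care.
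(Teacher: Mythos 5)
Your proposal follows essentially the same route as the paper's proof: reduce each $\omega_i$ to $(3^n-1)$ times a count over $z=ab$, rewrite the conditions of Propositions~\ref{prop:om4}--\ref{prop:om0} as sign conditions on $\chi(g_1(z)),\ldots,\chi(g_5(z))$, expand the indicator products, evaluate the resulting sums via the lemmas of Section~\ref{sec:CS} so that only $\Gamma_3$ and $\Gamma_4$ survive, and treat the exceptional set $A$ (in particular $z=1\pm u$, which produces $\varepsilon$) by direct inspection using Table~\ref{tab:vA}. The consistency checks from \eqref{eqt:omega} that you propose are exactly what the paper notes in its remark for recovering $\omega_0$ and $\omega_1$ from $\omega_2,\omega_3,\omega_4$.
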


\begin{proof}
The proof of Theorem \ref{thm:ds} will be divided into five parts, where in each part $\omega_i$ (for $i\in\{0,1,2,3,4\}$) will be calculated.
\begin{enumerate}
\item\textit{Proof of $\omega_4$.} The sufficient and necessary condition for (\ref{DE}) to have 4 solutions was shown in Proposition \ref{prop:om4}. Let $ab=z$. For each $z\in\gf_{3^n}^*$, there are $3^n-1$ pairs of $(a,b)$ such that $ab=z$. Further we have, $$\omega_4=(3^n-1)n_4,$$ where $n_4$ denotes the number of $z$ satisfying the following system.
	$$\left\{
             \begin{array}{ll}
             \chi(g_1(z))=1, \\
             \chi(g_2(z))=1, \\
			 \chi(g_3(z))=1, \\
			 \chi(g_4(z))=1, \\
             \chi(g_5(z))=1, \\
             \end{array}
	\right.
	$$ where $g_i(i=1,2,3,4,5)$ are defined previously. Then by character sum $n_4$ can be expressed as
	\[n_4=\frac{1}{32}\sum_{z\in \gf_{3^n}\backslash\;A}(1+\chi(g_1(z)))(1+\chi(g_2(z)))(1+\chi(g_3(z)))(1+\chi(g_4(z)))(1+\chi(g_5(z))).\]
	By Table \ref{tab:vA},
	 \[\sum_{z\in A}(1+\chi(g_1(z)))(1+\chi(g_2(z)))(1+\chi(g_3(z)))(1+\chi(g_4(z)))(1+\chi(g_5(z)))=0.\]
	By the lemmas in Section II, it follows that
	\begin{align*}
	n_4&=\frac{1}{32}\sum_{z\in \gf_{3^n}}(1+\chi(g_1(z)))(1+\chi(g_2(z)))(1+\chi(g_3(z)))(1+\chi(g_4(z)))(1+\chi(g_5(z))) \\
	&=\frac{1}{32}(3^n+1+\sum_{z\in \gf_{3^n}}(g_1(z)g_2(z)g_3(z)g_4(z))-\chi(u+1)\chi(\varphi(u)-1)-\chi(\varphi(u))\chi(\varphi(u)-1)) \\
	&=\frac{1}{32}(3^n+1+\Gamma_4).
	\end{align*}
	The last identity holds since $\chi(u+1)\chi(\varphi(u))=-1$ and $\chi(u+1)+\chi(\varphi(u))=0$. Then the value of $\omega_4$ follows.
\item\textit{Proof of $\omega_3$.}
The sufficient and necessary condition for (\ref{DE}) to have 3 solutions was shown in Proposition \ref{prop:om3}. Let $ab=z$. For each $z\in\gf_{3^n}^*$, there are $3^n-1$ pairs of $(a,b)$ such that $ab=z$. Further we have $$\omega_3=(3^n-1)(n_{3,1}+n_{3,2}+n_{3,3}),$$ where the definitions of $n_{3,1}$ , $n_{3,2}$ and $n_{3,3}$ will be detailed below. \\
Let $n_{3,1}$ denote the number of $z$ satisfying
	$$\left\{
             \begin{array}{ll}
             z=1\pm u, \\
			 \chi(g_4(z))=1, \\
             \chi(g_5(z))=1. \\
             \end{array}
	\right.
	$$
Then we get
	$$n_{3,1}=
	\left\{
            \begin{array}{ll}
            1, & \chi(u)=\chi(u+1),\chi((u+1)\sqrt{1-u^2}+(u-1)^2)=-1,or, \\
			   & \chi(u)=\chi(u-1),\chi((1-u)\sqrt{1-u^2}+(u+1)^2)=-1; \\
            0, & otherwise. \\
            \end{array}
	\right.
	$$
Let $n_{3,2}$, $n_{3,3}$ denote the number of $z$ satisfying the following two equation systems respectively:
	$$\left\{
             \begin{array}{ll}
             \chi(g_1(z))=1, \\
             \chi(g_2(z))=1, \\
			 \chi(g_3(z))=-1, \\
			 \chi(g_4(z))=1, \\
             \chi(g_5(z))=1, \\
             \end{array}
	\right. \qquad
	\left\{
             \begin{array}{ll}
             \chi(g_1(z))=1, \\
             \chi(g_2(z))=-1, \\
			 \chi(g_3(z))=1, \\
			 \chi(g_4(z))=1, \\
             \chi(g_5(z))=1, \\
             \end{array}
	\right.
	$$ where $g_i(i=1,2,3,4,5)$ are defined previously. Then by character sum, $n_{3,1}$ can be expressed as
	\[32n_{3,2}=\sum_{z\in \gf_{3^n}\backslash\;A}(1+\chi(g_1(z)))(1+\chi(g_2(z)))(1-\chi(g_3(z)))(1+\chi(g_4(z)))(1+\chi(g_5(z))).\]
	By Table \ref{tab:vA},  \[\sum_{z\in A}(1+\chi(g_1(z)))(1+\chi(g_2(z)))(1-\chi(g_3(z)))(1+\chi(g_4(z)))(1+\chi(g_5(z)))=0.\]
	It follows that
	\[32n_{3,2}=\sum_{z\in \gf_{3^n}}(1+\chi(g_1(z)))(1+\chi(g_2(z)))(1-\chi(g_3(z)))(1+\chi(g_4(z)))(1+\chi(g_5(z))).\]
	Similarly, it can be concluded that
	\[32n_{3,3}=\sum_{z\in \gf_{3^n}}(1+\chi(g_1(z)))(1-\chi(g_2(z)))(1+\chi(g_3(z)))(1+\chi(g_4(z)))(1+\chi(g_5(z))).\]
	By utilizing the lemmas presented in Section II, the following sum can be derived
	\begin{align*}
	&n_{3,1}+n_{3,2}+n_{3,3} \\
	=&\varepsilon+\frac{1}{16}[3^n+1+2\sum_{z\in \gf_{3^n}}\chi(g_1(z)g_4(z))-\sum_{z\in \gf_{3^n}}\chi(g_1(z)g_2(z)g_3(z)g_4(z))\\
	&-\chi(\varphi(u))\chi(\varphi(u)-1)-\chi(u+1)\chi(\varphi(u)-1)] \\
	 =&\varepsilon+\frac{1}{16}[3^n+1+2\Gamma_3-\Gamma_4-\chi(\varphi(u))\chi(\varphi(u)-1)-\chi(u+1)\chi(\varphi(u)-1)] \\
	=&\varepsilon+\frac{1}{16}(3^n+1+2\Gamma_3-\Gamma_4),
	\end{align*}
	where $\epsilon$ was defined in (\ref{epsilon}).
	
\item\textit{Proof of $\omega_2$.}
	The sufficient and necessary condition for (\ref{DE}) to have 2 solutions was shown in Proposition \ref{prop:om2}.  Let $ab=z$. For each $z\in\gf_{3^n}^*$, there are $3^n-1$ pairs of $(a,b)$ such that $ab=z$. Further we have $$\omega_2=(3^n-1)(n_{2,1}+n_{2,2}+n_{2,3}+n_{2,4}+n_{2,5}+n_{2,6}),$$ where $n_{2,1},n_{2,2},n_{2,3},n_{2,4},n_{2,5},n_{2,6}$ denote the number of $z$ satisfying the following six equation systems respectively:
	$$\left\{
             \begin{array}{ll}
             \chi(g_1(z))=1, \\
             \chi(g_2(z))=1, \\
			 \chi(g_3(z))=1, \\
			 \chi(g_4(z))=-1, \\
             \end{array}
	\right. \qquad
	\left\{
             \begin{array}{ll}
             \chi(g_2(z))=-1, \\
			 \chi(g_3(z))=-1, \\
			 \chi(g_4(z))=1, \\
			 \chi(g_5(z))=1, \\
             \end{array}
	\right. \qquad
	\left\{
             \begin{array}{ll}
			 \chi(g_1(z))=1, \\
             \chi(g_2(z))=1, \\
			 \chi(g_3(z))=1, \\
			 \chi(g_4(z))=1, \\
			 \chi(g_5(z))=-1, \\
             \end{array}
	\right.$$$$
	\left\{
             \begin{array}{ll}
			 \chi(g_1(z))=-1, \\
             \chi(g_2(z))=-1, \\
			 \chi(g_3(z))=1, \\
			 \chi(g_4(z))=1, \\
			 \chi(g_5(z))=1, \\
             \end{array}
	\right. \qquad
	\left\{
             \begin{array}{ll}
			 \chi(g_1(z))=-1, \\
             \chi(g_2(z))=1, \\
			 \chi(g_3(z))=-1, \\
			 \chi(g_4(z))=1, \\
			 \chi(g_5(z))=1, \\
             \end{array}
	\right. \qquad
	\left\{
             \begin{array}{ll}
			 \chi(g_1(z))=-1, \\
             \chi(g_2(z))=1, \\
			 \chi(g_3(z))=1, \\
			 \chi(g_4(z))=1, \\
			 \chi(g_5(z))=1, \\
             \end{array}
	\right.
	$$ where $g_i(i=1,2,3,4,5)$ are defined previously.
	Then by character sum, $n_{21}$ can be expressed as
	\[16n_{2,1}=\sum_{z\in \gf_{3^n}\backslash\;A}(1+\chi(g_1(z)))(1+\chi(g_2(z)))(1+\chi(g_3(z)))(1-\chi(g_4(z))).\]
	By Table \ref{tab:vA}, \[\sum_{z\in A}(1+\chi(g_1(z)))(1+\chi(g_2(z)))(1+\chi(g_3(z)))(1-\chi(g_4(z)))=0.\]
	It follows that
	\[16n_{2,1}=\sum_{z\in \gf_{3^n}}(1+\chi(g_1(z)))(1+\chi(g_2(z)))(1+\chi(g_3(z)))(1-\chi(g_4(z))).\]
	Similarly, it can be concluded that
	\begin{align*}
	16n_{2,2}&=\sum_{z\in \gf_{3^n}\backslash\,\{1\pm u,-1\pm \sqrt{1-u^2}\}}(1-\chi(g_2(z)))(1-\chi(g_3(z)))(1+\chi(g_4(z)))(1+\chi(g_5(z))).\\
	32n_{2,3}&=\sum_{z\in \gf_{3^n}}(1+\chi(g_1(z)))(1+\chi(g_2(z)))(1+\chi(g_3(z)))(1+\chi(g_4(z)))(1-\chi(g_5(z)))-4.\\
	32n_{2,4}&=\sum_{z\in \gf_{3^n}\backslash\,A}(1-\chi(g_1(z)))(1-\chi(g_2(z)))(1+\chi(g_3(z)))(1+\chi(g_4(z)))(1+\chi(g_5(z))).\\
	32n_{2,5}&=\sum_{z\in \gf_{3^n}\backslash\,A}(1-\chi(g_1(z)))(1+\chi(g_2(z)))(1-\chi(g_3(z)))(1+\chi(g_4(z)))(1+\chi(g_5(z))).\\
	32n_{2,6}&=\sum_{z\in \gf_{3^n}\backslash\,A}(1-\chi(g_1(z)))(1+\chi(g_2(z)))(1+\chi(g_3(z)))(1+\chi(g_4(z)))(1+\chi(g_5(z))).
	\end{align*}
	By utilizing the lemmas presented in Section II, the following sum can be derived
	\begin{align*}
	&n_{2,1}+n_{2,2}+n_{2,3}+n_{2,4}+n_{2,5}+n_{2,6} \\
	=&\frac{1}{8}[2\cdot3^n-14-2\sum_{z\in \gf_{3^n}}\chi(g_1(z)g_4(z))-\chi(\varphi(u))\chi(\varphi(u)-1)+\chi(u+1)\chi(\varphi(u)-1) \\
	&-2(1+\chi(u-u^2))(1-\chi((u+1)(\sqrt{1-u^2})+(u-1)^2))-2\chi(u^2-1-\sqrt{1-u^2}) \\
	&-2(1-\chi(u^2+u))(1-\chi((1-u)(\sqrt{1-u^2})+(u+1)^2))] \\
	 =&\frac{1}{8}[2\cdot3^n-14-2\Gamma_3-\chi(\varphi(u))\chi(\varphi(u)-1)+\chi(u+1)\chi(\varphi(u)-1) \\
	&-2(1+\chi(u-u^2))(1-\chi((u+1)(\sqrt{1-u^2})+(u-1)^2))-2\chi(u^2-1-\sqrt{1-u^2}) \\
	&-2(1-\chi(u^2+u))(1-\chi((1-u)(\sqrt{1-u^2})+(u+1)^2))] \\
	 =&\frac{1}{4}(3^n-7-\Gamma_3+\chi(u+1)-(1-\chi(u^2+u))(1-\chi((1-u)(\sqrt{1-u^2})+(u+1)^2))\\
	&-(1+\chi(u-u^2))(1-\chi((u+1)(\sqrt{1-u^2})+(u-1)^2))-\chi(u^2-1-\sqrt{1-u^2}))\\
	=&\frac{1}{4}(3^n-7-4\varepsilon-\Gamma_3+\chi(u+1)-\chi(u^2-1-\sqrt{1-u^2}))\\
	=&-\varepsilon+\frac{1}{4}(3^n-7-\Gamma_3),
	\end{align*}
	where $\varepsilon$ has been defined in (\ref{epsilon}). The last identity holds since\\ $\chi(u^2-1-\sqrt{1-u^2})=\chi(\sqrt{1-u^2}(-\sqrt{1-u^2}-1))=-\chi(\varphi(u))$ and $\chi(u+1)\chi(\varphi(u))=-1$. 
\item\textit{Proof of $\omega_1$.}
	The sufficient and necessary condition for (\ref{DE}) to have 1 solution was shown in Proposition \ref{prop:om1}. Let $ab=z$. For each $z\in\gf_{3^n}^*$, there are $3^n-1$ pairs of $(a,b)$ such that $ab=z$. Further we have $$\omega_1=(3^n-1)(n_{1,1}+n_{1,2}+n_{1,3}+n_{1,4}+n_{1,5}+n_{1,6}+n_{1,7}),$$ where the definitions of $n_{1,1}$, $n_{1,2}$, $n_{1,3}$, $n_{1,4}$, $n_{1,5}$, $n_{1,6}$ and $n_{1,7}$ will be detailed below. \\
	Let $n_{1,1}$, $n_{1,2}$, $n_{1,3}$ denote the number of $z$ satisfying the following two equation systems respectively:
	$$\left\{
            \begin{array}{ll}
            z=1\pm u, \\
			\chi(g_4(z))=-1, \\
            \end{array}
	\right. \qquad
	\left\{
            \begin{array}{ll}
			z=1\pm u, \\
			\chi(g_4(z))=1, \\
			\chi(g_5(z))=-1, \\
            \end{array}
	\right. \qquad
	\left\{
            \begin{array}{ll}
			\chi(g_4(z))=0, \\
			\chi(g_5(z))=1. \\
            \end{array}
	\right.
	$$
Then we get
\begin{align*}
	n_{1,1}&=
	\left\{
            \begin{array}{ll}
            1, &\chi(u)=\chi(u-1)~or~\chi(u)=\chi(u+1); \\
			0, & otherwise. \\
            \end{array}
	\right.\\
	n_{1,2}&=
	\left\{
            \begin{array}{ll}
			1, & \chi(u)=\chi(u+1),\chi((u+1)\sqrt{1-u^2}+(u-1)^2)=1,or \\
			   & \chi(u)=\chi(u-1),\chi((1-u)\sqrt{1-u^2}+(u+1)^2)=1; \\
			0, & otherwise. \\
            \end{array}
	\right.\\
	n_{1,3}&=
	\left\{
            \begin{array}{ll}
			1, & \chi(u^2-1+\sqrt{1-u^2})=1~or~\chi(u^2-1-\sqrt{1-u^2})=1; \\
			0, & otherwise. \\
            \end{array}
	\right.
	\end{align*}
	Note that either of $\chi(u)=\chi(u-1)$ or $\chi(u)=\chi(u+1)$ must hold since $\chi(u-1)\neq\chi(u+1)$. Similarly, $\chi(u^2-1+\sqrt{1-u^2})\neq\chi(u^2-1-\sqrt{1-u^2})$ since $(u^2-1+\sqrt{1-u^2})(u^2-1-\sqrt{1-u^2})=u^2(u^2-1)$, which is a nonsquare. It follows that either of $\chi(u^2-1+\sqrt{1-u^2})=1$ or $\chi(u^2-1-\sqrt{1-u^2})=1$ must hold since $\chi(u^2-1+\sqrt{1-u^2})\neq(u^2-1-\sqrt{1-u^2})$ and neither of them could be $0$. Then we can conclude that $n_{1,1}=1$ and $n_{1,3}=1$. \\
	Let $n_{1,4}$, $n_{1,5}$, $n_{1,6}$, $n_{1,7}$ denote the number of $z$ satisfying the following four equation systems respectively:
	$$\left\{
             \begin{array}{ll}
             \chi(g_1(z))=1, \\
             \chi(g_2(z))=1, \\
			 \chi(g_3(z))=-1, \\
			 \chi(g_4(z))=-1, \\
             \end{array}
	\right. \qquad
	\left\{
             \begin{array}{ll}
             \chi(g_1(z))=1, \\
             \chi(g_2(z))=-1, \\
			 \chi(g_3(z))=1, \\
			 \chi(g_4(z))=-1, \\
             \end{array}
	\right. \qquad
	\left\{
             \begin{array}{ll}
             \chi(g_1(z))=1, \\
             \chi(g_2(z))=1, \\
			 \chi(g_3(z))=-1, \\
			 \chi(g_4(z))=1, \\
             \chi(g_5(z))=-1, \\
             \end{array}
	\right.  \qquad
	\left\{
             \begin{array}{ll}
             \chi(g_1(z))=1, \\
             \chi(g_2(z))=-1, \\
			 \chi(g_3(z))=1, \\
			 \chi(g_4(z))=1, \\
             \chi(g_5(z))=-1, \\
             \end{array}
	\right. $$ where $g_i(i=1,2,3,4,5)$ are defined previously.
	Then by character sum, $n_{1,3}$ can be expressed as
	\[16n_{1,4}=\sum_{z\in \gf_{3^n}\backslash\;A}(1+\chi(g_1(z)))(1+\chi(g_2(z)))(1-\chi(g_3(z)))(1-\chi(g_4(z))).\]
	By Table \ref{tab:vA}, \[\sum_{z\in A}(1+\chi(g_1(z)))(1+\chi(g_2(z)))(1-\chi(g_3(z)))(1-\chi(g_4(z)))=0.\]
	It follows that
	\[16n_{1,4}=\sum_{z\in \gf_{3^n}}(1+\chi(g_1(z)))(1+\chi(g_2(z)))(1-\chi(g_3(z)))(1-\chi(g_4(z))).\]
	Similarly, it can be concluded that
	\begin{align*}
	16n_{1,5}&=\sum_{z\in \gf_{3^n}}(1+\chi(g_1(z)))(1-\chi(g_2(z)))(1+\chi(g_3(z)))(1-\chi(g_4(z))).\\
	 32n_{1,6}&=\sum\limits_{z\in\gf_{3^n}}(1+\chi(g_1(z)))(1+\chi(g_2(z)))(1-\chi(g_3(z)))(1+\chi(g_4(z)))(1-\chi(g_5(z)))-4.\\
	 32n_{1,7}&=\sum\limits_{z\in\gf_{3^n}}(1+\chi(g_1(z)))(1-\chi(g_2(z)))(1+\chi(g_3(z)))(1+\chi(g_4(z)))(1-\chi(g_5(z)))-4.
	\end{align*}
	By utilizing the lemmas presented in Section II, the following sum can be derived
	\begin{align*}
	&n_{1,1}+n_{1,2}+n_{1,3}+n_{1,4}+n_{1,5}+n_{1,6}+n_{1,7} \\
	 =&2+(1-\varepsilon)+\frac{1}{16}[3\cdot3^n+3+2\sum\limits_{z\in\gf_{3^n}}\chi(g_1(z)g_4(z))+\sum\limits_{z\in\gf_{3^n}}\chi(g_1(z)g_2(z)g_3(z)g_4(z))\\
	 & +\chi(\varphi(u))\chi(\varphi(u)-1)+\chi(u+1)\chi(\varphi(u)-1)]\\
	 =&3-\varepsilon+\frac{1}{16}[3\cdot3^n+3+2\Gamma_3+\Gamma_4+\chi(\varphi(u))\chi(\varphi(u)-1)+\chi(u+1)\chi(\varphi(u)-1)] \\
	=&3-\varepsilon+\frac{1}{16}(3^{n+1}+3+2\Gamma_3+\Gamma_4),
	\end{align*}
	where $\varepsilon$ has been defined in (\ref{epsilon}).
\item\textit{Proof of $\omega_0$.}
	The sufficient and necessary condition for (\ref{DE}) to have no solution was shown in Proposition \ref{prop:om0}. Let $ab=z$. For each $z\in\gf_{3^n}^*$, there are $3^n-1$ pairs of $(a,b)$ such that $ab=z$. Further we have $$\omega_0=(3^n-1)(n_{0,1}+n_{0,2}+n_{0,3}+n_{0,4}+n_{0,5}+n_{0,6}+n_{0,7}+n_{0,8}+n_{0,9}+n_{0,10}),$$ where $n_{0,1}=1$ for the condition $z=0$ and the definitions of $n_{0,2}$, $n_{0,3}$, $n_{0,4}$, $n_{0,5}$, $n_{0,6}$, $n_{0,7}$, $n_{0,8}$, $n_{0,9}$ and $n_{0,10}$ will be detailed below. \\
	Let $n_{0,2}$ denote the number of $z$ satisfying
	$$\left\{
             \begin{array}{ll}
			 \chi(g_4(z))=0, \\
			 \chi(z^2-u^2)=-1, \\
             \end{array}
	\right.$$
	then
	$$n_{0,2}=
	\left\{
             \begin{array}{ll}
			 1, & \chi(u^2-1+\sqrt{1-u^2})=-1~or~\chi(u^2-1-\sqrt{1-u^2})=-1; \\
			 0, & otherwise. \\
             \end{array}
	\right.$$
	Note that $\chi(u^2-1+\sqrt{1-u^2})\neq\chi(u^2-1-\sqrt{1-u^2})$ since $(u^2-1+\sqrt{1-u^2})(u^2-1-\sqrt{1-u^2})=u^2(u^2-1)$, which is a nonsquare. It follows that either of $\chi(u^2-1+\sqrt{1-u^2})=-1$ or $\chi(u^2-1-\sqrt{1-u^2})=-1$ must hold. Then we can conclude that $n_{0,2}=1$. \\
	Let $n_{0,3}$, $n_{0,4}$, $n_{0,5}$, $n_{0,6}$, $n_{0,7}$, $n_{0,8}$, $n_{0,9}$, $n_{0,10}$ denote the number of $z$ satisfying the following eight equation systems respectively:
	$$\left\{
             \begin{array}{ll}
             \chi(g_2(z))=-1, \\
			 \chi(g_3(z))=-1, \\
			 \chi(g_4(z))=-1, \\
             \end{array}
	\right. \qquad
	\left\{
             \begin{array}{ll}
			 \chi(g_1(z))=-1, \\
             \chi(g_2(z))=-1, \\
			 \chi(g_3(z))=1, \\
			 \chi(g_4(z))=-1, \\
             \end{array}
	\right. \qquad
	\left\{
             \begin{array}{ll}
			 \chi(g_1(z))=-1, \\
             \chi(g_2(z))=1, \\
			 \chi(g_3(z))=-1, \\
			 \chi(g_4(z))=-1, \\
             \end{array}
	\right. \qquad
	\left\{
             \begin{array}{ll}
			 \chi(g_1(z))=-1, \\
             \chi(g_2(z))=1, \\
			 \chi(g_3(z))=1, \\
			 \chi(g_4(z))=-1, \\
             \end{array}
	\right.
	$$$$
	\left\{
             \begin{array}{ll}
             \chi(g_2(z))=-1, \\
			 \chi(g_3(z))=-1, \\
			 \chi(g_4(z))=1, \\
			 \chi(g_5(z))=-1, \\
             \end{array}
	\right. \qquad
	\left\{
             \begin{array}{ll}
			 \chi(g_1(z))=-1, \\
             \chi(g_2(z))=-1, \\
			 \chi(g_3(z))=1, \\
			 \chi(g_4(z))=1, \\
			 \chi(g_5(z))=-1, \\
             \end{array}
	\right. \qquad
	\left\{
             \begin{array}{ll}
			 \chi(g_1(z))=-1, \\
             \chi(g_2(z))=1, \\
			 \chi(g_3(z))=-1, \\
			 \chi(g_4(z))=1, \\
			 \chi(g_5(z))=-1, \\
             \end{array}
	\right. \qquad
	\left\{
             \begin{array}{ll}
			 \chi(g_1(z))=-1, \\
             \chi(g_2(z))=1, \\
			 \chi(g_3(z))=1, \\
			 \chi(g_4(z))=1, \\
			 \chi(g_5(z))=-1, \\
             \end{array}
	\right.
	$$ where $g_i(i=1,2,3,4,5)$ are defined previously.
	Then by character sum, $n_{0,3}$ can be expressed as
	\[8n_{0,3}=\sum_{z\in \gf_{3^n}\backslash\;A}(1-\chi(g_2(z)))(1-\chi(g_3))(1-\chi(g_4(z))).\]
	Similarly, it can be concluded that
	\begin{align*}
	16n_{0,4}&=\sum_{z\in \gf_{3^n}\backslash\,A}(1-\chi(g_1(z)))(1-\chi(g_2(z)))(1+\chi(g_3(z)))(1-\chi(g_4(z))).\\
	16n_{0,5}&=\sum_{z\in \gf_{3^n}\backslash\,A}(1-\chi(g_1(z)))(1+\chi(g_2(z)))(1-\chi(g_3(z)))(1-\chi(g_4(z))).\\
	16n_{0,6}&=\sum_{z\in \gf_{3^n}\backslash\,A}(1-\chi(g_1(z)))(1+\chi(g_2(z)))(1+\chi(g_3(z)))(1-\chi(g_4(z))).\\
	16n_{0,7}&=\sum_{z\in \gf_{3^n}\backslash\,A}(1-\chi(g_2(z)))(1-\chi(g_3(z)))(1+\chi(g_4(z)))(1-\chi(g_5(z))).\\
	32n_{0,8}&=\sum_{z\in \gf_{3^n}\backslash\,A}(1-\chi(g_1(z)))(1-\chi(g_2(z)))(1+\chi(g_3(z)))(1+\chi(g_4(z)))(1-\chi(g_5(z))).\\
	32n_{0,9}&=\sum_{z\in \gf_{3^n}\backslash\,A}(1-\chi(g_1(z)))(1+\chi(g_2(z)))(1-\chi(g_3(z)))(1+\chi(g_4(z)))(1-\chi(g_5(z))).\\
	32n_{0,10}&=\sum_{z\in \gf_{3^n}\backslash\,A}(1-\chi(g_1(z)))(1+\chi(g_2(z)))(1+\chi(g_3(z)))(1+\chi(g_4(z)))(1-\chi(g_5(z))).
	\end{align*}
	By utilizing Table \ref{tab:vA} and the lemmas  presented in Section II, the following sum can be derived
	\begin{align*}
	&n_{0,1}+n_{0,2}+n_{0,3}+n_{0,4}+n_{0,5}+n_{0,6}+n_{0,7}+n_{0,8}+n_{0,9}+n_{0,10}\\
	=&2+\frac{1}{32}[15\cdot3^n-81-\sum_{z\in \gf_{3^n}}\chi(g_1(z)g_2(z)g_3(z)g_4(z))+5\chi(\varphi(u))\chi(\varphi(u)-1)-3\chi(u+1)\chi(\varphi-1)\\
	&+8\chi(u^2-1-\sqrt{1-u^2})-8(1+\chi(u-u^2))(1+\chi((u+1)(\sqrt{1-u^2})+(u-1)^2)) \\
	&-8(1-\chi(u^2+u))(1+\chi((1-u)(\sqrt{1-u^2})+(u+1)^2))] \\
	 =&2+\frac{1}{32}[15\cdot3^n-81-\Gamma_4+5\chi(\varphi(u))\chi(\varphi(u)-1)-3\chi(u+1)\chi(\varphi-1)\\
	&+8\chi(u^2-1-\sqrt{1-u^2})-8(1+\chi(u-u^2))(1+\chi((u+1)(\sqrt{1-u^2})+(u-1)^2)) \\
	&-8(1-\chi(u^2+u))(1+\chi((1-u)(\sqrt{1-u^2})+(u+1)^2))] \\
	 =&(2+\frac{1}{32}(15\cdot3^n-81-\Gamma_4-8\chi(u+1)-8(1-\chi(u^2+u))(1+\chi((1-u)(\sqrt{1-u^2})+(u+1)^2))\\
	&+8\chi(u^2-1-\sqrt{1-u^2})-8(1+\chi(u-u^2))(1+\chi((u+1)(\sqrt{1-u^2})+(u-1)^2))))\\
	 =&(2+\frac{1}{32}(5\cdot3^{n+1}-81-32(1-\varepsilon)-\Gamma_4-8\chi(u+1)+8\chi(u^2-1-\sqrt{1-u^2}))) \\
	 =&-1+\varepsilon+\frac{1}{32}(5\cdot3^{n+1}-17-\Gamma_4),
	\end{align*}
	where $\varepsilon$ has beem defined in (\ref{epsilon}).
\end{enumerate}
This completes the proof of Theorem \ref{thm:DS}.
\end{proof}

\begin{remark}
	Recall that the elements $\omega_i (i=0,1,2,3,4)$ satisfy two identities  in (\ref{eqt:omega}). Namely,
		\begin{align*}
		\left\{
		\begin{array}{ccc}
			\omega_0+\omega_1+\omega_2+\omega_3+\omega_4&=&(3^n-1)3^n, \\
			\omega_1+2\omega_2+3\omega_3+4\omega_4&=&(3^n-1)3^n.
		\end{array}
		\right. 
	\end{align*}
	After the values of $\omega_4$, $\omega_3$ and $\omega_2$ are determined, $\omega_1$ and $\omega_0$ can be deduced by solving the above system.
\end{remark}

\begin{example} Let $p=3$, $n=3$ and $u=w^4$, where $w$ is a primitive element in $\gf_{3^n}^*$. Then $u\in\mathcal{U}_0\backslash\gf_3$, $\varepsilon=0$, $\Gamma_3=-4$ and $\Gamma_4=4$. 
	By Theorem \ref{thm:DS}, the differential spectrum of $f_u$ is 
		$$\mathbb{S}=[\omega_0=286, \omega_1=208, \omega_2=156, \omega_3=26, \omega_4=26],$$
	which coincides with the result calculated directly by MAGMA.
\end{example}

\begin{example} Let $p=3$, $n=5$ and $u=w^{210}$, where $w$ is a primitive element in $\gf_{3^n}^*$. Then $u\in\mathcal{U}_0\backslash\gf_3$, $\varepsilon=1$,$\Gamma_3=-4$ and $\Gamma_4=12$.
	By Theorem \ref{thm:DS}, the differential spectrum of $f_u$ is 
   $$\mathbb{S}=[\omega_0=27346, \omega_1=11616, \omega_2=14278, \omega_3=3630, \omega_4=1936].$$
   which coincides with the result calculated directly by MAGMA.
\end{example}

\begin{example} Let $p=3$, $n=7$ and $u=w$, where $w$ is a primitive element in $\gf_{3^n}^*$. Then $u\in\mathcal{U}_0\backslash\gf_3$, $\varepsilon=1$,$\Gamma_3=-28$ and $\Gamma_4=-12$.
	By Theorem \ref{thm:DS}, the differential spectrum of $f_u$ is 
   $$\mathbb{S}=[\omega_0=2240650, \omega_1=891888, \omega_2=1204486, \omega_3=295110, \omega_4=148648].$$
   which coincides with the result calculated directly by MAGMA.
\end{example}

\section{Concluding remarks}\label{sec:con}
In this paper, we conducted an in-depth investigation of the differential properties of the Ness-Helleseth function. For $u\in\mathcal{U}_0\setminus\gf_3$, we expressed the differential spectrum in terms of quadratic character sums. This completed the work on the differential properties of the Ness-Helleseth function. Besides, we obtained a series of identities of character sums, which may be used in some other areas. It may be interesting to consider applications of the differential spectrum of the Ness-Helleseth function in other areas such as sequence design, coding theory and combinatorial design. Moreover, the study of the Ness-Helleseth function can be extended to $p>3$ \cite{Lyu2024AFS}, \cite{xia2024investigation}, and the investigation of the differential spectrum of such function will be undertaken in our further work.

\bibliographystyle{IEEEtranS}

\bibliography{RXY-1}

\end{document}